\newtheorem{theorem}{Theorem}[section]
\newtheorem{lemma}[theorem]{Lemma}
\newtheorem{corollary}[theorem]{Corollary}
\newtheorem{fact}{Fact}[section]
\newtheorem{definition}[theorem]{Definition}
\newtheorem{problem}[theorem]{Problem}
\newtheorem{remark}[theorem]{Remark}
\newtheorem{example}{Example}
\definecolor{darkgreen}{rgb}{0.0, 0.39, 0.0}
\definecolor{crimsonglory}{rgb}{0,0,0} 
\pgfplotsset{compat=1.18}
\newmdenv[
backgroundcolor=gray!5,
linecolor=black,
linewidth=0.7pt,
roundcorner=6pt,
innertopmargin=12pt,
innerbottommargin=12pt,
innerleftmargin=10pt,
innerrightmargin=10pt,
skipabove=12pt,
skipbelow=12pt
]{fancybox}
\newcommand{\BigONotation}{\mathcal{O}}
\newcommand{\LittleONotation}{o}
\newcommand{\pattern}{\textsf{P}}
\newcommand{\textstr}{\textsf{T}}
\newcommand{\lengthP}{\ensuremath{m}}
\newcommand{\lengthT}{\ensuremath{n}}
\newcommand{\wildcard}{\texttt{?}}
\newcommand{\alphabetSize}{\ensuremath{\sigma}}
\newcommand{\seth}{\textsf{SETH}}
\newcommand{\eps}{\ensuremath{\epsilon}}
\newcommand{\alphabet}{\ensuremath{\Sigma}}
\newcommand{\occ}{\Gamma} 
\newcommand{\concat}{\cdot}
\newcommand{\nonWild}{\omega}
\newcommand{\OV}{\text{Orthogonal Vectors }}
\newcommand{\OurProb}{Dynamic Pattern Matching with Wildcards}
\newcommand{\RangeProblem}{Range-Pair Problem}
\newcommand{\RangeAlphabet}{\Lambda}          
\newcommand{\RangeAlphabetSize}{\lambda}      
\newcommand{\RangeText}{\mathcal{T}}         
\newcommand{\RangeLen}{\eta}                    
\newcommand{\AppliedText}{\overline{\RangeText}} 
\newcommand{\BlockSize}{\beta}                   
\newcommand{\NumBlocks}{\left\lceil \frac{\RangeLen}{\BlockSize} \right\rceil} 
\newcommand{\BlockStart}[1]{\mathrm{beg}(#1)}
\newcommand{\BlockEnd}[1]{\mathrm{end}(#1)}
\newcommand{\LazyRebuild}{\mu}       
\newcommand{\PairMatrix}{\mathsf{Pair}}      
\newcommand{\NearMatrix}{\mathsf{Near}}      
\newcommand{\PairQuery}[5]{\texttt{PairQuery}(#1,#2,#3,#4,#5)} 
\newcommand{\UpdateSymbol}[2]{\texttt{UpdateSymbol}(#1,#2)}    
\newcommand{\MapSymbol}{\phi} 
\newcommand{\NumberOfMatches}{\texttt{cnt}}
\newcommand{\Pending}{\mathsf{Pending}}
\newcommand{\hashPrime}{p}
\newcommand{\hashBase}{b}
\newcommand{\hashFunc}[1]{H_{b,p}\left(#1\right)}
\newcommand{\hashPrimeFunc}[1]{H'_{b,p}\left(#1\right)} 
\newcommand*\samethanks[1][\value{footnote}]{\footnotemark[#1]}
\newcommand{\konote}[1]{}
\newcommand{\ignore}[1]{}
\newcounter{proccnt}
\def\GrabProofArgument[#1]{ #1: \egroup\ignorespaces}
\def\proof{\noindent\textbf\bgroup Proof%
	\@ifnextchar[{\GrabProofArgument}{. \egroup\ignorespaces}}
\title{Dynamic  Pattern Matching with Wildcards}
\author{
	Arshia Ataee Naeini\thanks{School of Electrical and Computer Engineering, University of Tehran, Tehran, Iran} \and
	Amir-Parsa Mobed\samethanks \and
	Masoud Seddighin \thanks{Tehran Institute for Advanced Studies, Tehran, Iran}\and
	Saeed Seddighin
}
\date{} 
\begin{document}
	\maketitle
	\thispagestyle{empty}
	\sloppy
	
	\begin{abstract}
	We study the fully dynamic pattern matching problem where the pattern may contain up to $k$ wildcard symbols, each matching any symbol of the alphabet. Both the text and the pattern are subject to \emph{updates} (insert, delete, change). We design an algorithm with $\BigONotation(n \log^2 n)$ preprocessing and update/query time
$\tilde{\BigONotation}(kn^{k/{k+1}} + k^{2} \log n)$. The bound is truly sublinear for a constant $k$, and sublinear when		 $k = \LittleONotation(\log n)$.  
	We further complement our results with a conditional lower bound: assuming subquadratic preprocessing time, achieving truly sublinear update time for the case $k = \Omega(\log n)$ would contradict the Strong Exponential Time Hypothesis (\seth).
	Finally, we develop sublinear algorithms for two special cases:
	\begin{itemize}
		\item If the pattern contains $w$ non-wildcard symbols, we give an algorithm with preprocessing time $\BigONotation(nw)$ and update time $\BigONotation(w + \log n)$, which is truly sublinear whenever $w$ is truly sublinear.  
		\item Using FFT technique combined with block decomposition, we design a deterministic truly sublinear algorithm with preprocessing time $\BigONotation(n^{1.8})$ and update time $\BigONotation(n^{0.8} \log n)$ for the case that there are at most two non-wildcards.  
	\end{itemize}
\end{abstract}

\section{Introduction}

Pattern matching is a fundamental problem in computer science with applications in text search, bioinformatics, log analysis, and data mining \cite{Pandiselvam2014Comparative,Tahir2017ESWA,Faro2013CSUR,Abrishami2013Bioinformatics, IEEE8703383, IEEE6698319, IEEE8967097,Capacho2017,Cheng2013,Bushong2020RACS, Johannesmeyer2002AIChE}.
In its basic form, given a pattern \(\pattern\) of length \(m\) and a text \(\textstr\) of length \(n\) with symbols from alphabet $\alphabet$ with size $\sigma$, the task is to decide whether \(\pattern\) occurs in \(\textstr\) \emph{as a contiguous substring}; that is, whether there exists an index \(i \in \{1,\dots,n-m+1\}\) such that
$
\textstr_{i:i+m-1} = \pattern.
$
This problem has been studied for decades, and many efficient linear-time solutions such as the Knuth–Morris–Pratt and Boyer–Moore algorithms are well known \cite{knuth1977, boyer1977}. 

Traditionally, pattern matching aims to find exact occurrences of a pattern, yet in many practical applications such rigidity is unrealistic~\cite{Ukkonen1985,Landau1988,Navarro2001,Galil1990}. The pattern or the text may contain uncertain or corrupted {symbols} due to noise, ambiguity, or small variations; for example when accounting for mutations in genomic data. This motivates the study of algorithms that search for substrings close to the pattern under standard distance measures such as Hamming distance~\cite{abrahamson1987generalized,chan2023faster,jin2024shaving}.
Another common way to model uncertainty is to assume that the positions of the corrupted {symbols} are known. This is typically handled by introducing a \emph{wildcard} symbol, which matches any single {symbol} of the alphabet \(\Sigma\) \cite{Fischer1974,Indyk1998,Kalai2002}. This gives rise to the problem of \emph{pattern matching with wildcards}, where across both the pattern and the text, the total number of wildcard positions is at most \(k\).
Since each wildcard can stand for any symbol of \(\Sigma\), a pattern with \(k\) wildcards implicitly represents up to \(|\Sigma|^{k}\) different strings.

Adding wildcards makes the pattern matching problem  harder. With a fixed pattern, the search space is small. For example, let the text be $\textstr=\texttt{CACCGGCT}$ and the pattern be $\pattern=\texttt{CG}$. For this instance, there is only one match. If we instead use $\pattern'=\texttt{C?}$, where ``\texttt{?}'' refers to the wildcard symbol, then every occurrence of $\texttt{C}$ in $\textstr$ becomes a match. This simple change increases the number of matches from $1$ to $4$. Thus, even limited uncertainty can drastically increase the complexity of the problem.  

A substantial body of work has studied pattern matching with wildcards in either the pattern or the text~\cite{Kalai2002,bathie2024pattern,amir2004faster}.
For this problem, Fischer and Paterson~\cite{Fischer1974} gave an
$\BigONotation(n \log m \log \sigma)$ algorithm, and later Indyk~\cite{Indyk1998} gave a randomized $\BigONotation(n \log n)$ algorithm.
Kalai~\cite{Kalai2002} subsequently obtained a randomized $\BigONotation(n \log m)$ algorithm, while the first deterministic $\BigONotation(n \log m)$ algorithm was given by Cole and Hariharan~\cite{ColeHariharan2002}.


In this paper, we study pattern matching with wildcards in a fully dynamic setting.  
Both the text and the pattern may change through insertions, deletions, or substitutions, including updates to wildcard positions.  
Our goal is to maintain  pattern-matching queries in sublinear time.
For dynamic pattern matching, most prior work considers settings without wildcards. In these studies, 
dynamism may arise in several ways: the text may be updated dynamically, the pattern may be updated dynamically, or both may change over time. 
All three cases have been studied extensively. 
For dynamic text with a static pattern, Amir et al.~\cite{amir2007dynamic} presented an
algorithm that preprocesses a text of length $n$ and a pattern of length $m$ in $\BigONotation(n \log \log m + m \sqrt{\log m})$ time, reporting all new occurrences of the pattern after each text update in $\BigONotation(\log \log m)$ time.

For static text with a dynamic pattern, efficient suffix tree based solutions were introduced by Weiner~\cite{weiner1973linear} and later improved by McCreight~\cite{mccreight1976space}, Ukkonen~\cite{ukkonen1995line}, Farach~\cite{farach1997optimal}, and others~\cite{karkkainen2003simple}. 
For a fixed finite alphabet, these algorithms preprocess the text in $\BigONotation(n)$ time and answer pattern queries in $\BigONotation(m + \mathit{occ})$ time, where $\mathit{occ}$ is the number of occurrences of the pattern. 
Finally, in the fully dynamic setting where both text and pattern may change, several algorithms have been developed~\cite{GuFarachBeigel1994,ferragina1995fast,sahinalp1996efficient,alstrup2000pattern}. 
For example, Sahinalp and Vishkin~\cite{sahinalp1996efficient} support insertions and deletions in $\BigONotation(\log^3 n + m)$ time per update, with query times comparable to Weiner’s static algorithm. 
Alstrup, Brodal, and Rauhe~\cite{alstrup2000pattern} handle insertions, deletions, and substring moves in $\BigONotation(\log^2 n \log \log n \log^* n)$ time per update, and answer pattern searches in $\BigONotation(\log n \log \log n + m + \mathit{occ})$ time.

Our work extends these dynamic pattern matching studies to consider wildcards. We provide a randomized algorithm that, after $\BigONotation(n \log^2 n)$ preprocessing, supports updates and queries in amortized time 
$
\tilde{\BigONotation}\left(kn^{\frac{k}{k+1}} + k^{2} \log n\right),
$
which is sublinear for $k = \LittleONotation(\log n)$ and truly sublinear for constant $k$. We also show that this bound is almost tight: under SETH, no algorithm with subquadratic preprocessing time can achieve truly sublinear update time when $k = \Omega(\log n)$. 
In addition to the general result, we design specialized algorithms for two restricted settings, with the complexity expressed in terms of the number of non-wildcard symbols in the pattern. 

\begin{enumerate}
	\item When the wildcard positions are fixed, we obtain preprocessing in $\BigONotation(nw)$ time and support updates in time $\BigONotation(w + \log n)$ time, where \(w\) is the number of non-wildcard symbols. 
	
	\item When the pattern contains at most two non-wildcard symbols, we give a deterministic algorithm based on FFT and block decomposition techniques, with preprocessing time $\BigONotation(n^{1.8})$ and update time $\BigONotation(n^{0.8} \log n)$.
\end{enumerate}

\paragraph*{Organization of the paper.}
Section~\ref{sec:resultsTechs} gives a high-level overview of our techniques and main results. 
In Section~\ref{sec:preliminaries}, we introduce formal definitions and preliminaries. 
The general dynamic pattern matching with wildcards setting is studied in Section~\ref{sec:dynamic}, followed by algorithms for sparse pattern matching regimes in Section~\ref{sec:sparse}. 
Finally, in Section~\ref{sec:hardness} we establish a conditional lower bound on the update time of any algorithm for the dynamic pattern matching with wildcards problem.

\section{Results and Techniques}  
\label{sec:resultsTechs}

We study the problem of dynamic pattern matching in the presence of wildcard symbols.  
Our contributions fall into two main lines.

\noindent\textbf{General setting.}  
In the most flexible version of the problem, both the text and the pattern may contain wildcards, and updates may insert, delete, or substitute symbols (including wildcards).  
For this setting, we design randomized algorithms with preprocessing time $\BigONotation(n \log^2 n)$ and update/query time
$
\tilde{\BigONotation}\big(kn^{k/(k+1)} + k^{2} \log n\big).
$
The bound is truly sublinear for constant $k$, and remains sublinear whenever $k = o(\log n)$.  
This gives the first general framework for handling fully dynamic wildcards on both sides of the matching problem.

\noindent\textbf{Sparse patterns.}  
We also investigate regimes where the number of non-wildcards in the pattern is small.  
This additional structure enables more efficient algorithms:  
\begin{itemize}
	\item If the pattern contains at most two non-wildcard symbols, we obtain a deterministic algorithm based on FFT and block decomposition, with preprocessing time $\BigONotation(n^{1.8})$ and update time $\BigONotation(n^{0.8}\log n)$. Though the setting might seem simple, this is the most technical part of the paper, with the most challenging component captured by the \emph{Range-Pair} problem (Problem~\ref{prob:range-pair}), which we solve and use in this regime.
	\item If all wildcards are confined to the pattern and their positions are fixed in advance, we design a randomized algorithm with preprocessing time $\BigONotation(nw)$ and update time $\BigONotation(w+\log n)$, where $w$ is the number of non-wildcard positions.
\end{itemize}

Finally, we establish a conditional lower bound on the update and query time for any algorithm solving dynamic pattern matching with wildcards.

Together, these results give a unified picture of the problem’s complexity.  
Figure~\ref{fig:wildcard_spectrum} highlights the regimes where our upper and lower bounds apply. Also, for completeness and comparison, Table~\ref{tab:results_fixed} summarizes the state-of-the-art results in pattern matching alongside our contributions.  
As shown in Table~\ref{tab:results_fixed}, when the pattern contains no wildcards, dynamic exact matching can be solved with $\BigONotation(n \log^2 n)$ preprocessing and polylogarithmic update time~\cite{charalampopoulos2020dynamic}.
However, as noted, even a single wildcard can increase the number of matches.

\definecolor{lightblue}{RGB}{225, 235, 255}
\definecolor{lightgreen}{RGB}{225, 255, 235}
\definecolor{lightyellow}{RGB}{255, 255, 200}

%
%
%
%
%
%
%
%
%
%
%
%

\begin{table}[t!]
	\caption{Summary of Pattern Matching Results. For static problems, preprocessing refers to pattern preprocessing and query time refers to text scanning.}
	\centering
	\small
	\label{tab:results_fixed}
	\renewcommand{\arraystretch}{1.5}
	
	\scalebox{0.9}{
		\begin{tabularx}{\textwidth}{|>{\centering\arraybackslash}m{3.8cm}|>{\centering\arraybackslash}X|>{\centering\arraybackslash}X|}
			\hline
			\textbf{Problem Variant} & \textbf{Preprocessing Time} & \textbf{Algorithm/Query Time} \\
			\hline \hline
			
			\multicolumn{3}{|c|}{\textbf{Static Matching}} \\
			\hline
			
			Exact Matching \cite{knuth1977, boyer1977} & $\BigONotation(m)$ & $\BigONotation(n)$ \\
			\hline
			
			Wildcard Matching \cite{Fischer1974, Indyk1998, Kalai2002} & $\BigONotation(m \log m)$ & $\BigONotation(n \log m)$ \\
			\hline
			
			$k$-mismatches(constant |\alphabet|) \cite{abrahamson1987generalized, Clifford2007} & $\BigONotation(m \log m)$ & $\BigONotation(n \log m)$ \\
			\hline \hline
			
			\multicolumn{3}{|c|}{\textbf{Dynamic Exact Matching}} \\
			\hline
			
			Dynamic Text \cite{amir2007dynamic} & $\BigONotation(n \log\log m)$ & $\BigONotation(\log\log m)$ per update \\
			\hline
			
			Fully Dynamic \cite{alstrup2000pattern} & $\BigONotation(n\log^2{n})$ & $\BigONotation(\text{polylog } n)$ per query \\
			\hline \hline
			
			\multicolumn{3}{|c|}{\textbf{Dynamic Matching with Wildcards (This Work)}} \\
			\hline
			
			General (Randomized) \textbf{[Theorem~\ref{thm:dynamic_wildcard_general}]} & $\BigONotation(n \log^2 n)$ & $\tilde{\BigONotation}(kn^{\frac{k}{k+1}} + k^2\log n)$ \\
			\hline
			
			Sparse Pattern (at most 2 non-wildcards) \textbf{[Theorem~\ref{thm:at-most-two-nonwildcard}]} & $\BigONotation(n^{9/5})$ & $\BigONotation(n^{4/5}\log n)$ \\
			\hline
			
			Sparse Pattern (fixed wildcards) \textbf{[Theorem~\ref{thm:sparse-pattern}]} & $\BigONotation(w n)$ & $\BigONotation(w + \log n)$ \\
			\hline
			
			Conditional Lower Bound \textbf{[Theorem~\ref{thm:lower-bound}]} & --- & No $\BigONotation(n^{1-\varepsilon})$ for $k = \Omega(\log n)$ \\
			\hline
		\end{tabularx}
	} 
\end{table}
Our approach builds on a simple but very useful observation.  
If the pattern \(P\) contains a symbol that appears only a few times in the text \(T\), we can directly check those occurrences in the text and verify the aligned substrings (see Figure \ref{cjg}).  

\begin{figure}[t]
	\centering
	\scalebox{0.9}{
	\begin{tikzpicture}[
		>=stealth, 
		every node/.style={font=\small}
		]
		
		\def\spectrumlength{12}
		\def\logpoint{6} 
		\def\sublinpoint{5}
		\def\tsublinpoint{2.5}
		
		\shade[left color=green!40!white, right color=orange!70, right color=red!70] (0, 0) rectangle (\spectrumlength, 1);
		
		\draw[thick] (0,0) rectangle (\spectrumlength, 1);
		
		\draw[thick, dashed, black] (\tsublinpoint, 0) -- (\tsublinpoint, 1);
		\node[above, align=center, fill=white, fill opacity=0.8, text opacity=1, rounded corners=2pt] at (\tsublinpoint, 1.15) {Truly Sublinear \\ {\footnotesize$k = \BigONotation(1)$}};

		\draw[thick, dashed, black] (\sublinpoint, 0) -- (\sublinpoint, 1);
		\node[above, align=center, fill=white, fill opacity=0.8, text opacity=1, rounded corners=2pt] at (\sublinpoint, 1.15) {Sublinear \\ {\footnotesize$k = \LittleONotation(\log n)$}};
		
		\draw[thick, dashed, black] (\logpoint, 0) -- (\logpoint, 1);
		\node[above, align=center, fill=white, fill opacity=0.8, text opacity=1, rounded corners=2pt] at (\logpoint, -2.2) {No Truly Sublinear\\ (SETH-hard) \\ {\footnotesize$k = \Omega(\log n)$}};

		\draw [decorate, decoration={brace, amplitude=5pt, mirror}, yshift=-6pt] (0,0) -- (\sublinpoint,0);
		\node[below, align=center] at ({\sublinpoint/2}, -0.8) {General Algorithm \\ {\footnotesize$\tilde{\BigONotation}(kn^{k/(k+1)} + k^{2} \log n)$}};
		
		\def\fixedpoint{11.2}
		\draw[->, thick, black] (\fixedpoint, 1.2) -- (\fixedpoint, 1.7);
		\node[above, align=center] at (\fixedpoint, 1.9) {{\footnotesize Special Case 1} \\ {\footnotesize Fixed non-wildcards} \\ {\footnotesize$\BigONotation(w + \log n)$}};
		\draw[very thick, dashed, green] (\fixedpoint, 0) -- (\fixedpoint, 1);
		
		\def\twopoint{11.9}
		\draw[->, thick, black] (\twopoint, -0.2) -- (\twopoint, -0.7);
		\node[below, align=center] at (\twopoint, -0.9) {{\footnotesize Special Case 2} \\ {\footnotesize $\leq 2$ non-wildcards} \\ {\footnotesize$\BigONotation(n^{0.8} \log n)$}};
		\draw[very thick, dashed, green] (\twopoint, 0) -- (\twopoint, 1);
	\end{tikzpicture}
}
	\caption{The complexity landscape of fully dynamic pattern matching with 
		$k$ wildcards.}
	\label{fig:wildcard_spectrum}
\end{figure}
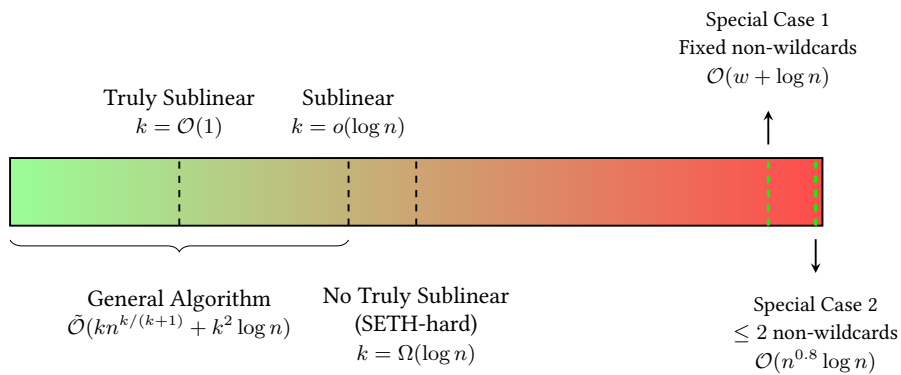

\begin{figure}[t]
	\centering
	\scalebox{0.65}{
		\begin{tikzpicture}[
			box/.style={draw, thick, rounded corners, minimum width=0.75cm, minimum height=0.75cm, font=\ttfamily},
			patternbox/.style={box, fill=blue!10, draw=blue!60},
			rarebox/.style={box, fill=yellow!20, draw=orange!70},
			textbox/.style={box, fill=green!10, draw=green!60},
			index/.style={font=\footnotesize, below, text=gray},
			matchrect/.style={draw=magenta!80, thick, dashed, rounded corners},
			arrow/.style={-stealth, thick, magenta!80},
			background/.style={fill=gray!5, draw=gray!20, rounded corners},
			>={Latex[length=2mm]}
			]
			
			\def\yPattern{3.6}
			\def\yText{1.6}
			\def\xstart{-0.6}
			
			\node[anchor=east] at (\xstart,\yPattern) {$P =$};
			\node[patternbox] (p1) at (1,\yPattern) {a}; \node[index] at (1,\yPattern-0.5) {1};
			\node[rarebox]    (p2) at (2,\yPattern) {b}; \node[index] at (2,\yPattern-0.5) {2};
			\node[patternbox] (p3) at (3,\yPattern) {c}; \node[index] at (3,\yPattern-0.5) {3};
			
			\node[anchor=east] at (\xstart,\yText) {$T =$};
			\node[textbox] (t1)  at (1,\yText) {a}; \node[index] at (1,\yText-0.5) {1};
			\node[rarebox] (t2)  at (2,\yText) {b}; \node[index] at (2,\yText-0.5) {2};
			\node[textbox] (t3)  at (3,\yText) {c}; \node[index] at (3,\yText-0.5) {3};
			\node[textbox] (t4)  at (4,\yText) {a}; \node[index] at (4,\yText-0.5) {4};
			\node[textbox] (t5)  at (5,\yText) {a}; \node[index] at (5,\yText-0.5) {5};
			\node[rarebox] (t6)  at (6,\yText) {b}; \node[index] at (6,\yText-0.5) {6};
			\node[textbox] (t7)  at (7,\yText) {c}; \node[index] at (7,\yText-0.5) {7};
			\node[textbox] (t8)  at (8,\yText) {a}; \node[index] at (8,\yText-0.5) {8};
			\node[textbox] (t9)  at (9,\yText) {d}; \node[index] at (9,\yText-0.5) {9};
			\node[textbox] (t10) at (10,\yText) {c}; \node[index] at (10,\yText-0.5) {10};
			
			\draw[arrow] (p2.south) .. controls +(0,-0.6) and +(0,0.8) .. (t2.north);
			\draw[arrow] (p2.south) .. controls +(0.6,-0.7) and +(0.1,0.9) .. (t6.north);
			
			\draw[matchrect] ($(t1.south west)+(-0.10,-0.12)$) rectangle ($(t3.north east)+(0.10,0.12)$);
			\draw[matchrect, color=cyan!80!black] ($(t5.south west)+(-0.10,-0.12)$) rectangle ($(t7.north east)+(0.10,0.12)$);
			
			
			\node[anchor=west] at (\xstart,\yText-1.45) {$\tau = 2$ \quad (rare $\le \tau$, frequent $> \tau$)};
			
			\node[patternbox, minimum width=0.6cm, minimum height=0.6cm, anchor=west] (Lpat) at (5.8,\yText-1.55) {};
			\node[anchor=west] at ($(Lpat.east)+(0.15,0)$) {pattern symbol};
			\node[rarebox, minimum width=0.6cm, minimum height=0.6cm, anchor=west] (Lrare) at (5.8,\yText-2.25) {};
			\node[anchor=west] at ($(Lrare.east)+(0.15,0)$) {rare (occurs $\le \tau$ times)};
			\node[textbox, minimum width=0.6cm, minimum height=0.6cm, anchor=west] (Ltext) at (5.8,\yText-2.95) {};
			\node[anchor=west] at ($(Ltext.east)+(0.15,0)$) {text symbol};			
			\begin{scope}[on background layer]
				\node[background, inner sep=6mm, fit=(current bounding box)] {};
			\end{scope}
			
		\end{tikzpicture}
	}
	\caption{Rare-symbol strategy: if the pattern contains a symbol that occurs at most $\tau$ times in the text (here, \(b\)), enumerate those occurrences and verify only the aligned substrings of length \(|P|\).}
	\label{cjg}
	
\end{figure}
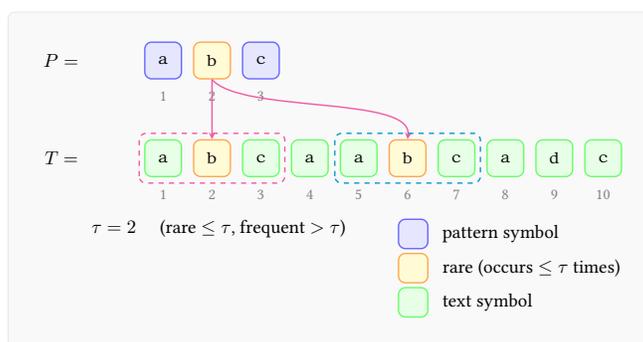

\noindent This leads us to a frequency-based classification of symbols: those that occur at most \(\tau\) times in \(T\) are called \emph{rare}, while the rest are considered \emph{frequent}.  
When \(P\) contains a rare symbol, we verify all possible candidate matches that align that rare symbol.  

Given this observation, we can assume without loss of generality that every symbol in the pattern is frequent. In this case, the number of candidate positions in the text that could match the pattern is large. The key insight, however, is that the number of distinct frequent symbols is limited to $n/\tau$. This restriction allows us to design an alternative algorithm based on standard data structures.

The main data structure we use in our algorithm is the \emph{polynomial rolling hash table}. At a high level, it provides a succinct summary of substrings that (i) updates efficiently after a single-symbol change and (ii) lets us handle wildcards. We treat wildcards by partitioning the string into contiguous non-wildcard segments: the hash is computed only over these segments, and two strings match under the wildcard semantics if and only if all corresponding non-wildcard parts have equal hashes. In this way, wildcards are naturally ignored, and correctness reduces to verifying equality of the fixed pieces. To support dynamic updates, we maintain these hashes in balanced \textsf{BSTs} such as treaps~\cite{seidel1996randomized}, which allow insertions, deletions, and substitutions in logarithmic time, and combine them with fully dynamic \textsf{LCS} to handle candidate completions with polylogarithmic overhead~\cite{charalampopoulos2020dynamic}.
%

As we discussed earlier, our algorithm achieves truly sublinear update time when the number of wildcards is constant, and remains sublinear as long as the number of wildcards is \(o(\log n)\).  
A natural question is whether these bounds can be improved further, perhaps matching the polylogarithmic update times known for the static case.  
Unfortunately, our results suggest otherwise.  
We show that once the number of wildcards reaches \(\Omega(\log n)\), no algorithm with reasonable preprocessing can achieve truly sublinear update time with subquadratic preprocessing unless the Strong Exponential Time Hypothesis (\textsf{SETH}) fails.  
This hardness result follows from a reduction from the orthogonal vectors problem.  

Given our results, several natural questions remain open:  
\begin{itemize}
	\item Can we design algorithms with truly sublinear update time when \(k\) is not constant but still \(o(\log n)\)? For constant \(k\), can we further improve the update and query bounds?  
	\item Can we obtain truly sublinear algorithms for cases where \(k = \Omega(\log n)\), provided the wildcards have some special structure?  
\end{itemize}

The first direction appears particularly promising for future work.  
In line with the second, we provide algorithms achieving truly sublinear update times for certain resticted settings.  
We first consider the setting where the wildcard positions are fixed in advance and the number of non-wildcard symbols is sublinear.  
In this case, it suffices to maintain rolling hash values only for the \(\BigONotation(k)\) non-wildcard positions, which enables sublinear update times for both the pattern and the text.  

For the special case where the pattern contains at most two non-wildcard symbols, the static version of the problem can already be solved efficiently by applying FFT: by encoding blocks of the text as polynomials and using convolution, one can align the two characters across the text. In the dynamic setting, however, a single update can change many convolution values, so the FFT method alone is insufficient. To address this, we combine block decomposition with the frequent/rare symbol technique: the text is divided into blocks, only the affected blocks are recomputed after an update, rare symbols are treated explicitly, and frequent ones are handled in the convolution structures. This approach ensures that each update can be processed within sublinear time, achieving preprocessing complexity \(\BigONotation(n^{\tfrac{9}{5}})\) and query complexity \(\BigONotation(n^{\tfrac{4}{5}})\).
	
\section{Related Work}

\label{sec:related}

String pattern matching has a long history.  
A key milestone was the \textbf{Knuth–Morris–Pratt (KMP)} algorithm~\cite{knuth1977}, which gave a linear-time solution via prefix-function preprocessing.  
Around the same time, the \textbf{Boyer–Moore} algorithm~\cite{boyer1977} introduced efficient skipping heuristics.  
These breakthroughs sparked a rich line of research
in string matching—including automaton-based methods such as Aho–Corasick ~\cite{aho1975efficient}, hashing strategies like Rabin–Karp~\cite{karp1987efficient}, and data-structure techniques based on suffix trees and arrays ~\cite{Weiner1973,wu1990np}.
Here we briefly mentionn some of these studies in several categories.

\smallskip
\noindent\textbf{Pattern matching with wildcards.}  
Wildcards add substantial flexibility but also complexity.  
Fischer and Paterson~\cite{Fischer1974} gave an early algorithm with \(\BigONotation(n \log m \log \sigma)\) runtime, later improved by Indyk and Kalai~\cite{Indyk1998,Kalai2002} and by Cole and Hariharan~\cite{cole2002} to \(\BigONotation(n \log m)\) deterministically.  
Clifford and Clifford~\cite{Clifford2007} presented a simpler algorithm with similar guarantees.  
More recent work includes filtering approaches by Barton et al.~\cite{Barton2014} and average-case analyses by Kopelowitz and Porat~\cite{Kopelowitz2022}, which separate the complexity of wildcard matching from exact matching.

\noindent\textbf{Dynamic pattern matching.}  
Weiner~\cite{Weiner1973} gave the first efficient solution for the case of a fixed text and a dynamic pattern, supporting queries in \(\BigONotation(|P| + \text{tocc})\) time after linear preprocessing.  
Amir et al.~\cite{Amir2007} studied the complementary case of a dynamic text with a fixed pattern, showing that new occurrences can be maintained in \(\BigONotation(\log \log m)\) time per update.  
More recently, Monteiro and dos Santos~\cite{Monteiro2024} revisited dynamic patterns with a suffix-array based framework, achieving amortized \(\BigONotation(\log |T|)\) update time for both symbol and substring edits.  

\smallskip
\noindent\textbf{Approximate matching.}  
Approximate matching has been studied extensively, with edit distance by Levenshtein~\cite{Levenshtein1966}, longest common subsequence (LCS) by Needleman and Wunsch~\cite{Needleman1970}, and dynamic time warping (DTW) analyzed by Bringmann and Künnemann~\cite{Bringmann2015} as the most common similarity measures.  
Ukkonen~\cite{Ukkonen1985} gave early algorithms for edit distance, and Masek and Paterson~\cite{Masek1980} introduced the Four-Russians speedup.  
However, conditional hardness results show strong barriers: Backurs and Indyk~\cite{Backurs2015} proved that strongly subquadratic edit distance would refute $\mathsf{SETH}$, and Bringmann and Künnemann~\cite{Bringmann2015} extended these lower bounds to LCS and DTW.  
A detailed survey of algorithmic techniques is given by Navarro~\cite{Navarro2001}.

\smallskip
\noindent\textbf{Dynamic and approximate wildcard matching.}  
Dynamic and approximate variants of wildcard matching have also been studied.  
Clifford et al.~\cite{Clifford2018} considered dynamic data structures for exact matching with wildcards, Hamming distance, and inner product (DynEM), establishing both upper and lower bounds.  
Crochemore et al.~\cite{Crochemore2015} gave algorithms for online matching with wildcards under updates, while Gog and Navarro~\cite{Gog2017} designed compressed-text versions supporting efficient updates.  
Charalampopoulos et al.~\cite{charalampopoulos2020dynamic} developed polylogarithmic-update algorithms for the dynamic longest common substring problem with wildcards, and Kociumaka, Radoszewski, and Starikovskaya~\cite{Kociumaka2019} proved SETH-hardness for the longest common substring with \(k\) mismatches while giving approximation algorithms.  
Most recently, Bathie, Charalampopoulos, and Starikovskaya~\cite{Bathie2024} provided a fine-grained analysis of approximate matching with wildcards and mismatches, presenting new structural lower bounds and efficient occurrence representations.

	\section{Preliminaries}
\label{sec:preliminaries}

We consider a finite alphabet \(\alphabet\) of size \(|\alphabet| = \alphabetSize\).  
Our input consists of two strings: a \emph{text} \(\textstr \in (\alphabet \cup \{\wildcard\})^*\) of length \(\lengthT\), and a \emph{pattern} \(\pattern \in (\alphabet \cup \{\wildcard\})^*\) of length \(\lengthP\).  
Both the text and the pattern together may contain at most \(k\) wildcard symbols, denoted by \(\wildcard\).
 A wildcard symbol can match any symbol from the alphabet.

For a string \(S\) and $1 \leq \ell \leq r \leq |S|$, we use \(S_{\ell:r}\) to denote the substring
$
S_{\ell} S_{\ell+1} \dots S_r.
$
For example, if \(S = \texttt{abcde}\), then \(S_{2:4} = \texttt{bcd}\).
We say that \(\pattern\) \emph{matches} a substring \(\textstr_{i:\, i+\lengthP-1}\) 
(\(1 \leq i \leq \lengthT - \lengthP + 1\)) if for every \(1 \leq j \leq \lengthP\), either 
\(\pattern_j = \textstr_{i+j-1}\), \(\pattern_j = \wildcard\) or $\textstr_{i+j-1}=\wildcard$.  
The set of all matching positions is denoted by
$$
\occ(\pattern, \textstr) =
\{\, i \mid 1 \leq i \leq \lengthT - \lengthP + 1,\ \pattern \text{ matches } \textstr_{i: i+\lengthP-1} \,\}.
$$

\smallskip
\noindent
The concatenation of two strings \(S\), \(S'\), denoted \(S \concat S'\), 
is defined by appending \(S'\) to the end of \(S\).  
For instance, if \(S = \texttt{ab}\) and \(S' = \texttt{cd}\), then 
\(S \concat S' = \texttt{abcd}\).

\smallskip
\noindent
We classify symbols in \(\alphabet\) as \emph{frequent} or \emph{rare} according to a frequency threshold.
Let the threshold value be
$$
\tau =
\bigl(n^k \log^{7} n \bigr)^{\tfrac{1}{k+1}}.
$$
A symbol \(c \in \alphabet\) is called \emph{rare} if it occurs fewer than \(\tau\) times in \(\textstr\), and \emph{frequent} otherwise.

\smallskip
\noindent
For each symbol \(c \in (\alphabet\ \cup \wildcard)\), we maintain the set
$
\mathfrak{R}(c) = \{\, i \mid 1 \leq i \leq \lengthT,\ \textstr_i = c \,\},
$
which stores all positions where \(c\) appears in \(\textstr\).  
In addition to membership queries for $\mathfrak{R}(c)$, we also require efficient support for \emph{lower bound queries}, i.e., finding the smallest index in \(\mathfrak{R}(c)\) that is greater than or equal to a given value. For handling the updates and queries efficiently, we use balanced binary search trees.

\begin{fact}[Balanced \textsf{BST} support]\label{fact:balanced-bst}
	Each set \(\mathfrak{R}(c)\) can be maintained under dynamic text updates such that membership queries, lower bound queries, and updates are all supported in \(\BigONotation(\log \lengthT)\) time, using a balanced binary search tree \cite{CLRS}.
\end{fact}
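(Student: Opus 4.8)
The plan is to realize each set $\mathfrak{R}(c)$ as a textbook balanced binary search tree---a red--black or AVL tree, or a treap of the kind used elsewhere in the paper---with one node per position $i$ satisfying $\textstr_i = c$, and with the nodes ordered by the natural order on positions. Since such a tree over $t$ elements has height $\BigONotation(\log t)$ and here $t \le \lengthT$, every root-to-node descent runs in $\BigONotation(\log \lengthT)$ time, and each of the three required operations is an instance of a standard descent, so the bound will follow once we exhibit the appropriate descents.

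Concretely, a \emph{membership query} for a position $i$ is an ordinary BST search: descend from the root, branching left or right according to the comparison of $i$ with the current node's key, and report success iff a node with key $i$ is reached. A \emph{lower bound query} for a value $v$---the smallest key $\ge v$ present in $\mathfrak{R}(c)$---is handled by a successor-style descent: maintain a running ``best candidate'', and at each visited node, if its key is $\ge v$ then update the candidate and recurse left, otherwise recurse right; return the final candidate (or $\bot$ if none was ever set). Both descents visit $\BigONotation(\log \lengthT)$ nodes, which gives the stated query time, and a symmetric traversal yields the largest key $\le v$ should that variant be needed later.

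For updates, the key observation is that a single text edit affects at most two of the sets. A substitution of $\textstr_i$ from $c$ to $c'$ (either of which may be $\wildcard$) is carried out by a balanced-BST deletion of the key $i$ from the tree for $c$, followed by a balanced-BST insertion of the key $i$ into the tree for $c'$; with the usual rotations and rebalancing this costs $\BigONotation(\log \lengthT)$ per tree, hence $\BigONotation(\log \lengthT)$ in total.

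The one point that genuinely needs care---and the step I expect to be the main obstacle---is that an \emph{insertion} or \emph{deletion} of a symbol in $\textstr$ shifts the positions of all later symbols, which would naively force $\Theta(\lengthT)$ key updates across the $\mathfrak{R}(c)$'s. I would handle this by keeping the keys \emph{implicit}: maintain positions via an order-statistic augmentation so that the ``position'' of an element is its rank within the structure (equivalently, layer the $\mathfrak{R}(c)$'s on top of a balanced-tree representation of $\textstr$ itself). Then an insertion or deletion in $\textstr$ is a single structural operation and all ranks are updated automatically, while membership and lower-bound queries become rank-based descents, still $\BigONotation(\log \lengthT)$ each. Everything else reduces to a direct appeal to the standard balanced-BST toolkit of \cite{CLRS}, so this implicit-key bookkeeping is the only part of the argument that is not immediate.
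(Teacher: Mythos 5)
Your proposal is correct and matches the paper's treatment: the paper states this fact by direct appeal to standard balanced BSTs keyed by position (with \cite{CLRS}), presents its main algorithms in a substitution-only model where moving the key $i$ between two trees suffices, and resolves the position-shift issue for insertions and deletions exactly as you do --- via an implicit-order balanced tree (an implicit treap) with node pointers and rank queries, as described in Remark~\ref{rem:insdel}. Nothing further is needed.
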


Many of our algorithms are randomized and succeed with high probability. Throughout this paper, we say an event occurs \textbf{with high probability} if it holds with probability at least $1 - 1/n$.

The foundation for our analysis, based on polynomial hashing, is detailed in \Cref{hashbased}. We use a polynomial rolling hash with a prime modulus \(\hashPrime\)~\cite{karp1987efficient} to efficiently support matching queries. The hash function is formally defined as follows.

\begin{definition}[Polynomial Rolling Hash]
	\label{def:hash_function}
	Let \(S \) be a string over an ordered alphabet \(\alphabet = \{c_1, c_2, \ldots, c_{|\alphabet|}\}\).  
	We define a mapping \(\mathrm{\pi} : \alphabet \to \mathbb{N}\) such that \(\mathrm{\pi}(c_j) = j\) for \(1 \leq j \leq |\alphabet|\).  
	The \emph{polynomial rolling hash} of \(S\) is
	$$
	\hashFunc{S} = \left( \sum_{i=1}^{|S|} \mathrm{\pi}(S_i) \cdot \hashBase^{\,|S|-i} \right) \bmod \hashPrime,
	$$
	where \(\hashBase\) is chosen uniformly at random from \(\{1,2,\dots,\hashPrime-1\}\), and \(\hashPrime\) is a large prime modulus.
\end{definition}

We leverage the following lemma from \Cref{hashbased} to prove our probabilistic claims in the paper. 

\begin{restatable}{lemma}{collisionprob}
	\label{lem:pairs-collision-union}
	Let \(\{(S_1,S'_1),\dots,(S_N,S'_N)\}\) be a collection of \(N\) pairs of distinct strings, 
	each string drawn from the alphabet \(\Sigma\) and of length at most \(x\).  
	Then the probability that there exists at least one pair \((S_i,S'_i)\) such that 
	\(\hashFunc{S_i}=\hashFunc{S'_i}\) is bounded by
	$
\tfrac{N\,x}{\hashPrime}.
	$
\end{restatable}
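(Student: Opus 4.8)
The plan is to bound the probability via a union bound over the $N$ pairs, reducing the whole statement to a single-pair collision bound. For a fixed pair $(S_i, S_i')$ of distinct strings, each of length at most $x$, I would first pad the shorter one (if the lengths differ) so that they can be compared coefficient by coefficient; alternatively, observe that if the lengths differ then $\hashFunc{S_i} = \hashFunc{S_i'}$ is equivalent to a polynomial identity in $\hashBase$ whose coefficients are $\pi(\cdot)$ values from $\{1,\dots,|\alphabet|\}$, and this polynomial has degree at most $x-1$. The key point is that since $S_i \neq S_i'$, this polynomial is not identically zero modulo $\hashPrime$: at least one coefficient $\pi((S_i)_j) - \pi((S_i')_j)$ is nonzero, and because each $\pi$-value lies in $\{1,\dots,|\alphabet|\}$ and $\hashPrime$ is a large prime (larger than $|\alphabet|$), the difference is nonzero modulo $\hashPrime$ as well.

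Given that, the single-pair collision event is precisely the event that the randomly chosen base $\hashBase \in \{1,\dots,\hashPrime-1\}$ is a root of this nonzero polynomial over the field $\mathbb{F}_{\hashPrime}$. A nonzero polynomial of degree at most $x-1$ over a field has at most $x-1 < x$ roots, so the probability that $\hashBase$ is such a root is at most $\tfrac{x-1}{\hashPrime - 1} \le \tfrac{x}{\hashPrime}$ (one has to be slightly careful that $\hashBase$ ranges over $\{1,\dots,\hashPrime-1\}$ rather than all of $\mathbb{F}_{\hashPrime}$, but this only shrinks the sample space and the bound $\tfrac{x}{\hashPrime}$ still holds comfortably). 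This is essentially the Schwartz–Zippel / DeMillo–Lipton–Schwartz argument specialized to univariate polynomials, which is the standard analysis of Rabin–Karp-style rolling hashes.

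Finally, I would apply the union bound: the probability that \emph{some} pair collides is at most $\sum_{i=1}^{N} \Pr[\hashFunc{S_i} = \hashFunc{S_i'}] \le N \cdot \tfrac{x}{\hashPrime} = \tfrac{Nx}{\hashPrime}$, which is exactly the claimed bound.

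I do not anticipate a serious obstacle here; the argument is routine. The only points requiring a little care are (i) handling the case of unequal string lengths cleanly so that the "difference polynomial" is well-defined and genuinely nonzero — this is where one uses that $\pi$ maps into the positive integers $\{1,\dots,|\alphabet|\}$ rather than allowing a zero symbol, so that the leading coefficient of the longer string's hash polynomial cannot vanish; and (ii) being precise about the sample space $\{1,\dots,\hashPrime-1\}$ versus $\mathbb{F}_{\hashPrime}$, and confirming that $\hashPrime$ is chosen large enough (in particular $\hashPrime > |\alphabet|$) for coefficient differences not to vanish modulo $\hashPrime$. Both are easily dispatched, so the proof should be short.
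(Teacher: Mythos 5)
Your proposal is correct and follows essentially the same route as the paper: a union bound over the $N$ pairs combined with the standard single-pair bound obtained by counting roots of the nonzero difference polynomial over $\mathbb{F}_{\hashPrime}$ (the paper's Lemma~\ref{lem:pairwise-collision}). Your two points of extra care --- that the difference polynomial remains nonzero modulo $\hashPrime$ because the $\pi$-values lie in $\{1,\dots,|\alphabet|\}$ and $\hashPrime$ is large, and that the base is drawn from $\{1,\dots,\hashPrime-1\}$ rather than all of $\mathbb{F}_{\hashPrime}$ --- are details the paper glosses over, but they do not change the argument or the bound.
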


\paragraph*{Dynamic Updates} 
As described earlier, both the pattern and the text may change dynamically. We support the following operations:

\begin{itemize}
	\item \textbf{Pattern Update:} Insert, delete, or replace a symbol at position $i$ in the pattern, where the new symbol $c \in \alphabet \cup \{\wildcard\}$ for $1 \le i \le \lengthP$.
	\item \textbf{Text Update:} Insert, delete, or replace a symbol at position $i$ in the text, where the new symbol $c \in \alphabet \cup \{\wildcard\}$ for $1 \le i \le \lengthT$.
\end{itemize}

Updates are subject to the constraint that the total number of wildcards in $\pattern$ and $\textstr$ does not exceed $k$.  

After each update, the query asks whether there exists an index $i$ such that $\pattern$ matches the substring $\textstr_{i:i+\lengthP-1}$, i.e., whether $\occ(\pattern, \textstr) \neq \emptyset$.

\begin{example}
	Let the alphabet be \(\alphabet = \{a, b, c, d\}\) with \(\alphabetSize = 4\).  
	Initially, the pattern is \(\pattern = \texttt{a?b?c}\) of length \(\lengthP = 5\), and the text is \(\textstr = \texttt{aabbccba}\) of length \(\lengthT = 8\).  
	The pattern contains two wildcards (\(\pattern_2 = \pattern_4 = \wildcard\)).
	The symbol occurrence sets in the text are as follows:
	$\mathfrak{R}(a) = \{1, 2, 8\}, \mathfrak{R}(b) = \{3, 4, 7\}, \mathfrak{R}(c) = \{5, 6\}, \mathfrak{R}(d) = \emptyset.
	$
	By definition, a match occurs at position \(i\) if \(\pattern_1 = \textstr_i\), \(\pattern_3 = \textstr_{i+2}\), and \(\pattern_5 = \textstr_{i+4}\) (wildcards match any symbol).  
	Checking \(i = 1, \dots, 4\) gives
	$
	\occ(\pattern, \textstr) = \{1, 2\}.
	$
	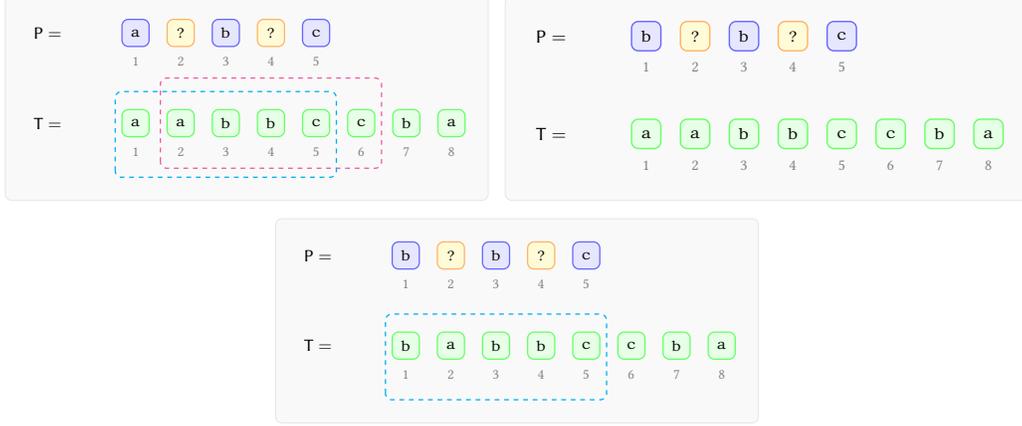
\begin{figure}[t]
		\centering
		\scalebox{0.6}{
			\begin{tikzpicture}[
				box/.style={draw, thick, rounded corners, minimum size=0.6cm, font=\ttfamily},
				patternbox/.style={box, fill=blue!10, draw=blue!60},
				wildcardbox/.style={box, fill=yellow!20, draw=orange!60},
				textbox/.style={box, fill=green!10, draw=green!60},
				index/.style={font=\footnotesize, below, text=gray},
				matchrect/.style={draw=magenta!80, thick, dashed, rounded corners},
				arrow/.style={-stealth, thick, magenta!80},
				background/.style={fill=gray!5, draw=gray!20, rounded corners}  
				]
				
				\node[anchor=east] at (-0.5, 2) {\(\pattern = \)};
				\foreach \i/\char in {1/a, 2/\texttt{?}, 3/b, 4/\texttt{?}, 5/c} {
					\begin{scope}[shift={(\i,2)}]
						\ifnum\i=2 \node[wildcardbox] (p\i) {\char};
						\else \ifnum\i=4 \node[wildcardbox] (p\i) {\char};
						\else \node[patternbox] (p\i) {\char};
						\fi\fi
						\node[index] at (0,-0.4) {\i};
					\end{scope}
				}
				
				\node[anchor=east] at (-0.5, 0) {\(\textstr = \)};
				\foreach \i/\char in {1/a, 2/a, 3/b, 4/b, 5/c, 6/c, 7/b, 8/a} {
					\begin{scope}[shift={(\i,0)}]
						\node[textbox] (t\i) {\char};
						\node[index] at (0,-0.4) {\i};
					\end{scope}
				}
				
				\draw[matchrect] (1.55,-1) rectangle (6.45,1);
				\draw[matchrect,color=cyan] (0.55,-1.2) rectangle (5.45,0.7);
				\begin{scope}[on background layer]
					\node[background, inner sep=5mm, fit=(current bounding box)] {};
				\end{scope}
				
			\end{tikzpicture}
		}
			\scalebox{0.65}{
			\begin{tikzpicture}[
				box/.style={draw, thick, rounded corners, minimum size=0.6cm, font=\ttfamily},
				patternbox/.style={box, fill=blue!10, draw=blue!60},
				wildcardbox/.style={box, fill=yellow!20, draw=orange!60},
				textbox/.style={box, fill=green!10, draw=green!60},
				index/.style={font=\footnotesize, below, text=gray},
				matchrect/.style={draw=magenta!80, thick, dashed, rounded corners},
				arrow/.style={-stealth, thick, magenta!80},
				background/.style={fill=gray!5, draw=gray!20, rounded corners}  
				]
				
				\node[anchor=east] at (-0.5, 2) {\(\pattern = \)};
				\foreach \i/\char in {1/b, 2/\texttt{?}, 3/b, 4/\texttt{?}, 5/c} {
					\begin{scope}[shift={(\i,2)}]
						\ifnum\i=2 \node[wildcardbox] (p\i) {\char};
						\else \ifnum\i=4 \node[wildcardbox] (p\i) {\char};
						\else \node[patternbox] (p\i) {\char};
						\fi\fi
						\node[index] at (0,-0.4) {\i};
					\end{scope}
				}
				
				\node[anchor=east] at (-0.5, 0) {\(\textstr = \)};
				\foreach \i/\char in {1/a, 2/a, 3/b, 4/b, 5/c, 6/c, 7/b, 8/a} {
					\begin{scope}[shift={(\i,0)}]
						\node[textbox] (t\i) {\char};
						\node[index] at (0,-0.4) {\i};
					\end{scope}
				}
				
				
				\begin{scope}[on background layer]
					\node[background, inner sep=5mm, fit=(current bounding box)] {};
				\end{scope}
				
			\end{tikzpicture}
		}
		
		\vspace{0.2cm}
		
				\scalebox{0.6}{
			\begin{tikzpicture}[
				box/.style={draw, thick, rounded corners, minimum size=0.6cm, font=\ttfamily},
				patternbox/.style={box, fill=blue!10, draw=blue!60},
				wildcardbox/.style={box, fill=yellow!20, draw=orange!60},
				textbox/.style={box, fill=green!10, draw=green!60},
				index/.style={font=\footnotesize, below, text=gray},
				matchrect/.style={draw=magenta!80, thick, dashed, rounded corners},
				arrow/.style={-stealth, thick, magenta!80},
				background/.style={fill=gray!5, draw=gray!20, rounded corners}  
				]
				
				\node[anchor=east] at (-0.5, 2) {\(\pattern = \)};
				\foreach \i/\char in {1/b, 2/\texttt{?}, 3/b, 4/\texttt{?}, 5/c} {
					\begin{scope}[shift={(\i,2)}]
						\ifnum\i=2 \node[wildcardbox] (p\i) {\char};
						\else \ifnum\i=4 \node[wildcardbox] (p\i) {\char};
						\else \node[patternbox] (p\i) {\char};
						\fi\fi
						\node[index] at (0,-0.4) {\i};
					\end{scope}
				}
				
				\node[anchor=east] at (-0.5, 0) {\(\textstr = \)};
				\foreach \i/\char in {1/b, 2/a, 3/b, 4/b, 5/c, 6/c, 7/b, 8/a} {
					\begin{scope}[shift={(\i,0)}]
						\node[textbox] (t\i) {\char};
						\node[index] at (0,-0.4) {\i};
					\end{scope}
				}
				
				\draw[matchrect,color=cyan] (0.55,-1.2) rectangle (5.45,0.7);
				\begin{scope}[on background layer]
					\node[background, inner sep=5mm, fit=(current bounding box)] {};
				\end{scope}
				
			\end{tikzpicture}
		}
		\caption{Pattern matching with \(\pattern = \texttt{a?b?c}\) and \(\textstr = \texttt{aabbccba}\). Yellow boxes represent wildcards. The red dashed region shows the match starting at position \(i=2\).}
		\label{fig:pattern_match}
	\end{figure}
	Now, we modify the first symbol of the pattern from \(\texttt{a}\) to \(\texttt{b}\).  
	This produces the new pattern \(\pattern = \texttt{b?b?c}\), which still contains exactly two wildcard symbols.  
	If we check the updated pattern against the text, we find that no substring of the text matches, so the set of occurrences becomes empty.
	\begin{figure}[t]
		\centering
	
		\label{fig:pattern_update}
	\end{figure}
	Next, we change the first symbol of the text from \(\texttt{a}\) to \(\texttt{b}\), resulting in \(\textstr = \texttt{babbccba}\).  
	This modification updates the occurrence sets as follows:
	$
	\mathfrak{R}(a) = \{2, 8\}, 
	\mathfrak{R}(b) = \{1, 3, 4, 7\}, 
	\mathfrak{R}(c) = \{5, 6\}, 
	\mathfrak{R}(d) = \emptyset.
	$
	Checking the updated text against the current pattern \(\texttt{b?b?c}\) shows that a match now exists at position \(i = 1\). See Figure \ref{fig:pattern_match}.
\end{example}

\subsection{Queries}
Throughout our algorithm, we frequently use two types of queries: the \emph{matching query} and the \emph{longest common substring (\textsf{LCS}) query}.  
Below, we define each query and state the time complexity required to answer it.

\vspace{0.2cm}
\noindent\textbf{Matching Query.}
The matching query is defined as
$
\texttt{is\_match}(\pattern_{\ell_1:r_1},\, \textstr_{\ell_2:r_2}),
$
which returns \texttt{true} if the pattern substring \(\pattern_{\ell_1:r_1}\) matches the text substring \(\textstr_{\ell_2:r_2}\), allowing wildcards in both \(\pattern\) and \(\textstr\).

\begin{restatable}{lemma}{hashmatch}
	\label{thm:is_match_hash_query}
	The query \(\texttt{is\_match}(\pattern_{l_1:r_1},\, \textstr_{l_2:r_2})\) can be answered with 
	\(\BigONotation(n)\) preprocessing time, \(\BigONotation(\log n)\) time per update operation, and 
	\(\BigONotation(k' \log n)\) time per query, where \(k'\) is the total number of wildcards 
	in \(\pattern_{l_1:r_1}\) and \(\textstr_{l_2:r_2}\).  
	The algorithm reports a match with probability \(1\) when one exists, and reports no match with probability 
	\(1 - \frac{1}{n^{2}}\) when no match exists.
\end{restatable}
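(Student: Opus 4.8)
The plan is to reduce a wildcard-matching query to a small number of \emph{exact} substring-equality checks and to answer each such check with a polynomial rolling hash that is maintained dynamically. In the preprocessing phase I would build two balanced search trees (e.g.\ treaps built bottom-up as Cartesian trees in linear time~\cite{seidel1996randomized}) keyed by position, one over \pattern\ and one over \textstr, where every node stores, besides its symbol, the size of its subtree and the rolling hash $\hashFunc{\cdot}$ (Definition~\ref{def:hash_function}) of the string spelled by the in-order traversal of that subtree. Since the hash of a concatenation satisfies $\hash(uv)=\hash(u)\cdot \hashBase^{\,s_v}+\hash(v)\bmod \hashPrime$, where $s_v$ is the length of $v$, once the powers $\hashBase^{0},\dots,\hashBase^{\,\lengthT}\bmod\hashPrime$ are precomputed (and extended lazily if a string grows), both trees support split and merge with correct hash aggregation in $\BigONotation(\log\lengthT)$ time, and the whole construction costs $\BigONotation(\lengthT)$. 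I would also maintain, alongside the sets $\mathfrak R(c)$ of Fact~\ref{fact:balanced-bst}, one extra balanced tree holding the wildcard positions of the pattern (mirroring $\mathfrak R(\wildcard)$ for the text), so that the wildcard positions inside any range can be enumerated by lower-bound queries in $\BigONotation(\log\lengthT)$ each.

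To answer $\texttt{is\_match}(\pattern_{l_1:r_1},\textstr_{l_2:r_2})$ I would first reject if $r_1-l_1\neq r_2-l_2$; otherwise set $L=r_1-l_1+1$ and work in the common coordinate $j\in\{1,\dots,L\}$. Using $\BigONotation(k')$ successive lower-bound queries I collect the wildcard positions of \pattern\ inside $[l_1,r_1]$ and of \textstr\ inside $[l_2,r_2]$, translate them into the common coordinate, and merge the two sorted lists into at most $k'$ ``free'' coordinates. These free coordinates split $\{1,\dots,L\}$ into at most $k'+1$ maximal intervals on which neither string has a wildcard, and by the definition of matching, $\pattern_{l_1:r_1}$ matches $\textstr_{l_2:r_2}$ if and only if on each such interval $[a,b]$ the literal substrings $\pattern_{l_1+a-1:\,l_1+b-1}$ and $\textstr_{l_2+a-1:\,l_2+b-1}$ are \emph{equal}. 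For each interval I extract the two substring hashes by one split on each treap ($\BigONotation(\log\lengthT)$ each) and declare a match iff all $k'+1$ hash pairs agree, for a total query cost of $\BigONotation(k'\log\lengthT)$.

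For correctness: if $\pattern_{l_1:r_1}$ genuinely matches $\textstr_{l_2:r_2}$ then every interval carries two identical strings, all hash comparisons succeed, and the algorithm reports a match with probability $1$. If it does not match, then on some free interval the two substrings are \emph{distinct} strings of length at most $L\le\lengthT$, and the query runs at most $k'+1\le\lengthT$ equality tests, so Lemma~\ref{lem:pairs-collision-union} with $N\le\lengthT$ and $x\le\lengthT$ bounds the probability of any false positive by $\lengthT^{2}/\hashPrime\le 1/\lengthT^{2}$, provided the prime modulus is taken with $\hashPrime\ge\lengthT^{4}$ (which is compatible with the size already needed for the paper's global union bounds). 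For updates, a single-symbol change, insertion, or deletion in \pattern\ or \textstr\ is a constant number of split/merge operations on the corresponding treap, each $\BigONotation(\log\lengthT)$, plus at most one update to a wildcard-position tree, also $\BigONotation(\log\lengthT)$; hence $\BigONotation(\log\lengthT)$ per update.

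The only genuinely delicate points are (i) maintaining rolling hashes of \emph{arbitrary} substrings under length-changing updates, which I handle by pushing the rule $\hash(uv)=\hash(u)\hashBase^{\,s_v}+\hash(v)$ into the balanced-tree merge instead of relying on static prefix hashes, and (ii) making the per-query error bound survive a single fixed choice of $\hashBase$, which is controlled purely by the size of $\hashPrime$. The reduction of wildcard matching to at most $k'+1$ exact-equality tests is exactly the structural observation emphasized earlier in the paper, so I expect (i) to be the main implementation obstacle and the $1/\lengthT^{2}$ bookkeeping in (ii) to be the only subtle analytic point.
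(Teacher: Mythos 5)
Your proposal is correct and follows essentially the same route as the paper's proof: split the aligned ranges at the (merged) wildcard positions into at most $k'+1$ wildcard-free blocks, and certify the match by comparing rolling hashes of corresponding blocks obtained in $\BigONotation(\log n)$ time each from a dynamically maintained hash structure. The only differences are implementation-level --- you use treaps with split/merge (which also buys insert/delete support, handled separately in the paper via Remark~\ref{rem:insdel}) where the paper uses the segment tree of Lemma~\ref{lem:range_query_efficiency}, and you union-bound over the $k'+1$ per-block comparisons (needing $\hashPrime\ge n^4$) where the paper compares the two concatenated hashes once via Lemma~\ref{lem:pairwise-collision}.
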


%
%
%

\noindent\textbf{LCS Query.}  
The LCS query is defined as  
$
\texttt{LCS}(A, B),
$
which returns the length of the longest common substring of two dynamic strings \(A\) and \(B\).
By \cite{charalampopoulos2020dynamic}, we know that this query can be answered with \(\BigONotation(n \log^2 n)\) preprocessing time and \(\BigONotation(\log^8 n)\) time per update. 

\begin{theorem}[\cite{charalampopoulos2020dynamic}]
	\label{thm:lcs_query}
	For strings \(A\) and \(B\), each of length at most \(N\), the query 
	\(\texttt{LCS}(A, B)\) can be supported with \(\BigONotation(N \log^2 N)\) preprocessing time and 
	\(\BigONotation(\log^8 N)\) time per update operation. 
	Here an update operation may be a substitution, insertion, or deletion applied to either \(A\) or \(B\).
\end{theorem}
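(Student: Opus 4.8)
Since this theorem is quoted from Charalampopoulos, Gawrychowski, and Pokorski, my ``proof'' is really a sketch of how I would reconstruct their dynamic-LCS construction and account for the stated bounds. The plan is to build everything on top of a fully dynamic strings data structure (in the style of Mehlhorn--Sundar--Uhrig, Alstrup--Brodal--Rauhe, and Gawrychowski et al.) that stores both $A$ and $B$ and, in polylogarithmic time per operation, supports substitutions, insertions, and deletions, equality tests of arbitrary substrings via canonical signatures, and longest-common-extension (\textsf{LCE}) queries between positions of $A$ and $B$. This structure is built in near-linear time and is the source of most of the logarithmic factors in the final bound.

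The heart of the argument is to anchor the LCS at a small canonical set of positions, one level per length scale. For each scale $2^i$ with $0 \le i \le \log N$ I would maintain string synchronizing sets (Kempa--Kociumaka) $\mathsf{S}_i(A) \subseteq [1,|A|]$ and $\mathsf{S}_i(B) \subseteq [1,|B|]$ with parameter $\tau \asymp 2^i$: these have size $\BigONotation(N/2^i)$, are locally consistent (equal windows are synchronized identically), and hit every sufficiently long non-periodic window. Then any common substring of length in $[2^i,2^{i+1})$ is either (near-)periodic --- handled by a separate subroutine that tracks runs and their pairwise extensions --- or contains a synchronizing position $p\in\mathsf{S}_i(A)$ aligned with $q\in\mathsf{S}_i(B)$ whose bounded-length local contexts have equal signatures. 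For each scale I would keep a dictionary (a balanced \textsf{BST} keyed by the context signature) grouping the synchronizing positions of $A$ and of $B$; whenever some signature is represented on both sides, one candidate alignment $(p,q)$ is extracted and a two-sided \textsf{LCE} query yields the longest common substring passing through that anchor. The reported answer is the maximum over all $\BigONotation(\log N)$ scales together with the periodic case.

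For the update bound I would argue that a single symbol edit perturbs each $\mathsf{S}_i$ only within an $\BigONotation(2^i)$-length neighborhood, so only $\BigONotation(\mathrm{polylog}\,N)$ synchronizing positions change per scale (using amortization to absorb cascading changes), and for each such position the context signature and the corresponding \textsf{BST} entry are refreshed with $\BigONotation(\mathrm{polylog}\,N)$ work via the dynamic strings structure. Summing over the $\BigONotation(\log N)$ scales, together with re-evaluating the periodic subroutine and refreshing the maintained answer, yields the stated $\BigONotation(\log^8 N)$ amortized time per update. The preprocessing bound $\BigONotation(N\log^2 N)$ comes from constructing the dynamic strings structure, all synchronizing sets, and all dictionaries over the geometric sequence of scales.

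The main obstacle --- and the reason the original argument is lengthy --- is precisely the (near-)periodic case: synchronizing sets give no guarantees inside highly periodic regions, so one needs a separate mechanism that maintains the runs of $A$ and $B$ and their pairwise extensions under edits, and one must carefully amortize the number of synchronizing positions touched by a chain of updates across all scales. Since all of this is established in the cited work, we invoke \Cref{thm:lcs_query} as a black box in the sequel.
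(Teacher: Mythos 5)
There is nothing in the paper to compare your argument against: Theorem~\ref{thm:lcs_query} is stated with a citation to \cite{charalampopoulos2020dynamic} and used purely as a black box, with no in-paper proof. Your write-up handles this appropriately — it is an outline of how the cited dynamic-LCS result is obtained (a fully dynamic strings structure supporting substring equality and \textsf{LCE} queries, anchor positions at $\BigONotation(\log N)$ length scales, and a separate mechanism for the (near-)periodic case), and it ends by invoking the theorem as a black box, which is exactly what the paper does. One caution: your sketch should be read as a plausible roadmap rather than a verification of the stated bounds, since the cited construction differs in detail from what you describe (its anchors are derived from the locally consistent parsing maintained by the dynamic-strings structure rather than from string synchronizing sets, and the candidate anchor pairs are processed through reductions to auxiliary dynamic subproblems whose costs are what actually produce the $\BigONotation(\log^8 N)$ update time); none of those details are needed here, as the citation alone carries the statement for the purposes of this paper.
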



\section*{Handling Matching Query}\label{hashbased}


\noindent \textbf{{Collision Probability.}} A \emph{collision} occurs when two distinct strings \(S_1 \neq S_2\) yield the same hash value, i.e., \(\hashFunc{S_1} = \hashFunc{S_2}\). Since our hash function is chosen randomly from a universal family, we can bound the probability of such an event.

\begin{lemma}[Pairwise Collision Probability]
	\label{lem:pairwise-collision}
	For any two distinct strings \(S_1\) and \(S_2\), the probability that they collide under the hash function from Definition~\ref{def:hash_function} is
	$$
	\Pr[\hashFunc{S_1} = \hashFunc{S_2}] \;\leq\; \frac{\max(|S_1|,|S_2|)}{\hashPrime}.
	$$
	The probability is over the random choice of the base \(\hashBase\).
\end{lemma}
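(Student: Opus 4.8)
The plan is to recast the collision event as a univariate polynomial identity over the field \(\mathbb{F}_{\hashPrime}\) and then invoke the standard bound on the number of roots of a nonzero polynomial (one-variable Schwartz–Zippel). For \(j \in \{1,2\}\), define the integer polynomial \(f_j(x) = \sum_{i=1}^{|S_j|} \pi(S_{j,i})\, x^{|S_j|-i}\), so that by Definition~\ref{def:hash_function} we have \(\hashFunc{S_j} = f_j(\hashBase) \bmod \hashPrime\). Set \(Q(x) := f_1(x) - f_2(x)\). Then a collision \(\hashFunc{S_1} = \hashFunc{S_2}\) occurs if and only if the randomly chosen base \(\hashBase\) is a root of \(Q\) modulo \(\hashPrime\). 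Thus it suffices to (i) show that \(Q\) is not the zero polynomial over \(\mathbb{F}_{\hashPrime}\), and (ii) bound the number of its roots.

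For step (i) I would split into two cases. If \(|S_1| \neq |S_2|\), say \(|S_1| > |S_2|\), then \(\deg f_1 = |S_1|-1\) with leading coefficient \(\pi(S_{1,1}) \in \{1,\dots,|\alphabet|\}\), while \(\deg f_2 = |S_2|-1 < |S_1|-1\); hence \(Q\) has degree exactly \(|S_1|-1\) with leading coefficient \(\pi(S_{1,1})\), which is nonzero modulo \(\hashPrime\) whenever \(\hashPrime > |\alphabet|\). If \(|S_1| = |S_2| =: m\), then since \(S_1 \neq S_2\) there is a position \(i^\star\) with \(S_{1,i^\star} \neq S_{2,i^\star}\), so the coefficient of \(x^{m-i^\star}\) in \(Q\) is \(\pi(S_{1,i^\star}) - \pi(S_{2,i^\star})\), a nonzero integer of absolute value at most \(|\alphabet|-1 < \hashPrime\), hence nonzero mod \(\hashPrime\). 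In both cases \(Q \not\equiv 0\) over \(\mathbb{F}_{\hashPrime}\) and \(\deg Q \le \max(|S_1|,|S_2|) - 1\). For step (ii), a nonzero degree-\(d\) polynomial over a field has at most \(d\) roots, so \(Q\) has at most \(\max(|S_1|,|S_2|)-1\) roots in \(\mathbb{F}_{\hashPrime}\). Since \(\hashBase\) is uniform on the \(\hashPrime-1\) values \(\{1,\dots,\hashPrime-1\} \subseteq \mathbb{F}_{\hashPrime}\), the collision probability is at most \(\frac{\max(|S_1|,|S_2|)-1}{\hashPrime-1} \le \frac{\max(|S_1|,|S_2|)}{\hashPrime}\), the last step using \(d/(\hashPrime-1) \le (d+1)/\hashPrime\) for \(d \le \hashPrime-1\). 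This gives the claimed bound.

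The argument is largely routine; the one point needing care is the nonzero-modulo-\(\hashPrime\) claim in step (i), which is why \(\hashPrime\) must exceed the alphabet size \(|\alphabet|\) (a standing assumption, since \(\hashPrime\) is chosen polynomially large): each coefficient of \(Q\) is an integer of magnitude strictly less than \(|\alphabet|\), so none of them vanishes mod \(\hashPrime\). The unequal-length case is handled cleanly by observing that the top-degree term of \(Q\) comes entirely from the longer string, so no zero-padding of the shorter string is required. I note that the union-bound form needed elsewhere, Lemma~\ref{lem:pairs-collision-union}, then follows immediately by applying this lemma to each of the \(N\) pairs (each of length at most \(x\)) and taking a union bound.
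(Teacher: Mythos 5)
Your proof is correct and follows essentially the same route as the paper's: recast the collision as a root of the difference polynomial $Q(x)$ and bound the number of roots by its degree. You are in fact slightly more careful than the paper on two points it glosses over — verifying that $Q\not\equiv 0$ over $\mathbb{F}_{\hashPrime}$ (which needs $\hashPrime>|\alphabet|$, guaranteed by the choice $\hashPrime>n^3$) and accounting for the base being uniform over $\hashPrime-1$ rather than $\hashPrime$ values — both of which you resolve correctly.
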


\begin{proof}
	The hash values are computed as:
	$$
	\hashFunc{S_1} = R_{S_1}(\hashBase), \quad 
	\hashFunc{S_2} = R_{S_2}(\hashBase) 
	$$
	where \(R_{S_1}(x)\) and \(R_{S_2}(x)\) are polynomials of degrees \(|S_1|-1\) and \(|S_2|-1\) respectively.
	
	Since \(S_1 \neq S_2\), the polynomial \(Q(x) = R_{S_1}(x) - R_{S_2}(x)\) is a non-zero polynomial of degree at most \(\max(|S_1|,|S_2|)-1\). 
	
	The hashes collide when \(Q(\hashBase) \equiv 0 \pmod{\hashPrime}\), i.e., when \(\hashBase\) is a root of \(Q(x)\) modulo \(\hashPrime\).
	
	By the Schwartz--Zippel lemma~\cite{schwartz1980fast}, a non-zero polynomial of degree \(d\) over a finite field has at most \(d\) roots. Since we work modulo prime \(\hashPrime\), we have:
	$$
	\Pr\!\left[\hashFunc{S_1} = \hashFunc{S_2}\right] 
	\;\le\; \frac{\max(|S_1|,|S_2|)-1}{\hashPrime} 
	\;<\; \frac{\max(|S_1|,|S_2|)}{\hashPrime}.
	$$
\end{proof}

\collisionprob*

\begin{proof}
	For each \(i\in[N]\), let \(C_i\) denote the event \(\hashFunc{S_i}=\hashFunc{S'_i}\).
	By Lemma~\ref{lem:pairwise-collision}, for each \(i\),
	\(\Pr[C_i]\le \tfrac{x}{\hashPrime}\) since both strings have length at most \(x\).
	By the union bound,
	$$
	\Pr\!\left[\bigcup_{i=1}^{N} C_i\right]
	\;\le\; \sum_{i=1}^{N} \Pr[C_i]
	\;\le\; \frac{N\,x}{\hashPrime}.
	$$
\end{proof}

\noindent\textbf{{Parameter choice.}}
In our application, for each alignment of the pattern \(P\) against the text we consider pairs 
\((S_i, S'_i)\) defined as follows. For the \(i\)-th alignment, let
$$
X_i = T[i : i+m-1]
$$
be the length-\(m\) substring of the text. Let the wildcard positions in \(P\) be 
\(w_1,\dots,w_{k'}\), and the wildcard positions in \(X_i\) be 
\(W^{(i)}_1,\dots,W^{(i)}_x\). Define 
$$
U_i = \{w_1,\dots,w_{k'}\}\cup\{W^{(i)}_1,\dots,W^{(i)}_x\},
$$
and sort \(U_i\) as \(u^{(i)}_1 < u^{(i)}_2 < \cdots < u^{(i)}_r\). 
We then split both \(P\) and \(X_i\) at these cut points and remove the wildcard symbols, concatenating 
the remaining fixed blocks.

Concretely,
$$
S_i = P[1:u^{(i)}_1-1] \concat P[u^{(i)}_1+1:u^{(i)}_2-1] \concat \cdots \concat P[u^{(i)}_r+1:m],
$$
$$
S'_i = X_i[1:u^{(i)}_1-1] \concat X_i[u^{(i)}_1+1:u^{(i)}_2-1] \concat \cdots \concat X_i[u^{(i)}_r+1:m].
$$

Thus \((S_i,S'_i)\) represents the pair obtained from the alignment of \(P\) with \(X_i\) after 
removing wildcards in the same positions.  
By Lemma~\ref{lem:pairs-collision-union}, across all \(n-m+1\) alignments the probability that 
any such pair collides is bounded by
$$
\frac{(n-m+1)\,m}{\hashPrime} \;\le\; \frac{n^2}{\hashPrime}.
$$
Choosing \(\hashPrime > n^3\) therefore guarantees that the collision probability in our 
problem is at most \(1/n\).

\noindent\textbf{{Computational complexity.}} Polynomial rolling hash functions have several computational properties that make them well-suited for our problem.  
By precomputing \(\hashBase^k \bmod \hashPrime\) for \(0 \leq k \leq |S|\) in \(\BigONotation(|S|)\) time, we can:
\begin{itemize}
	\item Compute all prefix hashes \(\hashFunc{S_{1:i}}\) in \(\BigONotation(|S|)\) time.
	\item Retrieve the hash of any substring \(S_{l:r}\) in \(\BigONotation(1)\) time using the rolling-hash property.
\end{itemize}
An additional useful feature is that the hash of a concatenated string \(\hashFunc{S \concat S'}\) can be derived directly from the hashes of \(S\) and \(S'\) in constant time, as we formalize in Lemma~\ref{lem:concat_hash_efficiency}.  

\begin{lemma}
	\label{lem:concat_hash_efficiency}
	Let \(S\) and \(S'\) be two strings over the same ordered alphabet, with lengths \(|S| = \ell_1\) and \(|S'| = \ell_2\).  
	Let \(\hashFunc{\cdot}\) be the polynomial rolling hash function from Definition~\ref{def:hash_function}.  
	If \(\hashBase^i \bmod \hashPrime\) are precomputed for all \(0 \leq i \leq \max(\ell_1, \ell_2)\), then \(\hashFunc{S \concat S'}\) can be computed from \(\hashFunc{S}\) and \(\hashFunc{S'}\) in \(\BigONotation(1)\) time.
\end{lemma}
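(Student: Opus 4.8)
The plan is to simply unfold the definition of the polynomial rolling hash on the concatenated string and split the defining sum at the boundary between $S$ and $S'$. Write $L = \ell_1 + \ell_2$ and let $Z = S \concat S'$, so $Z_i = S_i$ for $1 \le i \le \ell_1$ and $Z_i = S'_{i-\ell_1}$ for $\ell_1 < i \le L$. By Definition~\ref{def:hash_function},
$$
\hashFunc{Z} = \left(\sum_{i=1}^{L} \pi(Z_i)\, b^{\,L-i}\right) \bmod \hashPrime
= \left(\sum_{i=1}^{\ell_1} \pi(S_i)\, b^{\,L-i} \;+\; \sum_{i=\ell_1+1}^{L} \pi(S'_{i-\ell_1})\, b^{\,L-i}\right)\bmod \hashPrime .
$$

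Next I would identify each of the two partial sums with a scaled copy of $\hashFunc{S}$ and $\hashFunc{S'}$ respectively. For the first sum, factor $b^{\ell_2}$ out of every term: since $L - i = \ell_2 + (\ell_1 - i)$, we get $\sum_{i=1}^{\ell_1} \pi(S_i)\, b^{\,L-i} = b^{\ell_2}\sum_{i=1}^{\ell_1}\pi(S_i)\,b^{\,\ell_1-i} \equiv b^{\ell_2}\,\hashFunc{S} \pmod{\hashPrime}$. For the second sum, re-index by $j = i - \ell_1$ running from $1$ to $\ell_2$; then $L - i = \ell_2 - j$, so $\sum_{i=\ell_1+1}^{L}\pi(S'_{i-\ell_1})\,b^{\,L-i} = \sum_{j=1}^{\ell_2}\pi(S'_j)\,b^{\,\ell_2-j} \equiv \hashFunc{S'}\pmod{\hashPrime}$. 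Combining the two identities yields the closed form
$$
\hashFunc{S \concat S'} = \bigl(b^{\ell_2}\cdot \hashFunc{S} + \hashFunc{S'}\bigr) \bmod \hashPrime .
$$

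Finally I would observe the cost: given $\hashFunc{S}$, $\hashFunc{S'}$, and the precomputed value $b^{\ell_2}\bmod \hashPrime$ (available since $\ell_2 \le \max(\ell_1,\ell_2)$), evaluating the right-hand side is one modular multiplication and one modular addition, hence $\BigONotation(1)$ time. There is no real obstacle here; the only thing to be careful about is bookkeeping of the exponents $L-i$ under the split and re-indexing, and making sure the precomputed power table indeed reaches index $\ell_2$, which is guaranteed by the hypothesis that $b^i\bmod\hashPrime$ is available for all $0\le i\le\max(\ell_1,\ell_2)$.
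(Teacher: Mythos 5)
Your proposal is correct and follows essentially the same route as the paper: split the defining sum of \(\hashFunc{S\concat S'}\) at the boundary, factor \(b^{\ell_2}\) out of the first part to obtain \(b^{\ell_2}\cdot\hashFunc{S}\), identify the second part with \(\hashFunc{S'}\), and conclude with the identity \(\hashFunc{S \concat S'} = (b^{\ell_2}\cdot \hashFunc{S} + \hashFunc{S'}) \bmod \hashPrime\) evaluated in \(\BigONotation(1)\) time using the precomputed power. Your write-up is, if anything, slightly more explicit about the exponent bookkeeping and re-indexing than the paper's.
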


\begin{proof}
	By the definition of the polynomial rolling hash, we have:
	$$
	\hashFunc{S} = \sum_{i=1}^{\ell_1} \pi(S_{i}) \cdot \hashBase^{\,\ell_1-i} \bmod \hashPrime,
	$$
	$$
	\hashFunc{S'} = \sum_{j=1}^{\ell_2} \pi(S'_{j}) \cdot \hashBase^{\,\ell_2-j} \bmod \hashPrime.
	$$
	The concatenation \(S \concat S'\) has length \(\ell_1 + \ell_2\), and its hash is:
	$$
	\hashFunc{S \concat S'}
	= \left( \sum_{i=1}^{\ell_1} \pi(S_{i}) \cdot \hashBase^{\,\ell_1+\ell_2-i}
	+ \sum_{j=1}^{\ell_2} \pi(S'_{j}) \cdot \hashBase^{\,\ell_2-j} \right) \bmod \hashPrime.
	$$
	The first term can be factored as:
	$$
	\left( \sum_{i=1}^{\ell_1} \pi(S_{i}) \cdot \hashBase^{\,\ell_1-i} \right) \cdot \hashBase^{\ell_2}
	= \hashFunc{S} \cdot \hashBase^{\ell_2} \bmod \hashPrime.
	$$
	Thus,
	$$
	\hashFunc{S \concat S'}
	= \left( \hashFunc{S} \cdot \hashBase^{\ell_2} + \hashFunc{S'} \right) \bmod \hashPrime.
	$$
	
	If \(\hashBase^{\ell_2} \bmod \hashPrime\) is precomputed, this formula requires only one modular multiplication, one modular addition, and one modular reduction, all of which take \(\BigONotation(1)\) time.  
	Therefore, the concatenation hash can be computed in \(\BigONotation(1)\) time.
\end{proof}

To extend the polynomial rolling hash to patterns containing wildcards,  
we define a modified hash function that ignores wildcard positions and hashes only the remaining symbols.  
This allows us to compare strings against wildcard patterns using efficient hash computations.

\begin{lemma}
	\label{lem:range_query_efficiency}
	Given a string $S$ of length $\ell$, we can preprocess $S$ in $\BigONotation(\ell)$ time so that, thereafter, two operations are supported throughout any sequence of changes: (i) \emph{range-hash queries}, which on input $1\le l\le r\le \ell$ return $\hashFunc{S_{l:r}}$ in $\BigONotation(\log \ell)$ time; and (ii) \emph{point updates}, which replace a single symbol $S_i\leftarrow \sigma$ in $\BigONotation(\log \ell)$ time. 
\end{lemma}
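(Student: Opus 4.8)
The plan is to maintain a balanced binary search tree (a treap, say, or any augmented balanced BST) whose leaves store the symbols of $S$ in left-to-right order, and where every node is augmented with (a) the length of the substring corresponding to its subtree and (b) the polynomial rolling hash $\hashFunc{\cdot}$ of that substring. The preprocessing simply builds a balanced tree over the $\ell$ symbols and fills in the augmented values bottom-up; using Lemma~\ref{lem:concat_hash_efficiency} to combine a left child's hash with a right child's hash (after precomputing the powers $\hashBase^i \bmod \hashPrime$ for $0\le i\le \ell$ in $\BigONotation(\ell)$ time), each internal node's hash is computed in $\BigONotation(1)$ from its children, so the whole build is $\BigONotation(\ell)$.

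For a \emph{point update} $S_i \leftarrow \sigma$, first locate the $i$-th leaf by descending the tree using the subtree-size augmentation ($\BigONotation(\log\ell)$ steps), overwrite its stored symbol and its (trivial length-one) hash $\pi(\sigma)$, and then walk back up to the root recomputing the length and hash of each ancestor via the constant-time concatenation identity of Lemma~\ref{lem:concat_hash_efficiency}; this touches $\BigONotation(\log\ell)$ nodes. (For a treap the rotations triggered by rebalancing are handled the same way: a rotation only re-parents $\BigONotation(1)$ subtrees, and the two affected nodes' augmented values are recomputed from their children in $\BigONotation(1)$, so the amortized/expected cost stays $\BigONotation(\log\ell)$.) For a \emph{range-hash query} on $[l,r]$, I would do the standard split-or-decomposition: either split the tree at positions $l-1$ and $r$ to expose a single subtree whose hash is exactly $\hashFunc{S_{l:r}}$ (then split back, or re-merge), or — to avoid structural changes — walk the two root-to-leaf paths to $l$ and $r$ and collect the $\BigONotation(\log\ell)$ canonical subtrees that together tile $[l,r]$, concatenating their hashes left-to-right using Lemma~\ref{lem:concat_hash_efficiency}; each combination is $\BigONotation(1)$ and there are $\BigONotation(\log\ell)$ of them, so the query is $\BigONotation(\log\ell)$.

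Correctness of the query reduces to the associativity of the concatenation identity $\hashFunc{A\concat B}=(\hashFunc{A}\cdot\hashBase^{|B|}+\hashFunc{B})\bmod\hashPrime$: since the canonical subtrees, read left to right, partition $S_{l:r}$ into consecutive blocks, folding their hashes in order yields precisely $\hashFunc{S_{l:r}}$, and the required powers $\hashBase^{|B|}$ are among the precomputed ones since every block has length at most $\ell$. The main point requiring care — the only genuine obstacle — is ensuring that the concatenation formula composes correctly when combining \emph{more than two} pieces (so that the accumulated hash of the first $j$ blocks is treated as a single left operand of length equal to the sum of their lengths): this is exactly what the length augmentation is for, and it follows by a one-line induction on $j$ from Lemma~\ref{lem:concat_hash_efficiency}. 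Everything else is the textbook machinery of order-statistics / augmented balanced BSTs, for which the $\BigONotation(\log\ell)$ bounds on search, update, and split/merge are standard (Fact~\ref{fact:balanced-bst}, \cite{CLRS,seidel1996randomized}).
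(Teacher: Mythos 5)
Your proposal is correct and follows essentially the same route as the paper's proof: the paper uses a segment tree over $\{1,\dots,\ell\}$ storing per-node lengths and hashes, builds it bottom-up in $\BigONotation(\ell)$ after precomputing powers of $\hashBase$, updates along the root-to-leaf path, and answers range queries by folding the $\BigONotation(\log\ell)$ canonical segments left-to-right with the concatenation identity of Lemma~\ref{lem:concat_hash_efficiency} --- exactly your second (non-splitting) query variant. Since only substitutions occur and $\ell$ is fixed, no rebalancing ever happens, so your augmented balanced BST is effectively the paper's static segment tree, and your remarks about treap rotations and split/merge are superfluous but harmless.
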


\begin{proof}
	We build a segment tree~\cite{CLRS} $T$ over the index set $\{1,\dots,\ell\}$. For each node $v$ that represents an interval $I_v=[l_v,r_v]$, maintain its length $\mathrm{len}(v)=r_v-l_v+1$ and the hash $\hashFunc{S_{l_v:r_v}}$. Before building the tree, precompute the powers $\hashBase^{k}\bmod \hashPrime$ for all $0\le k\le \ell$. This table is computed by a simple linear pass and uses $\BigONotation(\ell)$ time and space.
		The tree is constructed bottom–up. At the leaves we have $I_v=[i,i]$ and we set $\hashFunc{S_{i:i}}$. For an internal node $v$ with left child $u$ and right child $w$ (so $I_v=I_u\circ I_w$ and $\mathrm{len}(v)=\mathrm{len}(u)+\mathrm{len}(w)$), compute
	$$
	\hashFunc{S_{l_v:r_v}}\;=\;\big(\hashFunc{S_{l_u:r_u}}\cdot \hashBase^{\mathrm{len}(w)} + \hashFunc{S_{l_w:r_w}}\big)\bmod \hashPrime,
	$$
	which is exactly the concatenation rule from Lemma~\ref{lem:concat_hash_efficiency}. Since the segment tree has $\BigONotation(\ell)$ nodes and each node’s hash is obtained from its two children by a constant-time formula, the total preprocessing time is $\BigONotation(\ell)$.
	
	To handle a point update $S_i\leftarrow \sigma$, change the leaf hash at $[i,i]$ to $\hashFunc{\sigma}$ and then recompute the hash at every ancestor on the unique path to the root using the same concatenation formula. The height of the tree is $\BigONotation(\log \ell)$, and each recomputation is constant time, so the update takes $\BigONotation(\log \ell)$ time.
	
	To answer a range-hash query $\hashFunc{S_{l:r}}$, decompose $[l,r]$ into the usual $\BigONotation(\log \ell)$ canonical segments of the tree listed from left to right. Sweep once over these segments while maintaining an accumulator $\hashFunc{\cdot}_{\mathrm{acc}}$ initialized to $0$. When processing a segment represented by node $v$, update
	$$
	\hashFunc{\cdot}_{\mathrm{acc}} \;\gets\; \big(\hashFunc{\cdot}_{\mathrm{acc}}\cdot \hashBase^{\mathrm{len}(v)} + \hashFunc{S_{l_v:r_v}}\big)\bmod \hashPrime.
	$$
	By repeated application of the same concatenation rule, after the sweep $\hashFunc{\cdot}_{\mathrm{acc}}$ equals $\hashFunc{S_{l:r}}$. The sweep touches $\BigONotation(\log \ell)$ segments and performs $\BigONotation(1)$ work per segment, so the query runs in $\BigONotation(\log \ell)$ time.
\end{proof}

\hashmatch*
	
	\begin{proof}
		Suppose there are \( k' \) wildcard positions inside \([l_1, r_1]\), written in increasing order as \( w_1 < \cdots < w_{k'} \). These wildcards partition \([l_1, r_1]\) into at most \( k' + 1 \) blocks. Define for \( j = 0, \dots, k' \):
		$$
		s_j \;=\; \begin{cases}
			l_1 & j = 0,\\
			w_j + 1 & j \ge 1,
		\end{cases}
		\qquad
		e_j \;=\; \begin{cases}
			w_{j+1} - 1 & j < k',\\
			r_1 & j = k',
		\end{cases}
		$$
		and keep only those indices with \( s_j \le e_j \) (empty blocks arise if two wildcards are adjacent and can be skipped). In words, the blocks are
		$
		P_{s_0:e_0}, \; P_{s_1:e_1}, \; \dots, \; P_{s_{k'}:e_{k'}},
		$
		each containing only fixed symbols of the pattern. Under the alignment \( P_{l_1:r_1} \leftrightarrow T_{l_2,r_2} \), the \( j \)-th pattern block corresponds to the text block
		$$ T_{s'_j:e'_j} = T_{l_2 + (s_j - l_1):l_2 + (e_j - l_1)} $$
		because we match positions by equal offsets from the left endpoints.
		
		We test each block pair via hashing. By Lemma~\ref{lem:range_query_efficiency}, after \( \BigONotation(\ell) \) preprocessing we can obtain \(\hashFunc{P_{s_j:e_j}}\) and \(\hashFunc{T_{s'_j:e'_j}}\) in \( \BigONotation(\log \ell) \) time per block. Comparing the two hashes then certifies equality of that block with collision probability for the chosen modulus. There are at most \( k' + 1 \) non-empty blocks, so the total time is \( \BigONotation((k' + 1) \log \ell) = \BigONotation(k' \log \ell) \).
		
		We verify the overall alignment by checking whether $$
		\hashFunc{P_{s_0:e_0} \concat P_{s_1:e_1} \concat \dots \concat P_{s_{k'}:e_{k'}}} = \hashFunc{T_{s'_0:e'_0} \concat T_{s'_1:e'_1} \concat \dots \concat T_{s'_{k'}:e'_{k'}}}.
		$$ By Lemma~\ref{lem:pairwise-collision}, the probability of a collision for this hash is at most \( \frac{1}{n^{2}} \).
	\end{proof}

	\section{Fully Dynamic Pattern Matching with Wildcards}

\label{sec:dynamic}

In this section, we consider the general case of \emph{dynamic pattern matching with wildcards problem}. Note that here both the text \(\textstr\) and the pattern \(\pattern\) may contain wildcard symbols. We assume that updates consist only of single-character substitutions, and do not include insertions or deletions. 
This simplifies the presentation of the main ideas. 
We will later explain in Remark~\ref{rem:insdel} how the algorithm extends to handle insertions and deletions as well. 

Our solution is based on maintaining a modified version of the text, denoted by \textsf{T'}, where every rare symbol is replaced with a special placeholder `\#`. Therefore, \textsf{T'} is defined as
$$
\textsf{T'}_i \;=\;
\begin{cases}
	\# & \text{if } \textsf{T}_i \text{ is rare}, \\
	\textsf{T}_i & \text{if } \textsf{T}_i \text{ is frequent}.
\end{cases}
$$
Note that the frequency of symbols may change during the dynamic updates.
Whenever the frequency of a symbol crosses the threshold \(\tau\), we update \textsf{T'} by replacing its occurrences with `\#` (if it becomes rare) or restoring them to the original symbol (if it becomes frequent). To implement this, we maintain for each symbol \(c\in\Sigma\) the occurrence set \(\mathfrak{R}(c)\). When the text is updated at position \(i\) from \(c_{\text{old}}\) to \(c_{\text{new}}\), we remove \(i\) from \(\mathfrak{R}(c_{\text{old}})\) and insert it into \(\mathfrak{R}(c_{\text{new}})\). If the frequency of a frequent symbol falls below \(\tau\), we mark its \(\tau\) occurrences in \textsf{T'} with the placeholder `\#`; conversely, if a rare symbol becomes frequent, we restore its actual symbol at those positions. Each such change involves only \(\BigONotation(\tau)\) positions, giving an update cost of \(\BigONotation(\tau)\). The full procedure is formalized in Algorithm~\ref{alg:dynamic-maintenance}.

The preprocessing phase (Algorithm~\ref{alg:dynamic-preprocess}) initializes the data structures, including building \textsf{T'}, maintaining occurrence sets \(\mathfrak{R}(\cdot)\) for each symbol, and computing the frequent set \(\textsf{F}\). After each update, Algorithm~\ref{alg:dynamic-maintenance} efficiently maintains these structures.

Given this setup, the matching procedure proceeds differently depending on whether the pattern contains a rare symbol. If it does (Case~1, Algorithm~\ref{alg:dynamic-case1}), then any valid match must align that rare symbol either with one of its actual occurrences in the text or with one of the wildcard positions in the text. This observation allows us to restrict attention to candidate substrings centered around these positions in \textsf{T}, checking them directly for consistency with the rest of the pattern. 

If, on the other hand, the pattern contains no rare symbols (Case~2, Algorithm~\ref{alg:dynamic-case2}), then each wildcard may be filled with either a frequent symbol or the placeholder `\#`. We systematically enumerate all possible completions of both the pattern and the text under this rule and verify each candidate against \textsf{T'} using efficient \textsf{LCS} queries. Since the number of completions depends only on the number of wildcards and the frequency threshold, this approach avoids recomputation from scratch after updates and remains tractable when the number of wildcards is small.

The overall procedure is summarized in Algorithm~\ref{alg:dynamic-master}, and the correctness and efficiency of the algorithm are proved in Theorem~\ref{thm:dynamic_wildcard_general}.

\begin{algorithm}[t]
	\caption{Preprocess for Dynamic Pattern Matching with Wildcards}
	\label{alg:dynamic-preprocess}
	\KwIn{Pattern \(\pattern\) of length \(m\), text \(\textstr\) of length \(n\)}
	\KwOut{Modified text \textsf{T'}, occurrence sets \(\mathfrak{R}(\cdot)\), frequent set \(\textsf{F}\)}
	\ForEach{$i \in [1..n]$}{
		\If{\(\textsf{T}_i\) is rare}{
			\(\textsf{T'}_i \gets \#\) \tcp*{replace rare with \(\#\)}
		}
		\Else{\(\textsf{T'}_i \gets \textsf{T}_i\)}
	}
	
	Build \(\mathfrak{R}(c) = \{\,i : \textsf{T}_i=c\,\}\) for each symbol \(c \in \alphabet \cup \{\wildcard\}\) \tcp*{for candidates in Case 1}
	Compute \(\textsf{F} = \{\,c\in\Sigma : |\mathfrak{R}(c)| \ge \tau\,\}\)	\tcp*{Frequent set (for Case 2)}
	Build the \texttt{is\_match} query data structure \tcp*{\(\BigONotation(n)\) preprocessing time}
	Build the \texttt{LCS} query data structure \tcp*{\(\BigONotation(n\log^2 n)\) preprocessing time}
\end{algorithm}

\begin{algorithm}[t]
	\caption{Case 1: \(\pattern\) contains a rare symbol}
	\label{alg:dynamic-case1}
	\KwIn{Pattern \(\pattern\) (with a rare symbol \(c\) at position \(\texttt{pos}\)), text \textsf{T}; sets \(\mathfrak{R}(.)\)}
	\KwOut{Whether \(\pattern\) matches a substring of \(\textstr\)}
	\ForEach{\(i \in \big(\mathfrak{R}(c)\ \cup\ \mathfrak{R}(?)\big)\)}{
		\If{\(\texttt{is\_match}(\pattern,\ \textstr_{\,i-\texttt{pos}+1\ :\ i-\texttt{pos}+m})\)}{
			\Return Match Found
		}
	}
	\Return No Match Found
\end{algorithm}
\begin{algorithm}[t]
	\caption{Case 2: No rare symbols in \(\pattern\) (joint completions)}
	\label{alg:dynamic-case2}
	\KwIn{Pattern \(\pattern\), modified text \textsf{T'}; frequent set \(\textsf{F}\); total wildcards \(k\) across \(\pattern\) and \textsf{T'}}
	\KwOut{Whether \(\pattern\) matches a substring of \(\textstr\)}
	
	\tcp{Enumerate joint completions over \(\textsf{F}\cup\{\#\}\) in Gray-code order}
	\ForEach{completion vector \((a_1,\dots,a_k)\in (\textsf{F}\cup\{\#\})^k\) in Gray-code order}{
		Apply the single-coordinate change implied by \((a_1,\dots,a_k)\) to obtain \((\widetilde{\pattern},\,\widetilde{\textstr})\)\;
		\If{\(\textsf{LCS}(\widetilde{\pattern},\,\widetilde{\textstr}) = m\)}{
			\Return Match Found
		}
	}
	\Return No Match Found
\end{algorithm}

\begin{algorithm}[t]
	\caption{Main Algorithm: Dynamic Pattern Matching with Wildcards}
	\label{alg:dynamic-master}
	
	\SetKwFunction{Preprocess}{Preprocess}           
	\SetKwFunction{Maintain}{Maintain}               
	\SetKwFunction{CaseOne}{CaseOne}                 
	\SetKwFunction{CaseTwo}{CaseTwo}                 
	
	\KwIn{Pattern \(\pattern\) of length \(m\), text \(\textstr\) of length \(n\), stream of updates}
	\KwOut{After each update, whether \(\pattern\) matches a substring of \(\textstr\)}
	
	\Preprocess{\(\pattern, \textstr\)} \tcp*{One-time preprocessing (Alg.~\ref{alg:dynamic-preprocess})}
	
	\ForEach{update on \(\pattern\) or \(\textstr\)}{
		\Maintain{\(\text{current update}\)} \tcp*{Maintain occurrences and \textsf{T'} in \(\BigONotation(\tau)\) (Alg.~\ref{alg:dynamic-maintenance})}
		
		\eIf{\(\pattern\) contains a rare symbol}{
			\tcp{Case 1 (Alg.~\ref{alg:dynamic-case1})}
			\If{\CaseOne{\(\pattern, \textstr\)} \emph{is} \texttt{true}}{
				\Return Match Found
			}
			\Else{
				\Return Match Not Found
			}
		}{
			\tcp{Case 2 (Alg.~\ref{alg:dynamic-case2})}
			\If{\CaseTwo{\(\pattern, \textsf{T'}, \textsf{F}, k\)} \emph{is} \texttt{true}}{
				\Return Match Found
			}
			\Else{
				\Return No Match Found
			}
		}
	}
\end{algorithm}

\begin{theorem}
	\label{thm:dynamic_wildcard_general}
	Algorithm~\ref{alg:dynamic-master} maintains the dynamic pattern matching with wildcards problem with per-query time
	$$
	\Theta\!\Bigl(kn^{\tfrac{k}{k+1}} \,\log^{\tfrac{k+8}{k+1}} n + k^{2}\log n \Bigr).
	$$
	The algorithm always reports a match when one exists, and reports no match with high probability
	when no match exists.
\end{theorem}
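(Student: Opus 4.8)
The plan is to prove correctness and the time bound for Algorithm~\ref{alg:dynamic-master} by treating its two branches separately, after first checking that the threshold $\tau=(n^{k}\log^{7}n)^{1/(k+1)}$ (which equals $n^{k/(k+1)}(\log n)^{7/(k+1)}$) balances them and that the structures of Algorithm~\ref{alg:dynamic-preprocess} are maintainable cheaply. Per single‑symbol substitution, the occurrence sets $\mathfrak{R}(\cdot)$ cost $\BigONotation(\log n)$ (Fact~\ref{fact:balanced-bst}); the \texttt{is\_match} structure, kept on $(\pattern,\textstr)$, costs $\BigONotation(\log n)$ (Lemma~\ref{thm:is_match_hash_query}); and the \textsf{LCS} structure, kept on $\pattern$ and $\textsf{T'}$ with wildcards replaced by a fixed sentinel, costs $\BigONotation(\log^{8}n)$ (Theorem~\ref{thm:lcs_query}). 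The one delicate point is a symbol crossing the frequency threshold: this rewrites $\BigONotation(\tau)$ positions of $\textsf{T'}$ and hence triggers that many \textsf{LCS} updates, but under a two‑sided (hysteresis / lazy‑rebuild) version of the threshold, consecutive crossings of the same symbol are $\Omega(\tau)$ updates apart, so the $\BigONotation(\tau\log^{8}n)$ rewrite amortizes to $\BigONotation(\log^{8}n)$ per update; this is the only place the time bound needs amortization. Finally, whether $\pattern$ currently contains a rare symbol and, if so, a witnessing position, is tracked in $\BigONotation(\log n)$ per update using per‑symbol multiplicities in $\pattern$ and, for each $c$, a balanced tree of $\{j:\pattern_j=c\}$ (a threshold crossing toggles only one symbol's rarity flag).

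\textbf{Correctness, Case 1.}
If $\pattern_{\texttt{pos}}=c$ is rare, then in every occurrence $i\in\occ(\pattern,\textstr)$ the symbol $c$ is aligned with $\textstr_{i+\texttt{pos}-1}$, which must equal $c$ or be $\wildcard$; hence $i+\texttt{pos}-1\in\mathfrak{R}(c)\cup\mathfrak{R}(\wildcard)$ and $i$ is among the starting positions tested by Algorithm~\ref{alg:dynamic-case1}. Each tested position is verified by \texttt{is\_match}, which by Lemma~\ref{thm:is_match_hash_query} returns \texttt{true} with probability $1$ when the alignment genuinely matches and errs with probability at most $1/n^{2}$ otherwise; since the loop tests at most $|\mathfrak{R}(c)|+|\mathfrak{R}(\wildcard)|<\tau+k\le n$ positions, a union bound shows Case~1 is correct with probability at least $1-1/n$ and, unconditionally, never reports a false negative.

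\textbf{Correctness, Case 2.}
Here every symbol of $\pattern$ is frequent, and the heart of the argument is the equivalence: $\occ(\pattern,\textstr)\neq\emptyset$ iff some completion vector in $(\textsf{F}\cup\{\#\})^{k}$ makes $\textsf{LCS}(\widetilde{\pattern},\widetilde{\textstr})=m$, where $\widetilde{\pattern},\widetilde{\textstr}$ are the completions of $\pattern$ and of $\textsf{T'}$ under that vector (Algorithm~\ref{alg:dynamic-case2}); note $\textsf{LCS}(\widetilde{\pattern},\widetilde{\textstr})=m$ is equivalent to $\widetilde{\pattern}$ occurring in $\widetilde{\textstr}$ since $|\widetilde{\pattern}|=m$. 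For the forward direction, from an occurrence at position $i$ I would fill each wildcard of $\pattern$ at offset $j$ with $\textsf{T'}_{i+j-1}\in\textsf{F}\cup\{\#\}$, each wildcard of $\textstr$ at an in‑window offset $j$ with $\pattern_j\in\textsf{F}$, and out‑of‑window text wildcards arbitrarily; then a position‑by‑position check (an aligned pair of fixed symbols must be equal, hence frequent, and $\textsf{T'}$ agrees with $\textstr$ exactly on frequent symbols) gives $\widetilde{\pattern}=\widetilde{\textstr}_{i:i+m-1}$. Conversely, if $\widetilde{\pattern}=\widetilde{\textstr}_{i:i+m-1}$ for some completion, then offsets that are wildcards of $\pattern$ or of $\textstr$ satisfy the matching condition trivially, and at any other offset $j$ we have $\pattern_j=\widetilde{\textstr}_{i+j-1}\in\Sigma$, which rules out $\widetilde{\textstr}_{i+j-1}=\#$ and forces $\textsf{T'}_{i+j-1}=\textstr_{i+j-1}=\pattern_j$, so $\pattern$ matches $\textstr_{i:i+m-1}$. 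Because the \textsf{LCS} query is exact, Case~2 is deterministically correct; thus the Case~1 branch is the only source of error, which yields exactly the guarantee in the statement (a match is always reported when one exists; no match is reported with high probability when none exists).

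\textbf{Running time and the main obstacle.}
After the amortized‑polylogarithmic maintenance, one branch runs. Case~1 loops over fewer than $\tau+k$ candidates, each costing $\BigONotation(k\log n)$ by Lemma~\ref{thm:is_match_hash_query}, for $\BigONotation((\tau+k)k\log n)=\BigONotation(k\tau\log n+k^{2}\log n)$; and $k\tau\log n=\BigONotation\!\bigl(k\,n^{k/(k+1)}(\log n)^{(k+8)/(k+1)}\bigr)$ by the value of $\tau$. Case~2 enumerates the completion vectors in Gray‑code order, doing one \textsf{LCS} update of cost $\BigONotation(\log^{8}n)$ per step plus $\BigONotation(k\log^{8}n)$ to set up the first vector and to restore the sentinels afterwards; since each frequent symbol occupies at least $\tau$ of the $n$ text positions, $|\textsf{F}|=\BigONotation(n/\tau)$, so there are $\BigONotation((n/\tau)^{k})$ vectors, and $(n/\tau)^{k}\log^{8}n=\BigONotation\!\bigl(n^{k/(k+1)}(\log n)^{(k+8)/(k+1)}\bigr)$. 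All other per‑update costs are polylogarithmic and dominated, for large $n$, by $n^{k/(k+1)}(\log n)^{(k+8)/(k+1)}$, so the amortized per‑update/query time is $\BigONotation\!\bigl(k\,n^{k/(k+1)}(\log n)^{(k+8)/(k+1)}+k^{2}\log n\bigr)$; instances forcing Case~1 to scan $\Theta(\tau)$ non‑matching alignments each carrying $\Theta(k)$ wildcards (together with $\Theta(k)$ text wildcards) realize this, giving the stated $\Theta$. I expect the main obstacle to be the Case~2 equivalence — pinning down exactly how the sentinel $\#$ and the frequent/rare dichotomy must interact so that no genuine occurrence is lost and none is spuriously created — with a secondary nuisance in keeping the \textsf{LCS} structure consistent across threshold crossings within the amortized budget.
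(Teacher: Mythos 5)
Your proposal is correct and follows essentially the same route as the paper: the same rare/frequent dichotomy with threshold $\tau=(n^{k}\log^{7}n)^{1/(k+1)}$, Case~1 via enumeration of candidates in $\mathfrak{R}(c)\cup\mathfrak{R}(\wildcard)$ verified by \texttt{is\_match}, Case~2 via Gray-code enumeration of joint completions over $\textsf{F}\cup\{\#\}$ checked with the dynamic \textsf{LCS} structure, and the identical balancing of the two cases. Your only substantive addition is the amortized/hysteresis treatment of keeping the \textsf{LCS} structure consistent when a symbol crosses the frequency threshold (a point the paper's own proof does not address), and your Case~2 equivalence argument is a more explicit version of the paper's.
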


\begin{proof}
We adopt a frequency-based approach. Recall that the number of frequent symbols is bounded by
$
\BigONotation\!\left(\frac{n}{\tau}\right).
$
The preprocessing primarily consists of building the data structure for answering LCS queries. 
By Theorem~\ref{thm:lcs_query}, this requires \(\BigONotation(n \log^2 n)\) time. 
In addition, for the matching queries we perform an \(\BigONotation(n)\)-time preprocessing step, as established in Lemma~\ref{thm:is_match_hash_query}.

To answer queries, we distinguish between different cases depending on whether the pattern contains a rare symbol.
	\begin{enumerate}
		\item  \textbf{Pattern contains a rare symbol:}
		If the pattern \(\pattern\) contains a rare symbol \(c\) at some position \(i\) (i.e., \(\pattern_i = c\)), 
		then any valid match in the text must align this occurrence of \(c\) with either the same symbol in \(\textstr\) 
		or with one of the wildcard positions in \(\textstr\).  
		Formally, for a match to occur, we must have \(\textstr_j = c\) or \(\textstr_j = ?\) for some position \(j\) in \(\textstr\).  
		Thus, the candidate positions to ckeck for alignment are $
		\mathfrak{R}(c) \,\cup\, \mathfrak{R}(?).
		$
		To identify potential matches, we scan only these candidate positions. For
		each such \(j\), we verify whether \(\pattern\) matches the substring \(\textstr_{j-i+1 : j-i+m}\). This check is
		performed via the procedure
		$$
		\texttt{is\_match}(\pattern, \textstr_{j-i+1 : j-i+m}),
		$$
		which, by Lemma~\ref{thm:is_match_hash_query}, confirms a match in \(\BigONotation(k \log \lengthT)\) time. 
		Since \(c\) is rare, it appears at most \(\tau\) times in \(\textstr\), so there are at most \(\BigONotation(\tau + k)\) candidate positions, and the total time required for this case is
		$
		\BigONotation\!\left((\tau + k) \cdot k \log \lengthT\right).
		$
			
		\begin{figure}[t]
			\centering
			\scalebox{0.85}{
				\begin{tikzpicture}[
					box/.style={draw, thick, rounded corners, minimum size=0.6cm, font=\ttfamily},
					patternbox/.style={box, fill=blue!10, draw=blue!60},
					wildcardbox/.style={box, fill=yellow!20, draw=orange!60},
					textbox/.style={box, fill=green!10, draw=green!60},
					index/.style={font=\footnotesize, below, text=gray},
					matchrect/.style={draw=blue!80, thick, dashed, rounded corners},
					arrow/.style={-stealth, thick, red!80},
					nonwildmatch/.style={draw=red!80, very thick, rounded corners},
					background/.style={fill=gray!5, draw=gray!20, rounded corners}
					]
					
					\node[anchor=east] at (-0.5, 2) {\(\pattern =\)};
					\foreach \i/\char in {1/a, 2/v, 3/\texttt{?}, 4/b, 5/\texttt{?}, 6/c, 7/d, 8/e, 9/f} {
						\begin{scope}[shift={(\i,2)}]
							\ifnum\i=3 \node[wildcardbox] (p\i) {\char};
							\else \ifnum\i=5 \node[wildcardbox] (p\i) {\char};
							\else \node[patternbox] (p\i) {\char};
							\fi\fi
							\node[index] at (0,-0.4) {\i};
						\end{scope}
					}
					
					\node[anchor=east] at (-0.5, 0) {\(\textstr =\)};
					\foreach \i/\char in {1/x, 2/a, 3/v, 4/t, 5/b, 6/y, 7/c, 8/d, 9/e, 10/f, 11/z} {
						\begin{scope}[shift={(\i,0)}]
							\node[textbox] (t\i) {\char};
							\node[index] at (0,-0.4) {\i};
						\end{scope}
					}
					
					\draw[matchrect] (1.55,-1) rectangle (10.45,1);
					
					\node[nonwildmatch, fit=(p1)(p2)] {}; 
					\node[nonwildmatch, fit=(t2)(t3)] {}; 
					
					\node[nonwildmatch, fit=(p6)(p9)] {}; 
					\node[nonwildmatch, fit=(t7)(t10)] {}; 
					
					\node[nonwildmatch,fit=(p4)] {}; 
					\node[nonwildmatch,fit=(t5)] {}; 
					
					\foreach \pi/\ti in {1/2, 2/3, 4/5, 6/7, 7/8, 8/9, 9/10} {
						\draw[arrow] (p\pi.south) -- (t\ti.north);
					}
					
					\begin{scope}[on background layer]
						\node[background, inner sep=5mm, fit=(current bounding box)] {};
					\end{scope}
					
				\end{tikzpicture}
			}
			\caption{Pattern: \(av?b?cdef\). Text: \(xavtbycdefz\). Red boxes show grouped consecutive non-wildcard matches: "av", "b", and "cdef".}
			\label{fig:rare-merged-fixed-styled}
		\end{figure}
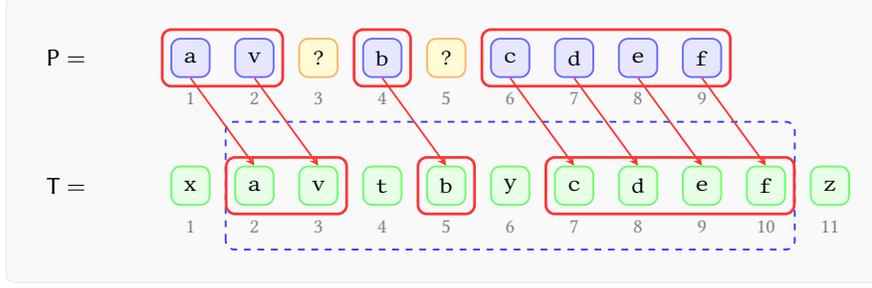
		
\item \noindent\textbf{Pattern contains only frequent symbols and wildcards.}  
In this case, \(\pattern\) contains no rare symbols, so any rare symbol in the text can only be matched by a wildcard in \(\pattern\).  
We therefore cosider a modified text \textsf{T'} by replacing every rare symbol in \textsf{T} with a special placeholder \(\# \notin \Sigma\), while leaving all frequent symbols unchanged.  

Next, we use Gray codes~\cite{knuth2013art} 
to systematically consider all ways to fill the wildcard positions in both \(\pattern\) and \textsf{T'} using symbols from \(\textsf{F} \cup \{\#\}\), where \(\textsf{F}\) is the set of frequent symbols (\(|\textsf{F}| \le \lengthT / \tau\)). We refer to each such choice of symbols as a \emph{joint completion}, because it specifies replacements for wildcards in both strings simultaneously.  

Formally, let \(k\) be the total number of wildcard positions across \(\pattern\) and \textsf{T'}, and index them in a fixed order as \(q_1, \dots, q_k\) (first the wildcards in \(\pattern\), then those in \textsf{T'}). A joint completion is a vector \((a_1, \dots, a_k) \in (\textsf{F} \cup \{\#\})^k\), where each \(q_j\) is replaced by \(a_j\) in the corresponding string. Applying these replacements produces a fully specified pair \((\widetilde{\pattern}, \widetilde{T})\) with no wildcards.  
The number of possible joint completions is therefore bounded by
$$
(|\textsf{F}| + 1)^k \;\le\; \left(\frac{\lengthT}{\tau} + 1\right)^k.
$$
		
Since \(\pattern\) contains no rare symbols, any rare symbol in the text can only match a wildcard in \(\pattern\).  
Therefore, replacing all rare symbols in \(T\) with the placeholder \(\#\) preserves all potential matches.  

When considering text wildcards:  
\begin{itemize}
	\item If a wildcard in \(T\) aligns under a fixed frequent symbol in \(\pattern\), it must take that symbol to match.  
	\item If it aligns under a wildcard in \(\pattern\), any symbol from \(\textsf{F} \cup \{\#\}\) is valid (rare symbols are captured by \(\#\)).  
\end{itemize}

Consequently, any true match between \(\pattern\) and \(T\) corresponds to some completion over \(\textsf{F} \cup \{\#\}\), and conversely, any completed pair \((\widetilde{\pattern}, \widetilde{T})\) that matches represents a genuine match for the original \((\pattern, T)\) once \(\#\) is replaced by the actual rare symbols.

For each completion \((\widetilde{P},\widetilde{T})\), we check whether 
$
\mathrm{LCS}(\widetilde{P}, \widetilde{T}) = \lengthP.
$ 
If so, \(\widetilde{P}\) occurs as a substring of \(\widetilde{T}\), and hence \(\pattern\) matches a substring of \(T\).  
Using the fully dynamic LCS structure (Theorem~\ref{thm:lcs_query}), each single-symbol update and query takes \(\BigONotation(\log^8 \lengthT)\) time.

There are at most \(\left(\tfrac{\lengthT}{\tau}+1\right)^k\) joint completions.  
As shown below, these completions can be enumerated with only \(\BigONotation(1)\) updates per step.  
Since processing each completion requires \(\BigONotation(\log^8 \lengthT)\) time, the overall running time is
$$
\BigONotation\!\left(\left(\tfrac{\lengthT}{\tau}+1\right)^k \cdot \log^8 \lengthT\right).
$$

\end{enumerate}
		
\paragraph*{Gray–code enumeration.}
We enumerate all vectors \((a_1,\dots,a_k)\in[|\textsf{F}|+1]^k\) in a generalized Gray–code order, so that consecutive vectors differ in exactly one coordinate. This ensures that each move between completions changes only a single symbol in \(\widetilde{P}\) or \(\widetilde{T}\).

\begin{itemize}
	\item \textbf{Base case (\(k=1\)):} Use the order \(\{f_1, \dots, f_{|\textsf{F}|}, \#\}\).  
	\item \textbf{Recursive step (\(k>1\)):}  
	Assume we already have a Gray order for \((k-1)\)-tuples.  
	Traverse these tuples in that order, appending the last coordinate alternately in forward (\(f_1, \dots, f_{|\textsf{F}|}, \#\)) and reverse order.  
	Within each block, only the last coordinate changes; between blocks, exactly one of the first \(k-1\) coordinates changes.  
	Thus, each step updates exactly one symbol in \(\widetilde{P}\) or \(\widetilde{T}\).
\end{itemize}

	\begin{example}
		Suppose the pattern is
		$$
		P = \texttt{U ? V ? X},
		$$
		where the two wildcard positions are \(w_1=2\) and \(w_2=4\).  
		Here \(U,V,X\) denote fixed substrings of the pattern that contain no wildcards (possibly empty).  Also, suppose that the text has no wildcards.
		Let the set of frequent symbols be \(\textsf{F}=\{a,b,c\}\), so \(|\textsf{F}|=3\).  
		
		Each concrete candidate pattern is obtained by choosing a pair \((f_1,f_2)\in \textsf{F}^2\), where \(f_1\) is substituted at 
		position \(w_1\) and \(f_2\) at position \(w_2\).  
		For instance, the pair \((a,c)\) yields the candidate \(\texttt{U a V c X}\).  
		
		We generate all pairs in Gray-code order so that each consecutive pair differs in exactly one coordinate. 
		One valid ordering is:
		$$
		(a,a)\;\to\;(a,b)\;\to\;(a,c)\;\to\;(b,c)\;\to\;(b,b)\;\to\;(b,a)\;\to\;(c,a)\;\to\;(c,b)\;\to\;(c,c).
		$$
		
		Mapping these pairs to candidates gives:
		$$
		\begin{array}{c|c|c}
			\text{pair }(f_1,f_2) & \text{changed coordinate} & \text{candidate string}\\ \hline
			(a,a) & - & \texttt{U a V a X}\\
			(a,b) & f_2 & \texttt{U a V b X}\\
			(a,c) & f_2 & \texttt{U a V c X}\\
			(b,c) & f_1 & \texttt{U b V c X}\\
			(b,b) & f_2 & \texttt{U b V b X}\\
			(b,a) & f_2 & \texttt{U b V a X}\\
			(c,a) & f_1 & \texttt{U c V a X}\\
			(c,b) & f_2 & \texttt{U c V b X}\\
			(c,c) & f_2 & \texttt{U c V c X}
		\end{array}
		$$
		
		Since consecutive candidates differ in exactly one wildcard position, transitioning from one row to the next requires 
		only a single-symbol update in the LCS data structure. Thus each step costs \(\BigONotation(\log^8 \lengthT)\).
	\end{example}
		
		For each case, the running time is as follows:  
	$$\text{Case}~(i): \BigONotation\!\left(\tau \cdot k \log \lengthT + k^{2} \log \lengthT \right),
		\qquad \text{Case}~(ii): \BigONotation\!\left(\left(\tfrac{\lengthT}{\tau} + 1\right)^{k} \cdot \log^{8}\lengthT \right).$$
	 Therefore, the combined total time complexity is
	$$
	\BigONotation\!\left(
	\tau \cdot k \log \lengthT 
	\;+\;
	k^{2}  \log \lengthT
	\;+\;
	\left(\tfrac{\lengthT}{\tau} + 1\right)^{k} \cdot \log^{8}\lengthT
	\right).
	$$

	\noindent Finally, by replacing \(\tau = \bigl(n^k \log^{7} n \bigr)^{\tfrac{1}{k+1}}\),  
	we obtain the bound
	$$
	\Theta\!\Bigl(kn^{\tfrac{k}{k+1}} \,\log^{\tfrac{k+8}{k+1}} n + k^{2} \log \lengthT \Bigr).
	$$
	

\begin{algorithm}[t]
	\caption{Dynamic Maintenance of Occurrences and \(\mathsf{T'}\)}
	\label{alg:dynamic-maintenance}
	\KwIn{Update at position \(i\): set \(\mathsf{T}_i \gets c_{\text{new}}\) (old symbol \(c_{\text{old}}\))}
	\KwData{\(\mathsf{T},\,\mathsf{T'}\); occurrence sets \(\{\mathfrak{R}(c)\}_{c\in\Sigma}\); threshold \(\tau\)}
	
	\BlankLine
	remove \(i\) from \(\mathfrak{R}(c_{\text{old}})\)\;
	insert \(i\) into \(\mathfrak{R}(c_{\text{new}})\)\;
	set \(\mathsf{T}_i \gets c_{\text{new}}\)\;
	
	\BlankLine
	
	\BlankLine
	\If{\(|\mathfrak{R}(c_{\text{old}})| < \tau\)}{
		\tcp{symbol \(c_{\text{old}}\) is rare now: set \(\#\) at all its occurrences in \(\mathsf{T'}\)}
		\ForEach{\(j \in \mathfrak{R}(c_{\text{old}})\)}{
			\(\mathsf{T'}_j \gets \#\)\;
		}
	}
	\If{\(|\mathfrak{R}(c_{\text{new}})| \ge \tau\)}{
		\tcp{symbol \(c_{\text{new}}\) is frequent now: restore its real symbol at all its occurrences in \(\mathsf{T'}\)}
		\ForEach{\(j \in \mathfrak{R}(c_{\text{new}})\)}{
			\(\mathsf{T'}_j \gets c_{\text{new}}\)\;
		}
	}
	
	\BlankLine
\end{algorithm}

%
	
\end{proof}

\begin{remark}[Supporting insertions and deletions]
	\label{rem:insdel}
	Our main presentation assumes substitution-only updates, but the framework naturally extends to insertions and deletions by storing the text $\textsf{T}$ in an \emph{implicit-order} balanced \textsf{BST}, specifically an \emph{implicit treap}~\cite{seidel1996randomized}.  
	Each position $i$ of $\textsf{T}$ corresponds to a node $v_i$, and insertions or deletions at index $i$ are implemented via treap splits and merges in \emph{expected} $\BigONotation(\log n)$ time.  
	
	For each symbol $c \in \Sigma$, we maintain its occurrence set $\mathfrak{R}(c)$ as a collection of \emph{pointers to nodes} rather than indices. The current position of an occurrence can be recovered on demand by computing the in-order rank $\mathrm{rank}(v)$ of its node $v$, which also takes \emph{expected} $\BigONotation(\log n)$ time.  	With this setup:
	\begin{itemize}
		\item A \textbf{substitution} at position $i$ changes the label at $v_i$ and moves its pointer between the sets $\mathfrak{R}(c_{\text{old}})$ and $\mathfrak{R}(c_{\text{new}})$.  
		\item An \textbf{insertion} at index $i$ creates a new node $v$ and adds a pointer to $v$ into $\mathfrak{R}(c)$.  
		\item A \textbf{deletion} at index $i$ removes node $v_i$ and deletes its pointer from $\mathfrak{R}(\cdot)$.  
	\end{itemize}
	
	The maintenance of $\textsf{T}'$ and the frequent/rare threshold checks remain unchanged, except that explicit indices are replaced with node-rank queries. Overall, updates incur only an additional \emph{expected} $\BigONotation(\log n)$ cost for treap operations and rank queries, while the $\tau$-bounded work for demotions and promotions is unaffected.  
	
	Moreover, the same treap structure can be used to maintain prefix-hashes and hence support online computation of $H_{b,p}$ in \emph{expected} $\BigONotation(\log n)$ time per update, ensuring that hash-based matching queries remain efficient under insertions and deletions as well.
\end{remark}

	\renewcommand{\eps}{\frac{1}{(\ln \lengthT)^\alpha}}

\section{Sparse Pattern}
\label{sec:sparse}
In this section we study \emph{sparse} patterns, namely patterns \(P\in(\Sigma\cup\{\texttt{?}\})^{m}\) in which only a few positions are non-wildcard and the remaining positions are wildcards. Let \mbox{\(S=\{i\in[m]:\pattern_i\in\Sigma\}\)} be the set of non-wildcard indices and let \(s=|S|\ll m\).  
Such sparse patterns naturally arise in applications such as biological sequence analysis, where only a few positions of a motif are fixed while the rest may vary, or in intrusion detection, where signatures often contain only a few fixed bytes.  
In this section we consider two regimes: (i) the case where the pattern contains at most two non-wildcard symbols, and (ii) the case where the wildcard positions in the pattern are fixed in advance and the text \(\textstr\) contains no wildcards.  
Throughout this section we work in a \emph{substitution-only} dynamic model: updates may change symbols in the text or in the pattern, but no insertions or deletions are allowed, so both lengths \(n\) and \(m\) (and, in the fixed-wildcard regime, the wildcard index set) remain fixed.

\subsection{At Most Two Non-Wildcard Symbols}
\label{sec:two}

In this section, we study a special case of \(\textit{\OurProb}\) under the additional assumption that the pattern \(\pattern\) contains at most two non-wildcard symbols at any time.  
We focus on this substitution-only regime of the problem, and the text \(\textstr\) contains no wildcards.  
But in Remark~\ref{rem:insertiondeletion} and Remark~\ref{rem:wildcards-in-text} we will explain how the algorithm can be extended to handle insertions/deletions in the pattern as well as wildcards in the text.  
Moreover, in this regime we are also able to count the exact number of matches in the same asymptotic complexity.  
For consistency with the other sections we only present the decision version in the main algorithm, but Remark~\ref{rem:counting-matches} explains how counting can also be supported.  
For this setting, we design an algorithm with preprocessing time \(\BigONotation(n^{9/5})\), pattern update time \(\BigONotation(1)\), text update time \(\BigONotation(n^{4/5}\log n)\), and query time \(\BigONotation(n^{4/5})\).

A general observation in sparse patterns is that leading or trailing runs of \(\wildcard\) can be ignored when reasoning about matches.  
These wildcards only shift the range of permissible starting positions but do not affect the relative positions among the concrete symbols.  
Consequently, we can conceptually divide any pattern into three parts: a prefix of wildcards, a block of concrete symbols (the \emph{query block}), and a suffix of wildcards:
$$
\underbrace{\wildcard \dots \wildcard}_{\text{prefix wildcards}} \;
\underbrace{c_1 \dots c_k}_{\text{query block}} \;
\underbrace{\wildcard \dots \wildcard}_{\text{suffix wildcards}}.
$$

After removing the leading and trailing wildcards, let the remaining concrete positions in \(\pattern\) be \(i_1 < i_2 < \dots < i_k\) with symbols \(c_1, c_2, \dots, c_k\).  
A substring \(\textstr_{s : s+\lengthP-1}\) matches \(\pattern\) if and only if
$$
\textstr_{s+i_j-1} = c_j \quad \text{for all } j = 1, \dots, k.
$$  
Equivalently, a match occurs when the concrete symbols appear in the text at positions that preserve the relative offsets \(i_{j+1}-i_j\).  
This perspective allows match queries to be reduced to checking whether these symbols appear together at the required distances within the allowed range of starting positions.

In the special case where the pattern \(\pattern\) contains at most two non-wildcard symbols, the matching task reduces to checking whether two symbols occur in the text at a specific distance from each other.  
This motivates the following formal problem.

\begin{problem}[Dynamic Range-Pair Query]
	\label{prob:range-pair}
	Let \(\RangeAlphabet\) be an alphabet of size \(\RangeAlphabetSize\).
	The input is a dynamic string \(\RangeText \in \RangeAlphabet^{\RangeLen}\).
	The task is to support the following operations:
	\begin{itemize}
		\item \(\UpdateSymbol{i}{c}\): update the symbol at position \(i\) to \(c \in \RangeAlphabet\).
		\item \(\PairQuery{l}{r}{a}{b}{d}\): return the number of indices \(i \in [l,\,r-d-1]\) such that
		\(\RangeText_{i}=a\) and \(\RangeText_{i+d+1}=b\), where \(a,b \in \RangeAlphabet\) and \(d \ge 0\).
	\end{itemize}
\end{problem}

\noindent
This setting is closely related to the classical \emph{Gapped String Indexing} problem~\cite{bille2022gapped}.
In Gapped String Indexing, one preprocesses a static string \(S\) and answers queries given by two patterns \(P_1,P_2\) and a gap interval \([\alpha,\beta]\), asking whether there exist occurrences at positions \(i\) and \(j \ge i\) such that \(j-i \in [\alpha,\beta]\)~\cite{bille2022gapped}.
In the special case where \(|P_1|=|P_2|=1\) and \(\alpha=\beta=d+1\), this reduces to deciding or counting occurrences of two symbols whose positions differ by exactly \(d+1\).
The Dynamic Range-Pair Query problem captures this special case while additionally restricting valid pairs to lie within a query range \([l,r]\), and allowing the underlying text to be updated dynamically.
Viewed this way, the problem can also be interpreted as a dynamic, range-restricted variant of \emph{Shifted Set Intersection} (equivalent to 3SUM indexing in the static setting)~\cite{bille2022gapped,aronov2024general}.

\noindent
To tackle this problem, we use a block decomposition approach.
The text is partitioned into consecutive blocks, and for each block we maintain precomputed counts of symbol pairs whose left endpoint lies inside the block.
Updates are handled lazily: when only a small number of positions in a block are modified, we record these changes explicitly and leave the precomputed counters unchanged; once the number of pending changes exceeds a fixed threshold, the block is rebuilt from scratch.
For a query range $[l,r]$, blocks that lie entirely inside the query interval—referred to as \emph{full blocks}—are answered using their precomputed counters, with a correction for any pending updates stored for those blocks.
Only positions belonging to blocks that intersect the query interval partially—namely, the boundary blocks at the left and right ends of the range—are handled by explicit scanning.

\section*{Range-Pair Algorithm (Details)}\label{app:range-pair}

\noindent
\textbf{Block decomposition and maintained structures.}
We partition the text \(\RangeText\) into
\(\NumBlocks\) consecutive blocks, each of size \(\BlockSize\) (except possibly the last).
Any query interval \([l,r]\) then consists of at most
\(\BigONotation(\NumBlocks)\) full blocks and two partial boundary blocks;
the two boundary blocks together contain at most \(2\BlockSize-2\) symbols, which are processed explicitly.
We denote the first and last indices of block \(i\) by
\(\BlockStart{i}\) and \(\BlockEnd{i}\), respectively.
For a fixed distance \(d\), valid left endpoints lie in \([l,\,r-d-1]\).

We maintain a snapshot string \(\AppliedText\), which records for every position the most recent symbol that has been fully applied to the precomputed tables.  
In addition, for each block \(i\), we maintain the set
$
\Pending_i = \{\, \mathrm{j} \mid \mathrm{j}\ \text{lies in block } i,\ \RangeText_{\mathrm{j}} \neq \AppliedText_{\mathrm{j}} \,\},
$
which stores exactly those indices inside block \(i\) where \(\RangeText\) and \(\AppliedText\) disagree at the current time.  
During updates, changes are first reflected in the corresponding set \(\Pending_i\); once the number of pending modifications in a block exceeds the threshold \(\LazyRebuild\), we recompute all slices of the precomputed structures that touch that block and synchronize \(\AppliedText\) with \(\RangeText\) on its positions.

\begin{fact}\label{fact:pending-set}
	Each set \(\Pending_i\) can be maintained such that insertions and deletions are supported in \(\BigONotation(\log \BlockSize)\) time, and iterating over all elements of \(\Pending_i\) takes \(\BigONotation(|\Pending_i|)\) time, using a balanced binary search tree.
\end{fact}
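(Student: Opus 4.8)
The plan is to instantiate \(\Pending_i\) as a standard balanced binary search tree — for concreteness a red--black or AVL tree \cite{CLRS}, or equivalently a treap \cite{seidel1996randomized} — whose stored keys are exactly the indices \(j\) lying in block \(i\) for which \(\RangeText_j \neq \AppliedText_j\) at the current time; this mirrors the construction already used for the occurrence sets in Fact~\ref{fact:balanced-bst}. The first step is the observation that pins the logarithm to the right quantity: every key ever inserted into this structure is an index belonging to block \(i\), so the tree contains at most \(\BlockSize\) nodes and therefore has height \(\BigONotation(\log \BlockSize)\), independent of the total text length \(\RangeLen\).

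With that in hand, the bounds are a direct appeal to the textbook guarantees for such trees. A search, an insertion, or a deletion touches only the nodes along a single root-to-leaf path together with \(\BigONotation(1)\) rotations (or split/merge operations, in the treap formulation) per structural fix-up, hence costs \(\BigONotation(\log \BlockSize)\) — worst case for red--black/AVL, expected for the treap. Inserting the index \(j\) into \(\Pending_i\) is precisely a key insertion, and deleting \(j\) is a key deletion, so both are supported in the claimed time. For the iteration requirement I would use an in-order traversal, which visits each node a constant number of times and therefore runs in time linear in the number of stored keys, i.e. \(\BigONotation(|\Pending_i|)\); as a bonus this reports the pending indices in sorted order, though any linear walk over the nodes would also suffice.

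There is essentially no obstacle here, as the statement is a plain instantiation of known balanced-BST bounds; the only point deserving care is maintaining the intended invariant, namely that \(\Pending_i\) holds exactly the block-\(i\) positions where \(\RangeText\) and \(\AppliedText\) disagree. Concretely, when an update changes \(\RangeText_j\) one first performs a lookup of \(j\) to learn whether it already disagreed with \(\AppliedText_j\), then inserts or deletes \(j\) accordingly; and when a block is rebuilt and \(\AppliedText\) is resynchronized with \(\RangeText\) on that block, the entire tree for that block is emptied in \(\BigONotation(|\Pending_i|)\) time. Every one of these operations is among the ones priced above, so the stated time bounds follow.
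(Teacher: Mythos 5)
Your proposal is correct and matches the paper's intent exactly: the paper states this as a bare Fact justified by standard balanced-BST guarantees, and your argument supplies precisely that instantiation, with the one non-trivial observation (that the key universe is restricted to block $i$, so the tree size is at most $\BlockSize$ and the height is $\BigONotation(\log\BlockSize)$ rather than $\BigONotation(\log\RangeLen)$) correctly identified. Nothing further is needed.
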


For handling pairs of symbols across distinct blocks, we maintain counters
\(\PairMatrix_{i,j,a,b,d'}\) for every ordered pair of blocks \(i<j\), 
every pair of symbols \(a,b \in \RangeAlphabet\), 
and every local offset \(d' \in \{0,\ldots,2\BlockSize-2\}\). 
Each entry \(\PairMatrix_{i,j,a,b,d'}\) counts the number of pairs \((u,v)\) such that
\(u\) lies in block \(i\), \(v\) lies in block \(j\), 
\(\AppliedText_{u}=a\), \(\AppliedText_{v}=b\), 
and the local distance is 
$
d' = (\BlockEnd{i}-u) + (v-\BlockStart{j}).
$
(See Figure~\ref{fig:local-distance} for an illustration of the local distance.)
The local distance represents the number of symbols between \(u\) and \(v\), 
excluding all symbols that lie in the full blocks strictly between blocks \(i\) and \(j\). 
The corresponding actual distance is 
$
d = (j-i-1)\BlockSize + d' .
$
These precomputed counters allow all full blocks in a query to be processed without scanning their contents.

To cover pairs inside a block or across two consecutive blocks, we additionally maintain
\(\NearMatrix_{i,a,b,d}\) for every block \(i\), symbols \(a,b \in \RangeAlphabet\),
and distance \(d \in \{0,\ldots,\BlockSize-1\}\).
The value \(\NearMatrix_{i,a,b,d}\) counts pairs \((u,v)\) such that
\(u\) lies in block \(i\), \(v = u+d+1\) lies in block \(i\) or \(i+1\),
and \(\RangeText_u = a\), \(\RangeText_v = b\).
The structure \(\NearMatrix\) is maintained eagerly with respect to the current text \(\RangeText\),
and is responsible for all short-distance pairs with distance strictly less than \(\BlockSize\).

These variables are initialized at the beginning of the algorithm, as described in the preprocessing step (Algorithm~\ref{alg:range-pair-preprocess}).

\begin{figure}[t]
	\centering
	\scalebox{0.65}{
		\begin{tikzpicture}[
			box/.style={draw=black, thick, rounded corners, minimum height=0.6cm, minimum width=0.6cm, font=\ttfamily},
			cell/.style={box, fill=green!20},
			topcell/.style={box, fill=green!20},
			idx/.style={font=\footnotesize, below, text=black},
			label/.style={font=\small},
			measure/.style={<->, thick, draw=black},
			background/.style={fill=gray!5, draw=gray!20, rounded corners},
			textbar/.style={rounded corners=2pt, thick, fill=gray!20, draw=black},
			iblock/.style={rounded corners=2pt, thick, fill=yellow!30, draw=black},
			jblock/.style={rounded corners=2pt, thick, fill=blue!20, draw=black}
			]
			
			\def\TextL{0}  \def\TextR{12}
			\def\BiL{2.0}  \def\BiR{5.0}
			\def\BjL{7.0}  \def\BjR{10.0}
			\def\uTop{3.1}
			\def\vTop{8.4}
			\def\yTop{1.9}
			\def\yMid{1.2}
			\def\yBot{0.3}
			
			\node[anchor=east] at (-0.7,\yTop) {\(\RangeText=\)};
			\draw[textbar] (\TextL+0.2,\yTop-0.35) rectangle (\TextR-0.2,\yTop+0.35);
			\draw[iblock] (\BiL,\yTop-0.33) rectangle (\BiR,\yTop+0.33);
			\draw[jblock] (\BjL,\yTop-0.33) rectangle (\BjR,\yTop+0.33);
			
			\node[label] at ({(\BiL+\BiR)/2}, \yTop+0.60) {$i$-th block};
			\node[label] at ({(\BjL+\BjR)/2}, \yTop+0.60) {$j$-th block};
			
			\node[topcell] (uTopCell) at (\uTop,\yTop) {\(\RangeText_u\)};
			\node[topcell] (vTopCell) at (\vTop,\yTop) {\(\RangeText_v\)};
			
			\foreach \x in {\BiL,\BiR,\BjL,\BjR,\uTop,\vTop} { \draw (\x,\yTop-0.25) -- (\x,\yMid); }
			
			\def\BiLb{2.2}  \def\BiRb{6.2}
			\def\BjLb{6.5}  \def\BjRb{10.5}
			
			\pgfmathsetmacro{\fu}{(\uTop-\BiL)/(\BiR-\BiL)}
			\pgfmathsetmacro{\fv}{(\vTop-\BjL)/(\BjR-\BjL)}
			\coordinate (uBotPos) at ($(\BiLb,\yBot)!\fu!(\BiRb,\yBot)$);
			\coordinate (vBotPos) at ($(\BjLb,\yBot)!\fv!(\BjRb,\yBot)$);
			
			\draw[iblock] (\BiLb,\yBot-0.35) rectangle (\BiRb,\yBot+0.35);
			\draw[jblock] (\BjLb,\yBot-0.35) rectangle (\BjRb,\yBot+0.35);
			
			\node[cell, minimum width=0.7cm] (uSq) at (uBotPos) {};
			\node[cell, minimum width=0.7cm] (vSq) at (vBotPos) {};
			\node[idx] at (uSq.south) {$u$};
			\node[idx] at (vSq.south) {$v$};
			
			\draw (\BiL,\yMid) -- (\BiLb,\yBot+0.35);
			\draw (\BiR,\yMid) -- (\BiRb,\yBot+0.35);
			\draw (\BjL,\yMid) -- (\BjLb,\yBot+0.35);
			\draw (\BjR,\yMid) -- (\BjRb,\yBot+0.35);
			\draw (\uTop,\yMid) -- (uSq.north);
			\draw (\vTop,\yMid) -- (vSq.north);
			
			\coordinate (endiEdge) at (\BiRb,\yBot);
			\coordinate (begjEdge) at (\BjLb,\yBot);
			
			\draw[measure] ([yshift=-18pt]uSq.east) -- node[below, yshift=-2pt] {\(\BlockEnd{i}-u\)} ([yshift=-18pt]endiEdge);
			\draw[measure] ([yshift=-18pt]begjEdge) -- node[below, yshift=-2pt] {\(v-\BlockStart{j}\)} ([yshift=-18pt]vSq.west);
			
			\draw[measure] ([yshift=-44pt]uSq.east) -- node[below, yshift=-2pt]
			{local distance \(d'=(\BlockEnd{i}-u)+(v-\BlockStart{j})\)}
			([yshift=-44pt]vSq.west);
			
			\begin{scope}[on background layer]
				\node[background, inner sep=5mm, fit=(current bounding box)] {};
			\end{scope}
			
		\end{tikzpicture}
	}
	\caption{local distance between \(u \in\) block \(i\) and \(v \in\) block \(j\), equal to
		\((\BlockEnd{i}-u)+(v-\BlockStart{j})\).}
	\label{fig:local-distance}
\end{figure}

\vspace{0.2cm}
\noindent
\textbf{Answering a query.}
For a query \(\PairQuery{l}{r}{a}{b}{d}\), valid left endpoints lie in \([l,\,r-d-1]\).
The computation separates into two parts: (i) the two boundary fragments at the ends of \([l,\,r-d-1]\),
and (ii) the full blocks entirely inside this interval.

\medskip
\noindent\emph{Boundary fragments.}
The two boundary fragments together contain at most \(2\BlockSize-2\) positions.
Each index \(i\) in these fragments is checked directly against \(\RangeText\):
if \(\RangeText_i=a\) and \(\RangeText_{i+d+1}=b\), the counter is incremented.
This part runs in \(\BigONotation(\BlockSize)\) time.

\medskip
\noindent\emph{Full blocks, case \(d<\BlockSize\).}
For every full block \(i\) inside \([l,\,r-d-1]\), the number of pairs with left endpoint in \(i\)
and distance \(d\) is obtained from the precomputed value \(\NearMatrix_{i,a,b,d}\).
Since \(\NearMatrix\) is maintained eagerly with respect to \(\RangeText\),
its entries already reflect the current text and require no correction.
Summation over all full blocks therefore costs \(\BigONotation(\NumBlocks)\), i.e., constant time per block.

\medskip
\noindent\emph{Full blocks, case \(d\ge\BlockSize\).}
Let \(i\) be a full block inside \([l,\,r-d-1]\) and let \(j\) be the block containing
\(\BlockStart{i}+d+1\).
The right endpoint may fall in block \(j\) or block \(j{+}1\), with local offsets, $d' = d-(j-i-1)\BlockSize$, $d'' = d-(j-i)\BlockSize$, respectively; if \((d+1)\) is a multiple of \(\BlockSize\), the \(j{+}1\) term is not needed.
The contribution of block \(i\) is obtained by adding the valid entries
\(\PairMatrix_{i,j,a,b,d'}\) and
\(\PairMatrix_{i,\,j+1,\,a,\,b,\,d''}\).

\medskip
\noindent\emph{Pending-list corrections for \(d\ge\BlockSize\).}
Because \(\PairMatrix\) is maintained lazily, the above sum may overcount or undercount pairs whose
endpoints include positions not yet applied in the precomputed tables.
Corrections are performed only at pending positions, and proceed in two passes:

\begin{itemize}
	\item \textbf{Left-endpoint changes (\(\Pending_i\)).}
	For each \(u\in\Pending_i\), consider the pair \((u,\ u+d+1)\).
	If its right endpoint lies in block \(j\) or \(j{+}1\), compare the symbol under
	\(\AppliedText\) (the value recorded in the precomputed structures) with the symbol under \(\RangeText\).
	If the pair was counted but is no longer a match, decrement; if it was not counted but is now a match, increment.
	\item \textbf{Right-endpoint changes (\(\Pending_j\) and \(\Pending_{j+1}\)).}
	For each right endpoint \(v\) in the pending list of \(j\) (or \(j{+}1\)), set \(u=v-(d+1)\).
	If \(u\) lies in block \(i\) and \(u\notin\Pending_i\) (so the left endpoint is up to date),
	adjust the count in the same manner as above by comparing \(\AppliedText\) with \(\RangeText\).
\end{itemize}

Only pending positions are inspected—entire blocks are never rescanned—so the correction
cost per involved block is \(\BigONotation(\LazyRebuild)\).
Hence, the total cost over all full blocks for \(d\ge\BlockSize\) is
\(\BigONotation(\LazyRebuild\cdot\NumBlocks)\).

\medskip
\noindent\emph{Summary.}
The boundary work costs \(\BigONotation(\BlockSize)\).
For \(d<\BlockSize\), the full-block work costs \(\BigONotation(\NumBlocks)\).
For \(d\ge\BlockSize\), the full-block work costs \(\BigONotation(\LazyRebuild\cdot\NumBlocks)\).
The corresponding pseudocode is given in Algorithm~\ref{alg:range-pair-get}.

\noindent
\textbf{Update operation.}
An update \(\UpdateSymbol{\mathrm{i}}{c}\) modifies the text by setting 
\(\RangeText_{\mathrm{i}} \gets c\), and then updates the maintained data structures so that they again correctly represent the current state of the text.

\medskip
\noindent\emph{Pending set.}
For the block \(b\) containing position \(\mathrm{i}\), the set \(\Pending_b\) records exactly the indices where \(\RangeText\) and \(\AppliedText\) disagree.
After the change, this invariant is restored: the position \(\mathrm{i}\) is inserted into \(\Pending_b\) if \(\RangeText_{\mathrm{i}} \neq \AppliedText_{\mathrm{i}}\), and removed otherwise.

\medskip
\noindent\emph{Short distances (\(d<\BlockSize\)).}
All entries of \(\NearMatrix\) affected by \(\mathrm{i}\) are updated eagerly.  
Since each position can participate in at most \(2\BlockSize\) pairs, this step costs \(O(\BlockSize)\).

\medskip
\noindent\emph{Rebuild policy.}
After every modification, the size of the pending set is checked.  
As soon as \(|\Pending_b| = \LazyRebuild\), a rebuild of block \(b\) is triggered, so the size of \(\Pending_b\) never exceeds \(\LazyRebuild\).

\medskip
\noindent\emph{Rebuild step.}
During a rebuild, \(\AppliedText\) is synchronized with \(\RangeText\) on block \(b\) in \(O(\BlockSize)\), the set \(\Pending_b\) is cleared, and all \(\PairMatrix\) slices involving block \(b\) are recomputed.  
Each block–pair convolution costs \(O(\BlockSize \log \BlockSize \cdot \RangeAlphabetSize^2)\) using FFT (details given later), and since there are \(\NumBlocks\) counterparts, the total cost is \(O(\NumBlocks \cdot \BlockSize \log \BlockSize \cdot \RangeAlphabetSize^2)\).

\medskip
\noindent\emph{Amortized bound.}
Combining the eager updates to \(\NearMatrix\) and the thresholded rebuilds of \(\PairMatrix\), the amortized update time is
$$
O\!\Big(\BlockSize \;+\; \tfrac{\RangeLen \log \BlockSize \cdot \RangeAlphabetSize^2}{\LazyRebuild}\Big).
$$

These steps are implemented in Algorithm~\ref{alg:range-pair-update}.

\paragraph*{Polynomial encoding.}
To compute \(\PairMatrix\), we use polynomial multiplication.
For each block \(i\) and symbol \(a \in \RangeAlphabet\), we define
$$
P_{i,a}(x) = \sum_{u \in \text{block } i} [\AppliedText_{u}=a] \cdot x^{\BlockEnd{i}-u}.
$$
That is, we place a monomial \(x^{\BlockEnd{i}-u}\) for each occurrence of \(a\) in block \(i\), written in right-to-left order.
Similarly, for each block \(j\) and symbol \(b\),
$$
Q_{j,b}(x) = \sum_{v \in \text{block } j} [\AppliedText_{v}=b] \cdot x^{v-\BlockStart{j}}.
$$
Multiplying,
$$
P_{i,a}(x) \cdot Q_{j,b}(x) = \sum_{d'=0}^{2\BlockSize-2} \PairMatrix_{i,j,a,b,d'} \, x^{d'}.
$$
Indeed, every term \(x^{\BlockEnd{i}-u} \cdot x^{v-\BlockStart{j}} = x^{d'}\) corresponds exactly to a pair \((u,v)\) with local distance \(d'\).
Thus the coefficient of \(x^{d'}\) in the product polynomial is precisely \(\PairMatrix_{i,j,a,b,d'}\).
Each convolution takes \(\BigONotation(\BlockSize \log \BlockSize)\) time via FFT
(see e.g.~\cite{CLRS,vzGG}).

\begin{algorithm}[t]
	\caption{Preprocessing for Range-Pair Queries}
	\label{alg:range-pair-preprocess}
	\KwIn{String $\RangeText_{1..\RangeLen}$, block size $B$}
	\KwOut{Tables $\NearMatrix$, $\PairMatrix$, snapshot $\AppliedText$, pending sets $\Pending_i$}
	
	Partition indices into $M=\NumBlocks$ blocks ;
	
	Set $\AppliedText \gets \RangeText$ and $\Pending_i \gets \varnothing$ for all blocks $i$ \tcp*{$\BigONotation(\RangeLen)$}
	
	Build $\NearMatrix_{i,a,b,d}$ for all blocks $i$, symbols $a,b\in\RangeAlphabet$, and distances $0 \leq d < \BlockSize$ \tcp*{$\BigONotation(\RangeLen \cdot \BlockSize)$}
	
	Build $\PairMatrix_{i,j,a,b,d'}$ for all block pairs $i<j$, symbols $a,b\in\RangeAlphabet$, and $0 \leq d' \leq 2\BlockSize-2$ (using FFT on block pairs) \tcp*{$\BigONotation\!\left(\frac{\RangeLen^2}{\BlockSize} \log \BlockSize \cdot \RangeAlphabetSize^2\right)$}
\end{algorithm}

\begin{algorithm}[t]
	\caption{PairQuery($l,r,a,b,d$)}
	\label{alg:range-pair-get}
	\KwIn{Range $[l,r]$, symbols $a,b\in\RangeAlphabet$, distance $d\ge0$}
	\KwOut{Integer \NumberOfMatches = number of $i\in[l,\,r-d-1]$ with $\RangeText_i=a$ and $\RangeText_{i+d+1}=b$}
	
	\NumberOfMatches $\gets 0$\;
	
	\If{$d < \BlockSize$}{
		\ForEach{full block $i$ inside $[l,\,r-d-1]$}{
			\NumberOfMatches $\gets$ \NumberOfMatches $+ \NearMatrix_{i,a,b,d}$\;
		}
	}
	\Else{
		\ForEach{full block $i$ inside $[l,\,r-d-1]$}{
			compute the target blocks $j$ and $j{+}1$ together with their respective offsets $d'$ and $d''$\;
			\NumberOfMatches $\gets$ \NumberOfMatches
			$+ \PairMatrix_{i,j,a,b,d'} + \PairMatrix_{i,\,j+1,\,a,\,b,\,d''}$\;
			apply corrections for $\Pending_{i}\cup\Pending_{j}\cup\Pending_{j+1}$\;
		}
	}
	\ForEach{index $i$ in the two boundary partial blocks of $[l,\,r-d-1]$}{
		\If{$\RangeText_i=a$ and $\RangeText_{i+d+1}=b$}{\NumberOfMatches $\gets$ \NumberOfMatches $+1$}
	}
	
	\Return \NumberOfMatches\;
\end{algorithm}

\begin{algorithm}[t]
	\caption{Update Operation}
	\label{alg:range-pair-update}
	\KwIn{Position $\mathrm{pos}$, new symbol $c\in\RangeAlphabet$}
	\KwOut{Updated data structures}
	
	$\RangeText_{\mathrm{pos}} \gets c$\;
	Let $b$ be the index of the block containing $\mathrm{pos}$\;
	
	\If{$\RangeText_{\mathrm{pos}} \neq \AppliedText_{\mathrm{pos}}$}{
		insert $\mathrm{pos}$ into $\Pending_b$\;
	}
	\Else{
		remove $\mathrm{pos}$ from $\Pending_b$\;
	}
	
	Update all $\NearMatrix_{i,a,b,d}$ entries affected by $\mathrm{pos}$ for $d<\BlockSize$ \tcp*{$\BigONotation(\BlockSize)$}
	
	\If{$|\Pending_b| > \LazyRebuild$}{
		Overwrite $\AppliedText$ in block $b$ with the current $\RangeText$ values \tcp*{$\BigONotation(\BlockSize)$}
		$\Pending_b \gets \varnothing$\;
		Recompute all $\PairMatrix$ slices with one endpoint equal to $b$
		\tcp*{$\BigONotation\!\big(\RangeLen \log \BlockSize \cdot \RangeAlphabetSize^2\big)$}
	}
\end{algorithm}

\begin{theorem}
	\label{thm:range-pair}
	By combining the preprocessing procedure (Algorithm~\ref{alg:range-pair-preprocess}), 
	the update procedure (Algorithm~\ref{alg:range-pair-update}), 
	and the query procedure (Algorithm~\ref{alg:range-pair-get}), 
	the data structure answers the Dynamic Range-Pair Query Problem 
	(Problem~\ref{prob:range-pair}) on \(\RangeText \in \RangeAlphabet^{\RangeLen}\) with:
	\begin{itemize}
		\item Preprocessing time
		$
		\BigONotation\!\left(\tfrac{\RangeLen^2}{\BlockSize} \log \BlockSize \cdot \RangeAlphabetSize^2
		\;+\; \RangeLen \BlockSize\right),
		$
		\item per-query time
		$
		\BigONotation\!\left(\tfrac{\RangeLen}{\BlockSize} \cdot \LazyRebuild \;+\; \BlockSize\right),
		$
		\item amortized update time
		$
		\BigONotation\!\left(\BlockSize \;+\; \tfrac{\RangeLen \log \BlockSize \cdot \RangeAlphabetSize^2}{\LazyRebuild}\right),
		$
		\item space usage
		$
		\BigONotation\!\left(\tfrac{\RangeLen^2}{\BlockSize}\,\RangeAlphabetSize^2\right).
		$
	\end{itemize}
	Here \(\BlockSize\) and \(\LazyRebuild\) are configurable parameters, fixed in advance, that must satisfy $\LazyRebuild \;\leq\; \BlockSize \;\leq\; \RangeLen$.
\end{theorem}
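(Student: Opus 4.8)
The statement is a pure bookkeeping claim once the three procedures are in place, so the proof breaks into two parts: a \emph{correctness} argument showing that $\PairQuery{l}{r}{a}{b}{d}$ always returns the exact count after any update sequence, and a \emph{resource} argument charging the four bounds directly against the work done by Algorithms~\ref{alg:range-pair-preprocess}--\ref{alg:range-pair-update} (with a standard amortization step for the rebuilds). I would begin with the correctness of the query. Fix a query and let $I=[l,\,r-d-1]$ be the admissible left endpoints; decompose $I$ into the $\BigONotation(\RangeLen/\BlockSize)$ full blocks it contains plus the two partial boundary blocks, which together hold at most $2\BlockSize-2$ positions and are tested one by one against the \emph{current} text $\RangeText$, so their contribution is exact by definition. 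For a full block $i$ and a distance $d<\BlockSize$, every right endpoint $u+d+1$ with $u$ in block $i$ lies in block $i$ or $i{+}1$; since $\NearMatrix$ is maintained eagerly with respect to $\RangeText$, the entry $\NearMatrix_{i,a,b,d}$ already equals the number of valid pairs with left endpoint in $i$, and summing over full blocks is exact.

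\textbf{The $d\ge\BlockSize$ case and the correction lemma (the crux).} For a full block $i$, let $j$ be the block containing $\BlockStart{i}+d+1$; as $u$ sweeps block $i$ the right endpoint $u+d+1$ stays within blocks $j$ and $j{+}1$ (the interval $[\BlockStart{i}+d+1,\,\BlockEnd{i}+d+1]$ has width $<\BlockSize$), with local offsets $d'=d-(j-i-1)\BlockSize$ and $d''=d-(j-i)\BlockSize$, and one also checks $j>i$ so both $\PairMatrix_{i,j,\cdot}$ and $\PairMatrix_{i,j+1,\cdot}$ are maintained. Hence the count of valid pairs with left endpoint in $i$ \emph{would} be $\PairMatrix_{i,j,a,b,d'}+\PairMatrix_{i,j+1,a,b,d''}$ if the tables reflected $\RangeText$; instead they reflect the snapshot $\AppliedText$, which disagrees with $\RangeText$ only on the positions recorded in the pending sets. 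The key lemma to establish is that a pair $(u,\,u+d+1)$ is counted by $\PairMatrix$ correctly \emph{unless} $u\in\Pending_i$ or $u+d+1\in\Pending_j\cup\Pending_{j+1}$, and that the two-pass correction repairs exactly these pairs: the first pass iterates over $u\in\Pending_i$, reconstructs the pair, and adjusts $\pm1$ by comparing the symbols baked into $\AppliedText$ against $\RangeText$; the second pass iterates over right endpoints $v\in\Pending_j\cup\Pending_{j+1}$, sets $u=v-(d{+}1)$, and adjusts only when $u$ lies in block $i$ \emph{and} $u\notin\Pending_i$. The delicate point — and the step I expect to be the main obstacle — is the four-way case split on which of $u$, $v$, both, or neither is pending: one must verify that a pair with both endpoints pending is corrected once (in the first pass) and skipped in the second, that a pair with exactly one pending endpoint is caught in precisely the corresponding pass, and that a pair with no pending endpoint is already correct and is never touched. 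Together with the eager correctness of $\NearMatrix$ and the direct boundary scan, this yields overall correctness.

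\textbf{Resource bounds.} With correctness in hand the four bounds follow by inspection of the algorithms. For \emph{preprocessing}, initializing $\AppliedText$ and the pending sets is $\BigONotation(\RangeLen)$; filling $\NearMatrix$ touches $\BigONotation(\BlockSize)$ distances per text position, i.e.\ $\BigONotation(\RangeLen\BlockSize)$; and filling $\PairMatrix$ runs one size-$\BigONotation(\BlockSize)$ FFT convolution per ordered block pair and symbol pair, i.e.\ $\BigONotation\!\big((\RangeLen/\BlockSize)^2\RangeAlphabetSize^2\BlockSize\log\BlockSize\big)=\BigONotation\!\big(\tfrac{\RangeLen^2}{\BlockSize}\RangeAlphabetSize^2\log\BlockSize\big)$, matching the claim. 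For a \emph{query}, the boundary scan costs $\BigONotation(\BlockSize)$; when $d<\BlockSize$ the $\BigONotation(\RangeLen/\BlockSize)$ full blocks are each handled in $\BigONotation(1)$; when $d\ge\BlockSize$ each full block costs $\BigONotation(1)$ for the table lookups plus $\BigONotation(\LazyRebuild)$ for inspecting $\Pending_i\cup\Pending_j\cup\Pending_{j+1}$, since every pending set has size at most $\LazyRebuild$ under the rebuild policy, giving $\BigONotation\!\big(\BlockSize+\tfrac{\RangeLen}{\BlockSize}\LazyRebuild\big)$ overall.

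\textbf{Update and space.} For an \emph{update}, the eager $\NearMatrix$ maintenance is $\BigONotation(\BlockSize)$ and the pending-set insertion/deletion is $\BigONotation(\log\BlockSize)$; a rebuild of block $b$ costs $\BigONotation(\BlockSize)$ to resynchronize $\AppliedText$ plus $\BigONotation\!\big((\RangeLen/\BlockSize)\RangeAlphabetSize^2\BlockSize\log\BlockSize\big)=\BigONotation(\RangeLen\RangeAlphabetSize^2\log\BlockSize)$ to recompute the $\BigONotation(\RangeLen/\BlockSize)$ affected $\PairMatrix$ slices. Since $|\Pending_b|$ changes by at most one per update and is reset to $0$ at each rebuild, at least $\LazyRebuild$ updates to block $b$ separate consecutive rebuilds of $b$, so the amortized rebuild cost is $\BigONotation\!\big(\tfrac{\RangeLen\RangeAlphabetSize^2\log\BlockSize}{\LazyRebuild}\big)$, and adding the worst-case per-update terms gives amortized update time $\BigONotation\!\big(\BlockSize+\tfrac{\RangeLen\log\BlockSize\cdot\RangeAlphabetSize^2}{\LazyRebuild}\big)$. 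Finally, $\PairMatrix$ dominates the space with $\BigONotation\!\big((\RangeLen/\BlockSize)^2\RangeAlphabetSize^2\BlockSize\big)=\BigONotation\!\big(\tfrac{\RangeLen^2}{\BlockSize}\RangeAlphabetSize^2\big)$ entries, while $\NearMatrix$, $\AppliedText$, and the pending sets contribute only $\BigONotation(\RangeLen\RangeAlphabetSize^2)$; the hypothesis $\LazyRebuild\le\BlockSize\le\RangeLen$ is exactly what makes the rebuild threshold and the block decomposition well defined, which completes the argument.
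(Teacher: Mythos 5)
Your proposal is correct and follows essentially the same route as the paper's proof: the same boundary/full-block decomposition, the eager-$\NearMatrix$ versus lazy-$\PairMatrix$ split at $d=\BlockSize$, the two-pass pending-set correction with the same disjointness argument (both-pending pairs fixed in the left pass and skipped in the right pass via the $u\notin\Pending_i$ guard), and the same amortization of rebuild cost over the $\LazyRebuild$ updates that trigger it. The only differences are presentational — you spell out the four-way pending case split and the width argument for why right endpoints stay within blocks $j$ and $j{+}1$ slightly more explicitly than the paper does.
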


\begin{proof}
	\noindent
	\textbf{Preprocessing time.}
	Block partitioning and initializing \(\AppliedText\) and the sets \(\Pending_i\) take \( \BigONotation(\RangeLen) \).
	Building \(\NearMatrix_{i,a,b,d}\) for all blocks and all \(0 \le d < \BlockSize\) can be done by scanning the text once while updating per-\((a,b)\) counters, for a total of
	$$
	\BigONotation(\RangeLen \BlockSize).
	$$
	For \(\PairMatrix_{i,j,a,b,d'}\), there are \((\frac{\RangeLen}{\BlockSize})^2\) ordered block pairs and \(\RangeAlphabetSize^2\) symbol pairs; each block–pair convolution costs \( \BigONotation(\BlockSize \log \BlockSize) \) via FFT. Hence
	$$
	\BigONotation\!\left(\left(\tfrac{\RangeLen}{\BlockSize}\right)^2 \cdot \BlockSize \log \BlockSize \cdot \RangeAlphabetSize^2\right) = \BigONotation\!\left(\tfrac{\RangeLen^2}{\BlockSize} \log \BlockSize \cdot \RangeAlphabetSize^2\right).
	$$
	Combining the two terms yields the stated preprocessing bound.
	
	\noindent
	\textbf{Query time.}
	Fix a query \(\PairQuery{l}{r}{a}{b}{d}\).
	Valid left endpoints lie in \([l,\,r-d-1]\).
	The two boundary fragments together contain at most \(2\BlockSize-2\) positions, so scanning them against \(\RangeText\) costs \( \BigONotation(\BlockSize) \).
	
	For full blocks inside \([l,\,r-d-1]\), there are \(\BigONotation(\frac{\RangeLen}{\BlockSize})\) blocks to consider.
	When \(d<\BlockSize\), the contribution of each full block \(i\) is read directly from \(\NearMatrix_{i,a,b,d}\), with no correction; summing over all full blocks therefore costs \(\BigONotation(\frac{\RangeLen}{\BlockSize})\).
	This term is subsumed by the later \(\BigONotation(\frac{\RangeLen}{\BlockSize}\cdot \LazyRebuild)\) contribution arising when \(d\ge\BlockSize\), and thus does not affect the stated per-query bound.

	When \(d\ge\BlockSize\), the right endpoint for a left endpoint in block \(i\) lies in block \(j\) or \(j{+}1\), determined by \(j=\) block of \(\BlockStart{i}+d+1\), with local offsets \(d'=d-(j-i-1)\BlockSize\) and \(d''=d-(j-i)\BlockSize\) (the \(j{+}1\) term is unnecessary when \(d{+}1\) is a multiple of \(\BlockSize\)).
	The base contribution from block \(i\) is obtained by summing the valid table entries \(\PairMatrix_{i,j,a,b,d'}\) and \(\PairMatrix_{i,\,j+1,\,a,\,b,\,d''}\).
	Lazy maintenance may render some entries stale; however, corrections inspect only pending positions in the blocks \(i\), \(j\), and (if used) \(j{+}1\).
	By thresholding, each pending set has size at most \(\LazyRebuild\), so the per-block correction is \( \BigONotation(\LazyRebuild) \).
	Summing over \(\BigONotation(\frac{\RangeLen}{\BlockSize})\) full blocks yields
	$$
	\BigONotation\!\left(\frac{\RangeLen}{\BlockSize}\cdot \LazyRebuild\right),
	$$
	and adding the boundary scan gives the stated per-query bound
	\( \BigONotation\!\big( \frac{\RangeLen}{\BlockSize}\cdot \LazyRebuild + \BlockSize \big) \).
	
	\noindent
	\textbf{Amortized update time.}
	An update \(\UpdateSymbol{\mathrm{i}}{c}\) changes \(\RangeText_{\mathrm{i}}\) and restores consistency of the maintained structures.
	The set \(\Pending_b\) for the block \(b\) containing \(\mathrm{i}\) records exactly those indices where \(\RangeText\) and \(\AppliedText\) disagree; insertion/removal keeps this invariant exact.
	All affected \(\NearMatrix\) entries for \(d<\BlockSize\) are updated eagerly; each position participates in at most \(2\BlockSize\) pairs, so this costs \( \BigONotation(\BlockSize) \).
	
	As soon as \(|\Pending_b| = \LazyRebuild\), a rebuild of block \(b\) is triggered:
	\(\AppliedText\) is synchronized with \(\RangeText\) on block \(b\) in \( \BigONotation(\BlockSize) \), \(\Pending_b\) is cleared, and all \(\PairMatrix\) slices with one endpoint equal to \(b\) are recomputed.
	Each block–pair convolution costs \( \BigONotation(\BlockSize\log\BlockSize \cdot \RangeAlphabetSize^2) \), and there are \(\NumBlocks\) counterparts, for a total rebuild cost
	\( \BigONotation(\NumBlocks \cdot \BlockSize\log\BlockSize \cdot \RangeAlphabetSize^2) = \BigONotation(\RangeLen \log\BlockSize \cdot \RangeAlphabetSize^2) \).
	Amortizing this cost over the \(\LazyRebuild\) updates that triggered the rebuild gives
	\( \BigONotation\!\big(\frac{\RangeLen \log\BlockSize \cdot \RangeAlphabetSize^2}{\LazyRebuild}\big) \).
	Adding the eager part yields the stated amortized bound
	\( \BigONotation\!\big(\BlockSize + \frac{\RangeLen \log\BlockSize \cdot \RangeAlphabetSize^2}{\LazyRebuild}\big) \).
	
	\noindent
	\textbf{Space usage.}
	The table \(\NearMatrix\) stores, for each block \(i\) and each \(a,b\in\RangeAlphabet\), the \(d\) range \(0\le d<\BlockSize\), which is
	\( \BigONotation(\RangeLen \cdot \RangeAlphabetSize^2) \).
	The cross–block table \(\PairMatrix\) stores, for every ordered block pair \((i,j)\), every pair \((a,b)\), and every \(d'\in[0,2\BlockSize-2]\), which totals
	$$
	\BigONotation\!\left(\left(\frac{\RangeLen}{\BlockSize}\right)^2 \cdot \RangeAlphabetSize^2 \cdot \BlockSize\right)
	\;=\;
	\BigONotation\!\left(\frac{\RangeLen^2}{\BlockSize}\cdot \RangeAlphabetSize^2\right).
	$$
	This matches the stated space bound.
	
	\noindent
	\textbf{Correctness of the count.}
	Boundary fragments are checked directly in \(\RangeText\), so all pairs with left endpoint in a non–full fragment are counted exactly once.
	For full blocks with \(d<\BlockSize\), \(\NearMatrix_{i,a,b,d}\) is maintained eagerly from \(\RangeText\), so every pair whose left endpoint is in block \(i\) is counted exactly and requires no correction.
	For full blocks with \(d\ge\BlockSize\), the base count from \(\PairMatrix\) includes every pair whose endpoints are both non–pending.
	Pairs with at least one pending endpoint are corrected as follows:  
	(i) if the left endpoint is pending (\(u\in\Pending_i\)), the pair \((u,u+d+1)\) is adjusted according to its current status in \(\RangeText\) (decrement if previously counted but now absent; increment if previously absent but now present);  
	(ii) if the left endpoint is not pending but the right endpoint is pending (in block \(j\) or \(j{+}1\)), the pair is adjusted from the right endpoint, provided the left endpoint lies in block \(i\) and is non–pending (so the base table used the correct left symbol).
	Every pending pair is handled by exactly one of (i) or (ii), and no non–pending pair is altered, so the final total equals the number of pairs satisfying \(\RangeText_i=a\) and \(\RangeText_{i+d+1}=b\) with left endpoint in \([l,\,r-d-1]\).
\end{proof}

Now for the main problem, the algorithm distinguishes the possible cases depending on how many non-wildcard symbols appear in the pattern. 
If the pattern contains a rare symbol, then all its occurrences in the text can be inspected directly, and since at most one further non-wildcard position needs to be checked, the query runs in $\BigONotation(\tau)$ time. 
If the pattern contains only frequent symbols, three subcases arise. 
With no non-wildcards the pattern trivially matches any substring of length $m$ (a constant-time check). 
With exactly one non-wildcard, the query reduces to checking whether that symbol appears in the valid range of positions, which can be done using the set $\mathfrak{R}(c)$ in $\BigONotation(\log n)$ time. 
Finally, with two non-wildcards the problem reduces to asking whether two symbols appear in the text at the required distance, which is exactly the \RangeProblem\ defined earlier. 
To support this reduction we maintain a mapping $\MapSymbol:\alphabet\to\RangeAlphabet$ that assigns all rare symbols to~0 and gives each frequent symbol a distinct nonzero code, so that the mapped text $\RangeText$ serves as input to the Range-Pair structure. 
The procedures for preprocessing, queries, and updates are summarized in Algorithms~\ref{alg:atm2-preprocess}, \ref{alg:atm2-query}, and \ref{alg:atm2-update}. 
By Theorem~\ref{thm:range-pair}, with parameters $\BlockSize=n^{4/5}$, $\LazyRebuild=n^{3/5}$, and $\tau=n^{4/5}$, the resulting bounds match those stated in Theorem~\ref{thm:at-most-two-nonwildcard}.

\begin{algorithm}[t]
  \caption{Preprocess for At-Most-Two Non-Wildcard Symbols}
  \label{alg:atm2-preprocess}
  \KwIn{Text $\textstr \in \alphabet^{\lengthT}$, threshold $\tau$, block size $\BlockSize$}
  \KwOut{Mapped text $\RangeText \in \RangeAlphabet^{\lengthT}$, map $\MapSymbol$; Range-Pair structures on $\RangeText$}

  Classify symbols of $\alphabet$ into frequent/rare using $\tau$ \tcp*{$\BigONotation(\lengthT)$}

  Define $\RangeAlphabet = \{0,1,\dots,M\}$ with $M \coloneqq \lceil 2\lengthT/\tau \rceil$; set $\MapSymbol(x)=0$ for rare $x$, and a nonzero code in $\{1,\dots,M\}$ for frequent $x$ \tcp*{$\BigONotation(\alphabetSize)$}

  Build $\RangeText_t \gets \MapSymbol(\textstr_t)$ for all $t=1..\lengthT$ \tcp*{$\BigONotation (\lengthT)$}

  	Run Range-Pair preprocessing algorithm on $\RangeText$ \tcp*{$\BigONotation\!\left(\tfrac{\RangeLen^2}{\BlockSize} \log \BlockSize \cdot \RangeAlphabetSize^2
  	\;+\; \RangeLen \BlockSize\right)$}
\end{algorithm}

\begin{algorithm}[t]
  \caption{Query for At-Most-Two Non-Wildcard Symbols}
  \label{alg:atm2-query}
  \KwIn{Pattern $\pattern \in (\alphabet \cup \{\wildcard\})^{\lengthP}$, text length $\lengthT$, map $\MapSymbol$}
  \KwOut{Boolean: does $\pattern$ occur in $\textstr$?}

  Let $C$ be the set of non-wildcard positions of $\pattern$; assume $\lengthP \le \lengthT$\;

  \uIf{$|C|=0$}{\Return \texttt{true} \tcp*{$\BigONotation(1)$}}
  \uElseIf{$|C|=1$}{
  	Let $C=\{i\}$ with symbol $a$; let $I \gets [\,i,\ \lengthT-\lengthP+i\,]$\;
  	\Return whether $\mathfrak{R}(a)$ contains an index in $I$ \tcp*{$\BigONotation(\log \lengthT)$}
  }
  \Else(\tcp*[f]{Two non-wildcard symbols}){
  	Let $C=\{i<j\}$ with symbols $a,b$; let $d \gets j-i-1$; let $l \gets i$, $r \gets \lengthT-\lengthP+i$\;
  	\uIf{$a$ is rare}{
  		\ForEach{$p \in \mathfrak{R}(a)$ with $p \in [l,r]$}{
  			\If{$\textstr_{p+d+1} = b$}{\Return \texttt{true}}
  		}
  		\Return \texttt{false} \tcp*{$\BigONotation(\tau)$}
  	}
  	\uElseIf{$b$ is rare}{
  		\ForEach{$q \in \mathfrak{R}(b)$ with $q-d+1 \in [l,r]$}{
  			\If{$\textstr_{q-d+1} = a$}{\Return \texttt{true}}
  		}
  		\Return \texttt{false} \tcp*{$\BigONotation(\tau)$}
  	}
  	\Else{
  		\Return $\,\PairQuery{l}{r}{\MapSymbol(a)}{\MapSymbol(b)}{d} > 0\,$ \tcp*{$\BigONotation\!\left(\tfrac{\lengthT}{\BlockSize}\cdot \LazyRebuild + \BlockSize\right)$}
  	}
  }
\end{algorithm}

\begin{algorithm}[t]
	\caption{Update operations for At-Most-Two Non-Wildcard Symbols}
	\label{alg:atm2-update}
	
	\textbf{Pattern update $(i,x)$:}\\
	\Indp
	$\pattern_i \gets x$ \tcp*{$\BigONotation(1)$}
	Maintain the set $C$ of up to two non-wildcard positions of $\pattern$ \tcp*{$\BigONotation(1)$}
	\Indm
	
	\BlankLine
	
	\textbf{Text update $(i,x)$:}\\
	\Indp
	$\textstr_i \gets x$\;
	maintain the frequency class of $x$ as per threshold $\tau$ (promotion/demotion) \tcp*{amort.~$\BigONotation(1)$ per symbol}
	Perform the \emph{range-pair update} on index $i$ with new symbol $\MapSymbol(x)$ (Algorithm~\ref{alg:range-pair-update}) \tcp*{amort.~$\BigONotation\!\left(\BlockSize + \frac{\lengthT \log \BlockSize \cdot \RangeAlphabetSize^2}{\LazyRebuild}\right)$}
	\Indm
\end{algorithm}

\begin{theorem}
	\label{thm:at-most-two-nonwildcard}
	Combining the preprocessing in Algorithm~\ref{alg:atm2-preprocess}, the query procedure in Algorithm~\ref{alg:atm2-query}, and the text/pattern updates in Algorithm~\ref{alg:atm2-update}, the data structure answers the at-most-two non-wildcard dynamic wildcard matching variant with the following guarantees (for $\textstr\in\alphabet^{\lengthT}$):
	\begin{itemize}
		\item preprocessing time $\BigONotation(\lengthT^{\frac{9}{5}})$,
		\item pattern-update time $\BigONotation(1)$,
		\item text-update time $\BigONotation(\lengthT^{\frac{4}{5}}\log \lengthT)$ (amortized),
		\item per-query time $\BigONotation(\lengthT^{\frac{4}{5}})$.
	\end{itemize}
\end{theorem}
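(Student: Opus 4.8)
The plan is to read Theorem~\ref{thm:at-most-two-nonwildcard} as an assembly of the Range--Pair data structure of Theorem~\ref{thm:range-pair} with the frequent/rare dichotomy, instantiated with $\eta=\lengthT$, $\BlockSize=\lengthT^{4/5}$, $\LazyRebuild=\lengthT^{3/5}$, and $\tau=\lengthT^{4/5}$, and then to verify three things: correctness of the reduction in Algorithm~\ref{alg:atm2-query}, the amortized maintenance cost, and the parameter arithmetic. For correctness, let $C$ be the (at most two) non-wildcard positions of $\pattern$; leading or trailing runs of \wildcard\ only shift the admissible starting window $[1,\lengthT-\lengthP+1]$ and do not affect the relative offsets of the concrete symbols. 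If $|C|=0$ every length-$\lengthP$ window matches; if $C=\{i\}$ with symbol $a$ a match exists iff $\mathfrak{R}(a)$ meets the interval $[i,\,\lengthT-\lengthP+i]$, decidable in $\BigONotation(\log \lengthT)$ by Fact~\ref{fact:balanced-bst}. If $C=\{i<j\}$ with symbols $a,b$ and gap $d=j-i-1$, a match at start $s$ is exactly the event $\textstr_{s+i-1}=a$ and $\textstr_{(s+i-1)+d+1}=b$ with $s+i-1$ ranging over an interval, i.e. a \texttt{PairQuery} on the mapped text restricted to the appropriate left-endpoint range. When $a$ (or $b$) is rare we scan its $\le\tau$ recorded occurrences; when both are frequent we query the Range--Pair structure on $\RangeText=\MapSymbol(\textstr)$. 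The one point to check here is that $\MapSymbol$ is injective on frequent symbols and collapses every rare symbol to $0$, so that for a frequent symbol $a$ one has $\RangeText_u=\MapSymbol(a)$ iff $\textstr_u=a$; in particular a rare text symbol is never counted as an occurrence of a frequent queried symbol, so the returned count equals the number of valid starting positions.

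Next I would handle maintenance. A pattern update rewrites one symbol of $\pattern$ and re-derives $C$; under the standing promise $|C|\le 2$ this is $\BigONotation(1)$ and touches neither $\mathfrak{R}(\cdot)$ nor the Range--Pair structure. A text update at position $i$ must (a) move $i$ between $\mathfrak{R}(c_{\mathrm{old}})$ and $\mathfrak{R}(c_{\mathrm{new}})$ in $\BigONotation(\log \lengthT)$, (b) issue one \texttt{UpdateSymbol} on the Range--Pair structure, and (c) maintain the frequent/rare class of the two affected symbols. For (c) I would use a two-threshold (hysteresis) rule — a symbol becomes frequent only on reaching $2\tau$ occurrences and reverts to rare only on dropping below $\tau$ — so that consecutive class changes of a fixed symbol are separated by at least $\tau$ changes of that symbol's occurrence count. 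Each class change re-codes the $\BigONotation(\tau)$ current occurrences of the symbol in $\RangeText$, allocating or recycling its code from $\{1,\dots,M\}$ with $M=\lceil 2\lengthT/\tau\rceil$ (since at most $\BigONotation(\lengthT/\tau)$ symbols are frequent at once, a free code always exists), i.e. issues $\BigONotation(\tau)$ \texttt{UpdateSymbol} calls. Since a sequence of $V$ text updates triggers at most $\BigONotation(V/\tau)$ class changes, the total number of Range--Pair updates over the sequence is $\BigONotation(V)$, so the amortized cost of a text update is dominated by $\BigONotation(1)$ amortized Range--Pair updates plus the $\mathfrak{R}$-bookkeeping, i.e. $\BigONotation(\BlockSize+\lengthT\log\BlockSize\cdot\RangeAlphabetSize^{2}/\LazyRebuild+\log \lengthT)$ by Theorem~\ref{thm:range-pair}.

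It then remains to substitute the parameters. With $\RangeAlphabetSize=M+1=\BigONotation(\lengthT^{1/5})$ (so $\RangeAlphabetSize^{2}=\BigONotation(\lengthT^{2/5})$) and $\LazyRebuild\le\BlockSize\le\eta$ satisfied, Theorem~\ref{thm:range-pair} together with Algorithm~\ref{alg:atm2-preprocess} gives preprocessing $\BigONotation(\eta^{2}\BlockSize^{-1}\log\BlockSize\cdot\RangeAlphabetSize^{2}+\eta\BlockSize+\lengthT\log \lengthT)=\BigONotation(\lengthT^{8/5}\log \lengthT+\lengthT^{9/5})=\BigONotation(\lengthT^{9/5})$; query time $\BigONotation(\tau)+\BigONotation(\log \lengthT)+\BigONotation(\eta\BlockSize^{-1}\LazyRebuild+\BlockSize)=\BigONotation(\lengthT^{4/5})$ across the rare-symbol, single-symbol, and two-frequent-symbol branches; amortized text-update time $\BigONotation(\BlockSize+\lengthT\log\BlockSize\cdot\RangeAlphabetSize^{2}/\LazyRebuild+\log \lengthT)=\BigONotation(\lengthT^{4/5}\log \lengthT)$; and pattern-update time $\BigONotation(1)$, matching the claimed bounds.

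The crux of the argument is step (c): class changes are the only expensive events, each firing $\Theta(\tau)$ Range--Pair updates that are themselves only amortized, so the bound survives only because the hysteresis separation makes such events rare and because the two layers of amortization — the Range--Pair rebuild potential of Theorem~\ref{thm:range-pair} and the outer promotion/demotion potential — compose cleanly (and one must check that code recycling never stalls). The second, lighter delicate point is verifying that the lossy map $\MapSymbol$ introduces no false positives in the two-frequent-symbol branch when an aligned text position happens to carry a rare symbol; everything else is routine index bookkeeping.
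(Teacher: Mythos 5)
Your proposal is correct and follows essentially the same route as the paper: rare symbols are handled by scanning $\mathfrak{R}(\cdot)$, frequent pairs are reduced to the Range--Pair structure via the map $\MapSymbol$ with $\RangeAlphabetSize=\BigONotation(\lengthT/\tau)$, promotions/demotions are amortized against the $\Omega(\tau)$ updates separating class changes, and the parameters $\BlockSize=\lengthT^{4/5}$, $\LazyRebuild=\lengthT^{3/5}$, $\tau=\lengthT^{4/5}$ are substituted into Theorem~\ref{thm:range-pair}. The only (cosmetic) differences are your hysteresis thresholds ($2\tau$ versus the paper's $\tau$ and $\tau/2$) and your more explicit verification that the retagging work composes cleanly with the Range--Pair structure's own amortization, a point the paper glosses over but which, as you show, does not change the final $\BigONotation(\lengthT^{4/5}\log\lengthT)$ bound.
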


\begin{proof}
	We follow a similar approach to Theorem~\ref{thm:dynamic_wildcard_general}, classifying symbols into frequent and rare, but with a modified choice of \(\tau\). Since the number of frequent symbols is at most \(\frac{\lengthT}{\tau},\) we leverage this property for the optimization.
	
	\medskip
	
	\textbf{Case 1: \(\pattern\) contains a rare symbol.}
	If \(\pattern\) contains a rare symbol, we iterate over all its occurrences in \(\textstr\). For each occurrence at position \(j\), a check is performed to determine whether \(\pattern\) matches the substring \(\textstr_{j - i + 1, j - i + \lengthP}\) by verifying the symbol at the other non-wildcard position (if any) in \(\BigONotation(1)\) time, since at most one other position requires checking. The total time complexity for this case is \(\BigONotation(\tau)\).
	
	\medskip
	
	\textbf{Case 2: \(\pattern\) contains only frequent symbols and wildcards.}	
	The pattern \(\pattern\) falls into one of three categories, each handled independently:
	
	\begin{enumerate}[label=\roman*)]
		\item \textbf{No Non-Wildcard Symbols.}
		In this case, \(\pattern\) consists entirely of wildcards and trivially matches any substring of \(\textstr\) of length \(\lengthP\). We only need to verify that \(\lengthP \leq \lengthT\), which is a constant-time check.
		
		\item \textbf{One Non-Wildcard Symbol.}
		Assume \(\pattern_{i} = c\), where \(c\) is a non-wildcard symbol, and all other positions in \(\pattern\) are wildcards.  
		We aim to find a position \(j\) in \(\textstr\) such that \(\textstr_{j} = c\) and there are at least \(i - 1\) symbols to the left and \(\lengthP - i\) symbols to the right of \(j\).  
		Equivalently, we seek any index \(j\) such that
		$$
		i \leq j \leq \lengthT - \lengthP + i
		\quad \text{and} \quad
		\textstr_{j} = c.
		$$
		
		We derive the answer using the set \(\mathfrak{R}(c)\).  
		To do so, we look for the smallest index in \(\mathfrak{R}(c)\) that is greater than or equal to \(i\).  
		If this index is at most \(\lengthT - \lengthP + i\), then it constitutes a valid match; otherwise, no match exists.
		
		\item \textbf{Two Non-Wildcard Symbols.}
		Suppose \(\pattern_{i} = c\), \(\pattern_{j} = c'\), and all other positions in \(\pattern\) are wildcards.  
		A match of \(\pattern\) in \(\textstr\) corresponds to positions \(i', j'\) in \(\textstr\) such that:
		\begin{itemize}
			\item \(\textstr_{i'} = c\), \(\textstr_{j'} = c'\),
			\item \(j - i - 1 \;=\; j' - i' - 1\) (equal gap),
			\item \(i \leq i'\) and \(j' \leq \lengthT - \lengthP + j\).
		\end{itemize}
		Let \(d \coloneqq j - i - 1\) be the \emph{gap} (number of symbols strictly between the two concrete positions), so the constraint is \(j' = i' + d + 1\).
		Equivalently, the query is:
		$$
		\text{Does there exist } i' \in [l,r] \text{ such that } \textstr_{i'} = c \text{ and } \textstr_{i'+d+1} = c' \, ?
		$$
		This is exactly a range-pair instance as in Problem~\ref{prob:range-pair}.

		\paragraph*{Reduction to the \(\textit{\RangeProblem}\) structure.}
		We build the input \(\RangeText\) and alphabet \(\RangeAlphabet\) for the Range-Pair problem as follows.
		Let \(\tau>0\) be the frequency threshold used to classify \emph{frequent} vs.\ \emph{rare} symbols in the text.
		We maintain a mapping \(\MapSymbol : \alphabet \to \RangeAlphabet\) with:
		\begin{align*}
			\RangeAlphabet &\;=\; \{0,1,2,\dots, M\}, \qquad M \coloneqq \big\lceil 2\lengthT/\tau \big\rceil,\\
			\MapSymbol(x) &= 
			\begin{cases}
				0, & \text{if } x \text{ is \emph{rare} (has}\; |\mathfrak{R}(x)| \le \tau\text{)},\\[2pt]
				z \in \{1,\dots,M\}, & \text{if } x \text{ is \emph{frequent}},
			\end{cases}
		\end{align*}
		so \(|\RangeAlphabet| = \RangeAlphabetSize \le M{+}1 = \BigONotation(\lengthT/\tau)\).
		We then define the Range-Pair text by
		$$
		\RangeText_{t} \;\coloneqq\; \MapSymbol(\textstr_{t}) \qquad \text{for all } t = 1,\dots,\lengthT.
		$$
		
		\paragraph*{Query translation.}
		For a two-symbol pattern \((c,c')\) with gap \(d\) and starting range \([l,r]\), we query the Range-Pair structure with
		$$
		\PairQuery{l}{r}{\MapSymbol(c)}{\MapSymbol(c')}{d}.
		$$
		
		\paragraph*{Mapping invariants and updates.}
		We maintain two simple invariants:
		\begin{enumerate}[label=(\roman*)]
			\item Every \emph{frequent} symbol maps to a \emph{nonzero} code: \(\MapSymbol(x)\in\{1,\dots,M\}\).
			\item Every \emph{rare} symbol maps to \(0\): \(\MapSymbol(x)=0\).
		\end{enumerate}
		On a text update \(\textstr_{\mathrm{pos}} \gets x\), we update \(\RangeText_{\mathrm{pos}} \gets \MapSymbol(x)\) and maintain the invariants:
		\begin{itemize}
			\item \emph{Promotion.} If a rare symbol \(x\) exceeds frequency \(\tau\), assign it a fresh nonzero code in \(\{1,\dots,M\}\) (freeing one if needed by demoting a symbol below) and set \(\MapSymbol(x)\) to that code. This costs \(\BigONotation(\tau)\) work to retag occurrences, but such promotions happen only after \(\Omega(\tau)\) increments, so the amortized cost is \(\BigONotation(1)\) per update per symbol.
			\item \emph{Demotion.} If a frequent symbol \(y\) drops to \(\le \tau/2\) occurrences, we set \(\MapSymbol(y)\gets 0\) and convert its occurrences in \(\RangeText\) to \(0\) in \(\BigONotation(\tau)\) time. This can happen only after \(\Omega(\tau)\) decrements, hence \(\BigONotation(1)\) amortized per symbol.
			\item Otherwise, simply write \(\RangeText_{\mathrm{pos}} \gets \MapSymbol(x)\) (no remapping).
		\end{itemize}
		With the slack \(M = \lceil 2\lengthT/\tau \rceil\), there is always a nonzero code available for promotions (if the pool is exhausted, some currently coded symbol must have \(\le \tau/2\) occurrences and can be demoted), keeping \(\RangeAlphabetSize = \BigONotation(\lengthT/\tau)\) at all times.
		
		\paragraph*{Complexities via the Range-Pair theorem.}
		By Theorem~\ref{thm:range-pair}, operating on \(\RangeText \in \RangeAlphabet^{\lengthT}\) with \(\RangeAlphabetSize \le 2\lceil \lengthT/\tau \rceil{+}1\), the Range-Pair data structure supports:
		\begin{itemize}
			\item Preprocessing in 
			\(
			\BigONotation\big(\big(\tfrac{\lengthT}{\BlockSize}\big)^2 \BlockSize \log \BlockSize \cdot \RangeAlphabetSize^2 \;+\; \lengthT \BlockSize\big),
			\)
			\item Per query \(\BigONotation\!\big(\tfrac{\lengthT}{\BlockSize} \cdot \LazyRebuild + \BlockSize\big)\) for large gaps (and \(\BigONotation(\tfrac{\lengthT}{\BlockSize} + \BlockSize)\) for small gaps),
			\item Amortized update \(\BigONotation\!\big(\tfrac{\lengthT \log \BlockSize \cdot \RangeAlphabetSize^2}{\LazyRebuild} + \BlockSize\big)\).
		\end{itemize}
	\end{enumerate}
	
	\paragraph*{Final Choice.}
	In the rare symbol case (Case 1), queries take \(\BigONotation(\tau)\). For the frequent symbol case (Case 2), we fix parameters:
	$$
	\BlockSize = \lengthT^{\frac{4}{5}}, \quad \LazyRebuild = \lengthT^{\frac{3}{5}}, \quad \tau = \lengthT^{\frac{4}{5}}.
	$$
	With this choice:
	\begin{itemize}
		\item The amortized update time becomes:
		$$
		\BigONotation\left(\frac{\lengthT \log \lengthT \cdot (\frac{\lengthT}{\tau} + 1)^2}{\LazyRebuild}\right) = \BigONotation(\lengthT^{\frac{2}{5}} (\lengthT^{\frac{1}{5}} + 1)^2 \log \lengthT) = \BigONotation(\lengthT^{\frac{4}{5}} \log \lengthT).
		$$
		\item The query time is:
		$$
		\BigONotation\left(\frac{\lengthT}{\BlockSize} \cdot \LazyRebuild + \BlockSize + \tau\right) = \BigONotation(\lengthT^{\frac{4}{5}}).
		$$
		\item The preprocessing time is:
		$$
		\BigONotation\left(\left(\frac{\lengthT}{\BlockSize}\right)^2 \cdot \BlockSize \log \BlockSize \cdot (\frac{\lengthT}{\tau} + 1)^2\;+\; \lengthT \BlockSize\right) = \BigONotation(\lengthT^{\frac{6}{5}} (\lengthT^{\frac{1}{5}} + 1)^2 \log \lengthT\;+\; \lengthT \cdot \lengthT^{\frac{4}{5}}) = \BigONotation(\lengthT^{\frac{9}{5}}).
		$$
	\end{itemize}
	
\end{proof}

\begin{remark}[Insertions and deletions in the pattern]
	\label{rem:insertiondeletion}
	The algorithm also extends to handle insertions and deletions in the pattern \(\pattern\).  
	Such operations only change the length of \(\pattern\), and therefore affect only the range \([l,r]\) over which we search for answers.  
	The underlying data structures and their update/query times remain unchanged.
\end{remark}

\begin{remark}[Wildcards in the text]
	\label{rem:wildcards-in-text}
	Our approach also handles wildcard symbols in the text \(\textstr\).  
	This is achieved by mapping the wildcard `\(\wildcard\)` to a dedicated symbol in the \RangeProblem\ alphabet (e.g., code $1$), and ensuring that no other symbol maps to this code.  
	Unlike ordinary symbols, wildcards in the text are treated specially: they are exempt from the frequent/rare classification.  
	During a query \(\PairQuery{l}{r}{a}{b}{d}\), we additionally evaluate the three variants where either or both endpoints are mapped wildcards, namely
	\(\PairQuery{l}{r}{\MapSymbol(\wildcard)}{\MapSymbol(b)}{d}\),
	\(\PairQuery{l}{r}{\MapSymbol(a)}{\MapSymbol(\wildcard)}{d}\),
	and
	\(\PairQuery{l}{r}{\MapSymbol(\wildcard)}{\MapSymbol(\wildcard)}{d}\).
	This allows wildcard positions in the text to be treated consistently as matching any symbol.

	\noindent\emph{Rare endpoints.}
	If, in any of the four queries above (the base query plus the three wildcard variants), an endpoint symbol is a \emph{rare} symbol, we replace the corresponding \(\PairQuery\){.}{.}{.}{.}{.} call by a direct scan over the occurrences of that rare symbol in the valid range.
	Each such scan costs \(\BigONotation(\tau)\) since \(|\mathfrak{R}(a)|\le\tau\) for rare \(a\).
\end{remark}

\begin{remark}[Counting matches]
	\label{rem:counting-matches}
	In contrast to our other settings, for the at-most-two non-wildcard case we can compute not only existence but also the exact number of matches within the same asymptotic bounds.  
	If the pattern contains a rare symbol, we already enumerate all of its occurrences in \(\textstr\), so counting adds no extra cost.  
	If the pattern has no non-wildcard symbols, every position is a match.  
	If it has exactly one non-wildcard symbol, the answer is the number of occurrences of that symbol in the relevant range, which can be maintained in \(\BigONotation(\log \lengthT)\) time using a Fenwick tree~\cite{fenwick1994}.  
	Finally, if it has two non-wildcard symbols, the query reduces to the Range-Pair problem, for which we have already shown how to compute the exact number of matching pairs.  
	Thus, counting queries can be answered within the same time complexity as decision queries.
\end{remark}

\subsection{Fixed Wildcard Positions in Pattern}

In this subsection we study the dynamic pattern matching problem with fixed wildcards: all $k$ wildcard symbols appear only in the pattern $\pattern$, their index set $W=\{i\in[m]:\pattern_i=\texttt{?}\}$ is known in advance and immutable, and the text $\textstr$ contains no wildcards. Both $\pattern$ and $\textstr$ evolve only via substitution updates, and insertions and deletions are not allowed; therefore updates to $\pattern$ are confined to the solid coordinates $[m]\setminus W$, while updates to $\textstr$ may change any $\textstr_i\in\Sigma$. We parameterize by the number of non-wildcard positions $\nonWild=|[m]\setminus W|$ and target the sparse regime $\nonWild\ll m$. To handle this setting, we adopt a masked polynomial rolling hash that ignores the fixed wildcard indices: equivalently, we view $\pattern$ as a concatenation of solid blocks separated by wildcards and compute the hash only over these blocks, enabling comparison against substrings of $\textstr$ under substitutions while treating wildcard coordinates as don’t-cares; this leads to the following definition.

\begin{definition}
	\label{def:h_prime_new}
	Let \(S\) be a string of length \(\ell\), and let \(\pattern\) be a pattern with \(k\) wildcard positions 
	\(w_1 < w_2 < \dots < w_k\), where each \(w_i\) is the index of the \(i\)-th wildcard in \(\pattern\). We define
	$$
	\hashPrimeFunc{S} = 
	\hashFunc{\, S_{1:w_1-1} \concat S_{w_1+1:w_2-1} \concat \dots \concat S_{w_k+1:\ell} \,},
	$$
	where \(\hashFunc{\cdot}\) is the polynomial rolling hash from Definition~\ref{def:hash_function}.
\end{definition}

The value \(\hashPrimeFunc{S}\) can be computed in \(\BigONotation(k)\) time by concatenating the precomputed hash values of the 
\(k+1\) non-wildcard intervals, using Lemma~\ref{lem:concat_hash_efficiency}. Moreover, after each update in \(S\), 
the value of \(\hashPrimeFunc{S}\) can be updated in \(\BigONotation(1)\) time.  
Let us now define an auxiliary vector \(A\), where
$$
A_i = \hashPrimeFunc{\textstr_{i:i+m-1}} \qquad \text{for } 1 \leq i \leq n - m + 1,
$$
so that \(A_i\) stores the hash of the length-\(m\) substring of \(\textstr\) starting at position \(i\), with 
the fixed wildcards of \(\pattern\) ignored.

\begin{lemma}
	\label{lem:array_computation}
	If the pattern \(\pattern\) decomposes into \(l\) non-empty intervals separated by its wildcard positions, then the 
	vector \(A\) can be constructed in \(\BigONotation(n \cdot l)\) time.
\end{lemma}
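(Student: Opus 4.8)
The plan is to reduce the construction of each entry of $A$ to a constant number of $\BigONotation(1)$-time substring-hash retrievals followed by a short fold, using the concatenation identity of Lemma~\ref{lem:concat_hash_efficiency}. Concretely, I will pay an $\BigONotation(n)$ one-time setup and then $\BigONotation(l)$ per start position; since there are at most $\lengthT-\lengthP+1=\BigONotation(n)$ start positions (we may assume $\lengthP\le\lengthT$, else $A$ is empty and there is nothing to do), this gives the claimed $\BigONotation(n\cdot l)$ bound.

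First I would install the static hashing machinery on $\textstr$: in $\BigONotation(n)$ time precompute the powers $\hashBase^{0},\dots,\hashBase^{\lengthT}\bmod\hashPrime$ and all prefix hashes $\hashFunc{\textstr_{1:j}}$, so that $\hashFunc{\textstr_{\ell:r}}$ can be recovered in $\BigONotation(1)$ for any $1\le\ell\le r\le\lengthT$ by the rolling-hash property (this is exactly the ``Computational complexity'' paragraph). In the same budget, scan $\pattern$ once, split $[1,\lengthP]$ at the wildcard positions $w_1<\dots<w_k$, and discard the empty pieces (which arise when two wildcards are adjacent or a wildcard sits at position $1$ or $\lengthP$); what remains is precisely the list of $l$ maximal wildcard-free pattern intervals $[a_1,b_1],\dots,[a_l,b_l]$ with $a_1\le b_1<a_2\le\dots$ and lengths $\ell_t=b_t-a_t+1$.

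Now fix a start position $i\in\{1,\dots,\lengthT-\lengthP+1\}$. Under the alignment $\textstr_{i:i+\lengthP-1}\leftrightarrow\pattern$, the $t$-th wildcard-free block of $\textstr_{i:i+\lengthP-1}$ is $\textstr_{\,i+a_t-1:\,i+b_t-1}$, so by Definition~\ref{def:h_prime_new},
$$
A_i=\hashPrimeFunc{\textstr_{i:i+\lengthP-1}}
=\hashFunc{\textstr_{\,i+a_1-1:\,i+b_1-1}\concat\cdots\concat\textstr_{\,i+a_l-1:\,i+b_l-1}}.
$$
I would evaluate this by a left-to-right fold: initialize $h\gets 0$, and for $t=1,\dots,l$ retrieve $g_t=\hashFunc{\textstr_{\,i+a_t-1:\,i+b_t-1}}$ in $\BigONotation(1)$ and update $h\gets(h\cdot\hashBase^{\ell_t}+g_t)\bmod\hashPrime$, reading $\hashBase^{\ell_t}$ from the precomputed table (valid since $\ell_t\le\lengthP\le\lengthT$). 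By repeated application of the concatenation identity $\hashFunc{S\concat S'}=(\hashFunc{S}\cdot\hashBase^{|S'|}+\hashFunc{S'})\bmod\hashPrime$ of Lemma~\ref{lem:concat_hash_efficiency}, after the fold $h=A_i$. Each step is constant time, so one entry costs $\BigONotation(l)$, and summing over all start positions yields $\BigONotation(n\cdot l)$; the $\BigONotation(n)$ setup is absorbed since $l\ge 1$.

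The work here is bookkeeping rather than depth, and that is where I would be careful: correctly translating each pattern-coordinate interval $[a_t,b_t]$ into the text-coordinate interval $[i+a_t-1,i+b_t-1]$ for every alignment; making sure the empty intervals are dropped so that the fold touches exactly $l$ blocks (matching the parameter in the statement); and applying the exponents in the order dictated by the right-to-left convention of Definition~\ref{def:hash_function}, i.e.\ shifting the already-accumulated prefix by $\hashBase^{\ell_t}$ \emph{before} adding the new block's hash. These are precisely the hypotheses under which Lemma~\ref{lem:concat_hash_efficiency} applies, so no new estimate is needed beyond observing that the total work is a sum of $\BigONotation(n)$ folds, each of length $l$.
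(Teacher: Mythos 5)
Your proposal is correct and follows essentially the same route as the paper's proof: an $\BigONotation(n)$ setup for constant-time substring hashes, followed by an $\BigONotation(l)$-time fold per start position via the concatenation identity of Lemma~\ref{lem:concat_hash_efficiency}, summed over $\BigONotation(n)$ alignments. Your version is simply more explicit about the index translation and the fold order, which the paper leaves implicit.
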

\begin{proof}
	The vector \(A\) has \(n - m + 1\) entries, where each entry is of the form 
	\(A_i = \hashPrimeFunc{\textstr_{i:i+m-1}}\). Since \(\pattern\) decomposes into \(l\) non-empty intervals, the substring 
	\(\textstr_{i:i+m-1}\) is split into at most \(l\) corresponding segments. The hash of each segment can be computed 
	in \(\BigONotation(1)\) time using the polynomial rolling hash from Definition~\ref{def:hash_function}. Concatenating 
	these \(l\) segment hashes into \(\hashPrimeFunc{\textstr_{i:i+m-1}}\) requires \(\BigONotation(l)\) time by 
	Lemma~\ref{lem:concat_hash_efficiency}. Therefore, computing one entry of \(A\) costs \(\BigONotation(l)\) time, and 
	computing all \(\BigONotation(n)\) entries requires \(\BigONotation(n l)\) time in total.
\end{proof}

Together, Definition~\ref{def:h_prime_new} and Lemma~\ref{lem:array_computation} provide the key ingredients for 
efficient preprocessing in the sparse-pattern setting. The modified hash \(\hashPrimeFunc{S}\) ensures that 
wildcards are ignored, while the vector \(A\) stores precomputed values for all substrings of \(\textstr\) 
that need to be compared against \(\pattern\). The lemma guarantees that this preprocessing can be done in 
\(\BigONotation(n \cdot l)\) time, where \(l\) is the number of non-empty intervals in \(\pattern\). 
Algorithm~\ref{alg:sparse-pattern} then presents the full method for handling updates. 
Intuitively, when the pattern has only a few fixed symbols and the rest are wildcards, we can ignore 
wildcard positions in all computations. Let \(C=\{c_1<\cdots<c_{\nonWild}\}\) be the indices of the 
non-wildcard positions in \(\pattern\). For every window \(\textstr_{i:i+\lengthP-1}\) of the text, we 
compute \(A_i = \hashPrimeFunc{\textstr_{i:i+\lengthP-1}}\) based only on the symbols aligned with \(C\). 
We also maintain the single value \(\hashPrimeFunc{\pattern}\). A match exists iff 
\(\hashPrimeFunc{\pattern}\) appears among \(\{A_1,\dots,A_{\lengthT-\lengthP+1}\}\); this can be checked efficiently with a membership query, e.g., in \(\BigONotation(\log n)\) time using a balanced \textsf{BST}.

If the pattern changes at a non-wildcard position, we simply recompute \(\hashPrimeFunc{\pattern}\) and test 
membership once. If the text changes at position \(j\), only those windows whose alignment uses \(j\) at one 
of the non-wildcard positions can be affected. Equivalently, for each \(i\in C\) with \(j \ge i\), only 
\(A_{\,j-i+1}\) must be recomputed. Thus a single text update triggers at most \(|C|=\nonWild\) 
window-hash updates.

\begin{example}
	Let \(\pattern=\texttt{?b??a}\) with \(\lengthP=5\) and non-wildcard indices \(C=\{2,5\}\).
	Let \(\textstr=\texttt{cabyzacde}\) with \(\lengthT=9\).
	Consider the alignment starting at \(i=2\) (i.e., substring \(\textstr_{2:6}=\texttt{abyza}\)).
	The two non-wildcard checks are:
	$$
	\pattern_2{=}\texttt{b}\ \stackrel{?}{=}\ \textstr_{2+2-1}{=}\textstr_3{=}\texttt{b}, 
	\qquad
	\pattern_5{=}\texttt{a}\ \stackrel{?}{=}\ \textstr_{2+5-1}{=}\textstr_6{=}\texttt{a},
	$$
	so this alignment matches.
	Accordingly, \(A_2 = \hashPrimeFunc{\textstr_{2:6}}\) equals \(\hashPrimeFunc{\pattern}\).
	
	Now suppose we update the text at position \(j=6\), changing \(\textstr_6\) from \(\texttt{a}\) to \(\texttt{x}\).
	Only the window hashes whose aligned non-wildcard positions use \(j\) can change:
	for \(i=2\) we must recompute \(A_{j-c+1}=A_5\); for \(i=5\) we must recompute \(A_{j-i+1}=A_2\).
	In particular, \(A_2\) will no longer equal \(\hashPrimeFunc{\pattern}\) since \(\textstr_6\) no longer matches \(\pattern_5{=}\texttt{a}\). \emph{See Figure~\ref{fig:sparse-example} for a visualization of the alignment at \(i=2\), the update at \(j=6\), and the affected window hashes.}
\end{example}
\begin{figure}[t]
	\centering
	\scalebox{0.65}{
		\begin{tikzpicture}[
			box/.style={draw, thick, rounded corners, minimum width=0.7cm, minimum height=0.7cm, font=\ttfamily},
			pat/.style={box, fill=blue!8, draw=blue!60},
			wild/.style={box, fill=yellow!25, draw=orange!70!black},
			txt/.style={box, fill=green!8, draw=green!60},
			idx/.style={font=\footnotesize, below, text=gray!70},
			matchrect/.style={draw=magenta!80, thick, dashed, rounded corners},
			update/.style={draw=red!70, very thick, rounded corners},
			arrow/.style={-stealth, thick},
			note/.style={font=\footnotesize},
			background/.style={fill=gray!5, draw=gray!20, rounded corners}
			]
			
			\node[anchor=east] at (-0.7,1.8) {$\pattern=$};
			\foreach \i/\ch/\sty in {1/?/wild,2/b/pat,3/?/wild,4/?/wild,5/a/pat}{
				\begin{scope}[shift={(\i,1.8)}]
					\node[\sty] (p\i) {\ch};
					\node[idx] at (0,-0.45) {\i};
				\end{scope}
			}
			
			\node[anchor=east] at (-0.7,0) {$\textstr=$};
			\foreach \i/\ch in {1/c,2/a,3/b,4/y,5/z,6/a,7/c,8/d,9/e}{
				\begin{scope}[shift={(\i,0)}]
					\node[txt] (t\i) {\ch};
					\node[idx] at (0,-0.45) {\i};
				\end{scope}
			}
			
			\draw[matchrect] (1.55,-0.9) rectangle (6.45,0.9);
			
			\draw[arrow,magenta!80] (p2.south) .. controls +(0,-0.7) and +(0,0.7) .. (t3.north);
			\draw[arrow,magenta!80] (p5.south) .. controls +(0,-1.1) and +(0,1.1) .. (t6.north);
			
			\draw[update] (5.55,-0.35) rectangle (6.45,0.35);
			\node[note,red!70] at (6,-1.2) {update at $j=6$};
			
			\node[note,align=left] at (8.9,1.7)
			{\begin{tabular}{@{}l@{}}
					recompute: \\
					$c=2 \Rightarrow A_{6-2+1}=A_5$ \\
					$c=5 \Rightarrow A_{6-5+1}=A_2$
			\end{tabular}};
			
			\begin{scope}[on background layer]
				\node[background, inner sep=5mm, fit=(current bounding box)] {};
			\end{scope}
			
		\end{tikzpicture}
	}
	\caption{Sparse-pattern example for \(\pattern=\texttt{?b??a}\) and \(\textstr=\texttt{cabyzacde}\).
		The dashed magenta box highlights the alignment at \(i=2\) (substring \(\textstr_{2:6}=\texttt{abyza}\)); arrows show the two non-wildcard checks at pattern positions \(2\) and \(5\).
		The red rectangle marks the text update at \(j=6\) (\(\texttt{a}\!\to\!\texttt{x}\)), which invalidates the match at \(i=2\) and forces recomputation of \(A_5\) and \(A_2\) (as indicated on the right).}
	\label{fig:sparse-example}
\end{figure}
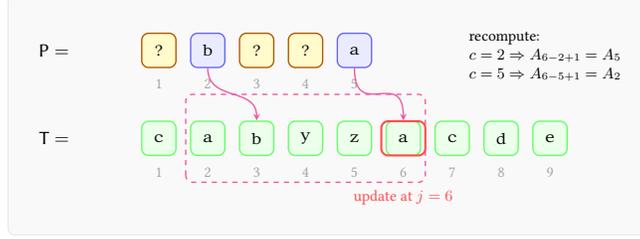

\begin{algorithm}[t]
	\caption{Updates with Preprocessed Data (Sparse Pattern Matching)}
	\label{alg:sparse-pattern}
	\SetKwInOut{Input}{Input}
	\SetKwInOut{Output}{Output}
	
	\Input{Current \(\textstr,\pattern\); preprocessed \(A\), \(C\); lengths \(\lengthT,\lengthP\)}
	\Output{After each update, report whether \(\pattern\) matches a substring of \(\textstr\)}
	
	Apply the substitution in the pattern\;
	\If{\(\hashPrimeFunc{\pattern} \in \{A_1,\dots,A_{\lengthT-\lengthP+1}\}\)}{
		\Return Match Found
	}
	\Return No Match Found

	\BlankLine
	\tcp{(B) Text update at position \(j\): \(\textstr_j \gets c_{\mathrm{new}}\)}
	Apply the substitution in the text\;
	\ForEach{\(i \in C\) \KwSty{with} \(j \ge i\)}{
		\tcp{Recompute only the affected window \(A_{j-i+1}\) if it is in range}
		\If{\(1 \le j-i+1 \le \lengthT-\lengthP+1\)}{
			\(A_{\,j-i+1} \gets \hashPrimeFunc{\textstr_{\,j-i+1 :\, j-i+\lengthP}}\)\;
		}
	}
	\If{\(\hashPrimeFunc{\pattern} \in \{A_1,\dots,A_{\lengthT-\lengthP+1}\}\)}{
		\Return Match Found
	}
	\Return No Match Found
	
\end{algorithm}

\begin{theorem}
	\label{thm:sparse-pattern}
	Algorithm~\ref{alg:sparse-pattern} maintains the dynamic pattern matching problem with fixed wildcards 
	for a pattern \(\pattern\) of length \(m\) containing at most \(\nonWild\) non-wildcard positions.  
	It requires preprocessing time \(\BigONotation(n \cdot \nonWild)\) and supports updates to either \(\pattern\) or \(\textstr\) 
	in \(\BigONotation(\nonWild + \log n)\) time per update.  
	The algorithm always detects a match correctly when one exists.  
	When no match exists, it reports ``no match'' with high probability.
\end{theorem}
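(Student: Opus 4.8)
The plan is to read off both claims from the hashing machinery already in place: encode each of the $\le n-m+1$ candidate alignments by a single masked hash value $A_i=\hashPrimeFunc{\textstr_{i:i+m-1}}$, maintain the masked hash $h_P:=\hashPrimeFunc{\pattern}$ of the pattern, and reduce the existence query to ``is $h_P$ among the $A_i$''. In preprocessing I would first record the immutable data: the wildcard set $W$, the non-wildcard index list $C=\{c_1<\dots<c_{\nonWild}\}$ together with the rank $t$ of each $c_t$, and the powers $b^0,\dots,b^{\max(n,m)}\bmod p$. By Lemma~\ref{lem:array_computation}, using prefix hashes of $\textstr$ and the concatenation rule of Lemma~\ref{lem:concat_hash_efficiency}, the array $A$ is built in $\BigONotation(n\ell)$ time, where $\ell\le\nonWild$ is the number of maximal non-wildcard runs of $\pattern$; hence this step is $\BigONotation(n\nonWild)$. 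Alongside $A$ I would keep a dictionary $D$ sending each value to the number of windows with that value, the cached value $h_P$, and a counter $\NumberOfMatches:=D[h_P]$; with a hash table the build is $\BigONotation(n)$, so total preprocessing is $\BigONotation(n\nonWild)$ as claimed. (If $m>n$ there are no windows and the answer is always ``no''.)

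For correctness the crux is a clean characterisation of occurrences in this restricted model: since $\textstr$ has no wildcards and the wildcard set $W$ of $\pattern$ is fixed, $\pattern$ matches $\textstr_{i:i+m-1}$ iff $\pattern_t=\textstr_{i+t-1}$ for all $t\in[m]\setminus W$, i.e.\ iff the string obtained by deleting the coordinates of $W$ from $\pattern$ coincides with the string obtained by deleting the same coordinates from $\textstr_{i:i+m-1}$. Because $\hashPrimeFunc{\cdot}$ is precisely $\hashFunc{\cdot}$ evaluated on that masked string (Definition~\ref{def:h_prime_new}), any genuine occurrence forces $h_P=A_i$; therefore, whenever a match exists, the test of Algorithm~\ref{alg:sparse-pattern} (equivalently $\NumberOfMatches>0$) succeeds with probability $1$. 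This settles the ``always detects a match'' part with no error. For the other direction, if no occurrence exists then every $i$ gives a pair of \emph{distinct} masked strings of length $\le\nonWild\le n$; a spurious ``Match Found'' requires a hash collision among these $\le n-m+1$ pairs, which by Lemma~\ref{lem:pairs-collision-union} has probability at most $n^2/p$, and thus at most $1/n$ for the global choice $p>n^3$ (taking $p$ a larger polynomial makes the failure probability $1/\mathrm{poly}(n)$ over any polynomially long update sequence). Hence ``no match'' is reported correctly with high probability.

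For the update bounds: a pattern substitution can only change a non-wildcard coordinate (wildcards are immutable), say the $t$-th one, which changes exactly one symbol of the masked pattern string; so $h_P$ is refreshed in $\BigONotation(1)$ time by adding $(\pi(c_{\mathrm{new}})-\pi(c_{\mathrm{old}}))\,b^{\nonWild-t}\bmod p$ using the precomputed powers, after which one dictionary read resets $\NumberOfMatches$. A text substitution at position $j$ can affect the window starting at $i$ only when the offset $j-i+1$ lies in $C$, i.e.\ only for the $\le\nonWild$ windows $i=j-c_t+1$; within each of those windows the edit is a single symbol at window-offset $c_t$, which is masked rank $t$, so $A_i$ is refreshed in $\BigONotation(1)$ by the same formula, and $D$ and $\NumberOfMatches$ are corrected by a decrement at the old value, an increment at the new value, and a $\pm1$ on $\NumberOfMatches$ whenever one of these equals $h_P$. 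Since the query is just ``$\NumberOfMatches>0$'', a pattern update costs $\BigONotation(\log n)$ (a single lookup; $\BigONotation(1)$ with a hash dictionary) and a text update costs $\BigONotation(\nonWild)$ for the window recomputations plus $\BigONotation(\nonWild)$ dictionary corrections, i.e.\ $\BigONotation(\nonWild+\log n)$, as claimed.

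I expect the genuinely delicate point to be the correctness equivalence, i.e.\ checking that ``delete exactly the coordinates of $W$ from both strings'' faithfully captures ``$\pattern$ matches'' here, and that this encoding is \emph{stable under updates} — the rank $t$ of a non-wildcard coordinate never moves because $W$ is immutable, which is exactly what keeps every incremental hash update $\BigONotation(1)$. The only tight (rather than generous) step is the text update: one edit really does touch $\nonWild$ windows, so to stay within $\BigONotation(\nonWild+\log n)$ one must maintain the window-hash multiset through constant-time dictionary operations and keep the explicit counter $\NumberOfMatches$, so that the occurrence query itself is $\BigONotation(1)$ and only the $\nonWild$ incremental corrections contribute.
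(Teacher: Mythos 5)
Your proposal is correct and follows essentially the same route as the paper: build the masked-hash array $A$ via Lemma~\ref{lem:array_computation}, maintain $\hashPrimeFunc{\pattern}$ incrementally, recompute only the $\le\nonWild$ affected windows per text edit, and reduce the query to a membership test (the paper uses a balanced \textsf{BST} where you use a counting dictionary, both within the stated $\BigONotation(\nonWild+\log n)$ bound). Your explicit one-sided correctness argument and the collision bound via Lemma~\ref{lem:pairs-collision-union} are in fact slightly more detailed than the paper's own proof.
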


\begin{proof}
	We first construct the vector \(A\), where each entry \(A_i\) stores the hash of the substring 
	\(\textstr_{i:i+m-1}\) restricted to the non-wildcard positions of \(\pattern\).  
	Since \(\pattern\) contains at most \(\nonWild\) non-wildcards, all values of \(A\) can be computed in 
	\(\BigONotation(n \cdot \nonWild)\) time (Lemma~\ref{lem:array_computation}).  
	The initial hash value \(\hashPrimeFunc{\pattern}\) is computed in \(\BigONotation(\nonWild)\) time.
	
	\smallskip
	\noindent\textbf{Update on \(\pattern\).}  
	If a symbol of \(\pattern\) is modified, only its contribution to \(\hashPrimeFunc{\pattern}\) changes.  
	By the rolling-hash property (Definition~\ref{def:h_prime_new}), this modification can be updated in constant time.
	
	\smallskip
	\noindent\textbf{Update on \(\textstr\).}  
	Suppose that symbol \(\textstr_j\) of the text is modified.  
	Each non-wildcard position \(q\) of \(\pattern\) aligns with \(\textstr_j\) in exactly one substring \(\textstr_{i:i+m-1}\), namely where 
	\(i = j - q + 1\).  
	Consequently, at most \(\nonWild\) entries of \(A\) are affected.  
	Each affected entry can be recomputed in constant time, so the update to \(\textstr\) requires \(\BigONotation(\nonWild)\) time.
	
	\smallskip
	\noindent\textbf{Checking for matches.}  
	To test whether a match exists, the algorithm checks if
	$$
	\hashPrimeFunc{\pattern} \in \{ A_1, A_2, \dots, A_{n-m+1} \}.
	$$
	If equality holds for some \(i\), then \(\pattern\) and \(\textstr_{i:i+m-1}\) coincide on all non-wildcard positions, which by definition means that \(\pattern\) matches \(\textstr\) at position \(i\).
	This can be checked with a membership query, i.e., verifying whether 
	\(\hashPrimeFunc{\pattern} \in \{A_1,\dots,A_{n-m+1}\}\), 
	which can be performed in \(\BigONotation(\log n)\) time using a balanced BST.

\end{proof}

The general statement above can be specialized to different regimes of sparsity.  
First, if the number of non-wildcard positions is bounded by a sublinear power of \(n\), we obtain the following corollary.

\begin{corollary}
	\label{cor:sparse}
	If the number of non-wildcard positions satisfies \(\nonWild \leq n^{1-\delta}\) for some constant \(0 < \delta < 1\), 
	then Algorithm~\ref{alg:sparse-pattern} runs in preprocessing time \(\BigONotation(n^{2-\delta})\) and supports updates in \(\BigONotation(n^{1-\delta})\) time.  
\end{corollary}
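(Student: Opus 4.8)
The plan is to obtain Corollary~\ref{cor:sparse} as a direct instantiation of Theorem~\ref{thm:sparse-pattern}, so almost no new argument is needed. Theorem~\ref{thm:sparse-pattern} already guarantees that Algorithm~\ref{alg:sparse-pattern} has preprocessing time $\BigONotation(n\cdot\nonWild)$ and per-update time $\BigONotation(\nonWild+\log n)$ for a pattern with at most $\nonWild$ non-wildcard positions, together with the correctness guarantees (always detects an existing match, reports ``no match'' with high probability otherwise). The only task is therefore to substitute the hypothesis $\nonWild\le n^{1-\delta}$ into these bounds and simplify.

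First I would plug $\nonWild\le n^{1-\delta}$ into the preprocessing bound, giving $\BigONotation(n\cdot n^{1-\delta})=\BigONotation(n^{2-\delta})$, which is exactly the stated preprocessing time. Next I would treat the update bound: substitution yields $\BigONotation(n^{1-\delta}+\log n)$, and since $\delta$ is a fixed constant with $0<\delta<1$ we have $1-\delta>0$, so $\log n=\LittleONotation(n^{1-\delta})$ and the additive logarithmic term is absorbed, leaving update time $\BigONotation(n^{1-\delta})$ as claimed. Correctness transfers verbatim from Theorem~\ref{thm:sparse-pattern}, since the algorithm and its hash-based matching test are unchanged.

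There is no genuine obstacle in this corollary; the single point worth flagging in the write-up is that the domination $\log n=\LittleONotation(n^{1-\delta})$ relies on $\delta$ being a constant bounded away from $1$. If $\delta$ were permitted to depend on $n$ and tend to $1$, the $\log n$ term could dominate and the simplified $\BigONotation(n^{1-\delta})$ update bound would no longer be justified, so the constancy of $\delta$ should be stated explicitly when the simplification is performed.
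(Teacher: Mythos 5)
Your proposal is correct and matches the paper's (implicit) argument exactly: the corollary is a direct substitution of $\nonWild \le n^{1-\delta}$ into the bounds of Theorem~\ref{thm:sparse-pattern}, with the $\log n$ term absorbed since $\delta<1$ is constant. The paper gives no separate proof for this corollary, and your write-up supplies precisely the routine simplification it leaves implicit.
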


Moreover, since the number of non-wildcard positions can never exceed the pattern length \(m\), we can directly derive the following bound.

\begin{corollary}
	\label{cor:dense}
	Since \(\nonWild \leq m\), the algorithm also admits the trivial upper bound obtained by setting \(\nonWild = m\).  
	In this case, Algorithm~\ref{alg:sparse-pattern} requires preprocessing time \(\BigONotation(n \cdot m)\) and update time \(\BigONotation(m + \log n)\).  
\end{corollary}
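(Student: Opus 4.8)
The plan is to invoke Theorem~\ref{thm:sparse-pattern} directly and substitute the crude bound \(\nonWild \le m\). First I would observe that the set of non-wildcard positions of \(\pattern\) is by definition a subset of \(\{1,\dots,m\}\), so \(\nonWild = |[m]\setminus W| \le m\) holds unconditionally, with equality exactly when \(\pattern\) contains no wildcards at all. Hence the fixed-wildcard setting of Theorem~\ref{thm:sparse-pattern} applies, and we may take the parameter \(\nonWild\) to be as large as \(m\) in the worst case.

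Next I would note that both quantities in Theorem~\ref{thm:sparse-pattern}, namely the preprocessing time \(\BigONotation(n\cdot \nonWild)\) and the per-update time \(\BigONotation(\nonWild + \log n)\), are nondecreasing in \(\nonWild\); therefore replacing \(\nonWild\) by its maximum value \(m\) yields valid (if possibly loose) upper bounds. Substituting gives preprocessing time \(\BigONotation(n\cdot m)\) and per-update time \(\BigONotation(m + \log n)\), which is exactly the claimed statement. The correctness guarantees — always detecting a match when one exists, and reporting ``no match'' with high probability otherwise — are inherited verbatim from Theorem~\ref{thm:sparse-pattern}, since neither Algorithm~\ref{alg:sparse-pattern} nor its analysis changes; only the parameter is instantiated.

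There is essentially no obstacle here: the corollary is a pure specialization of the preceding theorem. The only point worth a quick sanity check is that Algorithm~\ref{alg:sparse-pattern} and the masked hash of Definition~\ref{def:h_prime_new} remain meaningful in the degenerate case \(W=\varnothing\), where there are no wildcards and \(\hashPrimeFunc{\cdot}\) collapses to the ordinary rolling hash \(\hashFunc{\cdot}\); in that case the \(k+1\) non-wildcard intervals reduce to a single interval of length \(m\), the array \(A\) simply stores the hashes of all length-\(m\) windows of \(\textstr\), and every running-time bound in the proof of Theorem~\ref{thm:sparse-pattern} goes through with \(\nonWild\) replaced by \(m\).
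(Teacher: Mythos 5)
Your proposal is correct and matches the paper's (implicit) argument exactly: the corollary is stated as a direct specialization of Theorem~\ref{thm:sparse-pattern} via the trivial bound \(\nonWild \le m\) and the monotonicity of both time bounds in \(\nonWild\), which is precisely what you do. The extra sanity check for the degenerate case \(W=\varnothing\) is harmless and not needed, but does no damage.
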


	\section{Hardness of \OurProb}

\label{sec:hardness}

In this section, we establish a conditional lower bound for \OurProb\ through a reduction from the \OV problem. 
Our goal is to show that any significant improvement in the query time for \OurProb\ would imply a breakthrough for \OV, 
which is widely believed to be unlikely under the Strong Exponential Time Hypothesis ($\mathsf{SETH}$).

\begin{problem}[The Orthogonal Vectors Problem]
	Given a set \( A \) of \( \lengthT \) vectors from \( \{0, 1\}^d \), the \OV\ problem asks whether there exists a pair of distinct vectors 
	\( U, V \in A \) such that their inner product is zero:
	$$
	\sum_{h=1}^d U_h \cdot V_h = 0.
	$$
	That is, no coordinate position contains a \(1\) in both \(U\) and \(V\).
\end{problem}
Williams \cite{williams2005new} establishes the following connection between \OV\ and $\mathsf{SETH}$.
\begin{theorem}[\cite{williams2005new}]
	\label{thm:ov}
	If there exists \( \varepsilon > 0 \) such that, for all constants \( c \), the \OV\ problem on a set of \( \lengthT \) vectors of 
	dimension \( d = c \log \lengthT \) can be solved in 
	\( 2^{\LittleONotation(d)} \cdot \lengthT^{2-\varepsilon} \) time, 
	then $\mathsf{SETH}$ is false.
\end{theorem}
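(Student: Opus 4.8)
The plan is to prove the contrapositive: assuming the stated \OV\ algorithm exists, build a CNF-SAT algorithm fast enough to refute \seth. The backbone is the classical \emph{split-and-list} reduction, made dimension-efficient by the Sparsification Lemma of Impagliazzo, Paturi, and Zane. Fix the $\varepsilon>0$ guaranteed by the \OV\ algorithm. Given a $q$-CNF formula $\varphi$ on $N$ variables, for an arbitrary fixed clause-width $q$, first apply sparsification with a small parameter $\delta>0$ to rewrite $\varphi$ as an OR of at most $2^{\delta N}$ formulas $\varphi_1,\dots,\varphi_t$, each a $q$-CNF on the same $N$ variables but with only $m = O_{q,\delta}(N)$ clauses; this decomposition costs $2^{\delta N}\,\mathrm{poly}(N)$ time. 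It then suffices to test each $\varphi_i$ for satisfiability separately.

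For a fixed $\varphi_i$, partition its variables into halves $X_1,X_2$ of size $N/2$. For every assignment $\alpha$ to $X_1$ create a vector $u_\alpha\in\{0,1\}^{m+2}$ whose $j$-th coordinate (for $j\le m$) is $0$ exactly when $\alpha$ already satisfies clause $j$, with the last two coordinates set to $(1,0)$; symmetrically, for every assignment $\beta$ to $X_2$ create $v_\beta\in\{0,1\}^{m+2}$ with last coordinates $(0,1)$. Let $A$ be the resulting set of $n = 2\cdot 2^{N/2}$ vectors. The key identity is that $u_\alpha\cdot v_\beta = 0$ iff every clause of $\varphi_i$ is satisfied by $\alpha$ or by $\beta$, i.e.\ iff the joint assignment $\alpha\cup\beta$ satisfies $\varphi_i$; and the two sentinel coordinates guarantee that an orthogonal pair in $A$ can never consist of two $X_1$-vectors or two $X_2$-vectors, so no spurious pairs arise. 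Hence $\varphi_i$ is satisfiable iff $A$ contains an orthogonal pair, which is exactly the \OV\ question on $A$.

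Now do the accounting. We have $n=\Theta(2^{N/2})$ vectors of dimension $d = m+2 = O(N)$, so $d = c'\log n$ for a constant $c'$ depending only on $q$ and $\delta$ — exactly the regime the hypothesis covers. Running the assumed $2^{\LittleONotation(d)}\,n^{2-\varepsilon}$-time algorithm on each of the $t\le 2^{\delta N}$ instances and summing gives total time $2^{\delta N}\cdot 2^{\LittleONotation(N)}\cdot 2^{(1-\varepsilon/2)N}$, since $n^{2-\varepsilon}=\Theta\!\big(2^{(1-\varepsilon/2)N}\big)$. Choosing $\delta<\varepsilon/4$ makes the exponent $1-\varepsilon/4+\LittleONotation(1)$, hence the running time is at most $2^{(1-\varepsilon/8)N}$ for all large $N$. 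The savings $\varepsilon/8$ is uniform over $q$, so $q$-CNF-SAT is solvable in $2^{(1-\eta)N}$ for a fixed $\eta>0$ and every $q$, contradicting \seth.

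The main obstacle is controlling the dimension. The naive split-and-list reduction already yields the orthogonality encoding above, but with $d$ equal to the clause count $m$, which for a general CNF can be polynomially larger than $\log n = N/2$; then the $2^{\LittleONotation(d)}$ slack would be $2^{\LittleONotation(m)}$, far too large to absorb into the SAT bound. The Sparsification Lemma is precisely what forces $m=O(N)$ so that $d=O(\log n)$ and the slack collapses to $2^{\LittleONotation(N)}$. A secondary, minor point is the quantifier alignment: because the constant $c'$ in $d=c'\log n$ depends on the arbitrary width $q$ (and on $\delta$), the \OV\ hypothesis must hold for \emph{every} constant $c$, which is why the theorem is stated with that quantifier — and one also checks that the number $t$ of sparsified instances, being $2^{\delta N}$, is swallowed by the same slack.
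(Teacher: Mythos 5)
The paper does not prove Theorem~\ref{thm:ov}; it is imported verbatim from Williams~\cite{williams2005new}, so there is no in-paper argument to compare against. Your proof is the standard and correct one: sparsification to force $O(N)$ clauses (hence $d=O(\log n)$), split-and-list with the clause-indicator encoding, sentinel coordinates to rule out same-half pairs under the single-set formulation of \OV\ used here, and the accounting that absorbs both the $2^{\delta N}$ sparsification branches and the $2^{\LittleONotation(d)}=2^{\LittleONotation(N)}$ slack by taking $\delta<\varepsilon/4$, yielding a savings uniform in the clause width $q$. You also correctly identify why the ``for all constants $c$'' quantifier is needed, since the dimension constant depends on $q$ and $\delta$.
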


We now show a reduction of  \OV\  to  \OurProb. 
This reduction forms the backbone of our lower bound.


\begin{lemma}
	\label{lemma:ov}
	An instance of \OV\ with a set \( A \) of \( \lengthT \) vectors in dimension \( d \) can be reduced to an instance of \OurProb\ with 
	pattern length \( \lengthP = d+2 \) and text length \( \BigONotation(\lengthT \cdot d) \). 
	If \OurProb\ can be solved with preprocess time \( h(\lengthT) \) and query time \( g(\lengthT) \), then \OV\ can be solved in 
	\( \BigONotation(h(d \cdot \lengthT) + d \cdot \lengthT \cdot g(d \cdot \lengthT)) \) time.
\end{lemma}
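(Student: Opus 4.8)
The plan is a gadget reduction in the style of classical Orthogonal Vectors lower bounds. Write $A=\{V_1,\dots,V_\lengthT\}\subseteq\{0,1\}^d$. I would first dispose of a degenerate case: in $\BigONotation(\lengthT d)$ time scan $A$, and if the all-zeros vector occurs (and $\lengthT\ge 2$) report that the \OV{} instance is positive, since the zero vector is orthogonal to every vector; hence I may assume every $V\in A$ is nonzero, so $\langle V,V\rangle\ge 1$. This nonzero normalization is exactly what makes the distinctness requirement of \OV{} automatic in the construction below.

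Next I would fix the encoding over the alphabet $\{0,1,\#\}$, with $\#$ a fresh separator and the wildcard $\wildcard$ also available. For $v\in\{0,1\}^d$ let the \emph{text gadget} $E_T(v)\in\{0,1\}^d$ have $E_T(v)_h=1$ when $v_h=0$ and $E_T(v)_h=0$ when $v_h=1$; for $u\in\{0,1\}^d$ let the \emph{pattern gadget} $E_P(u)$ have $E_P(u)_h=1$ when $u_h=1$ and $E_P(u)_h=\wildcard$ when $u_h=0$. The text is $\textstr=\#\,E_T(V_1)\,\#\,E_T(V_2)\,\#\cdots\#\,E_T(V_\lengthT)\,\#$, of length $\lengthT(d+1)+1=\BigONotation(\lengthT d)$ and containing no wildcard, and the pattern is always $\pattern=\#\,E_P(u)\,\#$ for some $u$, of length $\lengthP=d+2$ and with at most $d$ wildcards; I would therefore instantiate \OurProb{} with wildcard budget $k:=d$. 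Since $d=\Theta(\log\lengthT)$ and the text has length $\Theta(\lengthT d)$, this $k$ equals $\Theta(\log(\text{text length}))$, i.e.\ exactly the hard regime targeted by the lower bound.

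The heart of the argument is a short case analysis: position $h$ of $E_P(u)$ matches position $h$ of $E_T(v)$ in every case except $u_h=v_h=1$, where the symbols are $1$ and $0$ and neither is a wildcard; hence $E_P(u)$ matches $E_T(v)$ as equal-length strings if and only if $\langle u,v\rangle=0$. I would then show the separators pin the alignment: the occurrences of $\#$ in $\textstr$ are exactly at positions $1+j(d+1)$, and $\pattern$ carries $\#$ at its two ends at offset $d+1$, so any occurrence of $\pattern$ in $\textstr$ must align its two $\#$'s with two \emph{consecutive} $\#$'s of $\textstr$, i.e.\ $E_P(u)$ lines up with a full block $E_T(V_j)$. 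Combining, $\pattern$ occurs in $\textstr$ iff $\langle u,V_j\rangle=0$ for some $j$; and since every vector is nonzero, $\langle u,V_j\rangle=0$ forces $V_j\neq u$, so a matching block exists precisely when $A$ contains two \emph{distinct} orthogonal vectors.

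Finally, the algorithm: preprocess $(\textstr,\pattern)$ with $\pattern=\#\,E_P(V_1)\,\#$ in $h(\BigONotation(\lengthT d))$ time; then for $i=1,\dots,\lengthT$, reconfigure the $d$ interior symbols of the pattern from $E_P(V_{i-1})$ to $E_P(V_i)$ using at most $d$ single-symbol substitutions, and read off the current matching answer; report that $A$ has an orthogonal pair iff some such answer is positive. Each of the $\BigONotation(\lengthT d)$ updates and each of the $\lengthT$ queries costs $\BigONotation(g(\BigONotation(\lengthT d)))$, which together with the $\BigONotation(\lengthT d)$ preprocessing scan gives total time $\BigONotation\!\big(h(\lengthT d)+\lengthT d\cdot g(\lengthT d)\big)$, as claimed. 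The main obstacle is engineering the gadget so that ``not orthogonal at coordinate $h$'' is the \emph{unique} cause of a mismatch while the text stays wildcard-free (so alignment is forced purely by the separators), and so that the nonzero normalization makes distinctness free rather than requiring a costly per-round deletion and reinsertion of block $V_i$ from the text; the remaining care is in the consecutive-separator alignment claim and in checking that every intermediate pattern during the $d$ substitutions still has at most $k=d$ wildcards.
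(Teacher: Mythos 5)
Your proposal is correct and follows essentially the same reduction as the paper: concatenate delimiter-separated vector blocks into the text, encode each $V_i$ as a pattern whose $0$-coordinates become wildcards so that a block matches iff the corresponding pair is orthogonal, and sweep over all $\lengthT$ patterns using $d$ substitutions each, for total time $\BigONotation(h(d\lengthT)+d\lengthT\, g(d\lengthT))$. Your encoding is a cosmetic relabeling of the paper's (complemented text with $1$ as the conflict symbol, versus the paper's uncomplemented text with $0$), and you are in fact somewhat more careful than the paper in explicitly arguing that the separators force block alignment and that the all-zeros/self-match case does not break the distinctness requirement of \OV.
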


\begin{proof}
	Let \( A = \{V_1, V_2, \ldots, V_{\lengthT}\} \), where each \( V_i \) is a \( d \)-dimensional binary vector.  
	For each vector \( V_i \), construct a modified vector \( V'_i \) by replacing every \( 1 \) in \( V_i \) with \( 0 \), and every \( 0 \) in \( V_i \) with a wildcard symbol.	
	Define the pattern and text as
	$$
	\pattern = \# V'_1 \# , 
	\quad 
	\textstr = \# V_1 \# V_2 \# \cdots \# V_{\lengthT} \#,
	$$
	where \(\#\) is a delimiter not in \(\{0,1,\wildcard\}\).  
	By construction, \(\pattern\) has length \( \lengthP = d+2 \) and \(\textstr\) has length \( \BigONotation(\lengthT \cdot d) \).
	We claim that \(\pattern\) matches a substring of \(\textstr\) if and only if the \OV\ instance contains an orthogonal pair.
	Indeed, suppose \(\pattern = \# V'_i \#\) matches \(\# V_j \#\) in \(\textstr\).  
	Then, for each coordinate \(k\):
	\begin{itemize}
		\item If \( V_{i,k} = 1 \), then \( V'_{i,k} = 0 \). Since \(0\) can only match \(0\), we must have \(V_{j,k} = 0\).
		\item If \( V_{i,k} = 0 \), then \( V'_{i,k} = \wildcard \), which matches both \(0\) and \(1\).
	\end{itemize}
	Thus, there is no coordinate where both \( V_{i,k} = 1 \) and \( V_{j,k} = 1 \), which means \( V_i \) and \( V_j \) are orthogonal.  
	The converse follows by reversing this reasoning.
	
	To check all pairs, we iteratively transform the pattern into each \(\# V'_i \#\) for \( i = 1, \ldots, \lengthT \).  
	Each transformation requires at most \(d\) symbol changes, followed by a query to check if the pattern matches a substring of \(\textstr\).  
	Over all \( \lengthT \) vectors, this yields \( \BigONotation(d \cdot \lengthT) \) updates and queries.  
	Each query costs \( g(d \cdot \lengthT) \), so the total runtime is
	$
	\BigONotation(d \cdot \lengthT \cdot g(d \cdot \lengthT)).
	$
\end{proof}

The reduction allows us to transfer lower bounds from \OV\ to \OurProb.

\begin{theorem}
	\label{thm:lower-bound}
	Assuming $\mathsf{SETH}$ holds, the fully dynamic wildcard pattern matching problem with 
	\(k = \Omega(\log \lengthT)\) wildcards cannot be solved with preprocessing time 
	\(\BigONotation(\lengthT^{2-\delta})\) and query time \(\BigONotation(\lengthT^{1-\varepsilon})\) 
	for any constants \(\delta, \varepsilon > 0\).
\end{theorem}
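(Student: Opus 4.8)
The plan is to combine the reduction of Lemma~\ref{lemma:ov} with Williams' characterization of \OV\ in Theorem~\ref{thm:ov}. Suppose, for contradiction, that for some constants $\delta,\varepsilon>0$ there is an algorithm $\mathcal{A}$ solving \OurProb\ — even in the substitution-only model, which only strengthens the lower bound — with preprocessing time $h(\lengthT)=\BigONotation(\lengthT^{2-\delta})$ and per-operation (update or query) time $g(\lengthT)=\BigONotation(\lengthT^{1-\varepsilon})$, valid on all instances in which the number of wildcards is $\Omega(\log \lengthT)$. Fix an arbitrary constant $c$ and let $I$ be an \OV\ instance with $N$ vectors in dimension $d_0=c\log N$. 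First I would pad each vector with $d_0$ extra all-zero coordinates: orthogonality is unaffected, and we obtain an \OV\ instance of the same size $N$ in dimension $d=2d_0=2c\log N$ in which every vector has at least $d_0$ zeros.

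Next, apply Lemma~\ref{lemma:ov} to this padded instance. It produces a pattern $\pattern=\#V'_1\#$ of length $\lengthP=d+2$ and a text $\textstr=\#V_1\#\cdots\#V_N\#$ of length $\lengthT=\BigONotation(Nd)=\Theta(cN\log N)$, so in particular $\log\lengthT=\Theta(\log N)$. Because of the padding, every transformed vector $V'_i$ contains at least $d_0$ wildcards, and the text contains none; hence throughout the reduction the total number of wildcards $k$ satisfies $\Omega(\log\lengthT)\le k\le \BigONotation(\log\lengthT)$, so the whole process stays inside the regime $k=\Omega(\log\lengthT)$ on which $\mathcal{A}$ is assumed to work, and every update performed is a single-symbol substitution (the pattern length $d+2$ never changes). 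By Lemma~\ref{lemma:ov}, running $\mathcal{A}$ over the $\BigONotation(dN)$ substitutions and $N$ queries of the reduction solves $I$ in time
$$
\BigONotation\!\big(h(dN)\;+\;dN\cdot g(dN)\big)\;=\;\BigONotation\!\big(\lengthT^{2-\delta}+\lengthT^{2-\varepsilon}\big)\;=\;\BigONotation\!\big(\lengthT^{2-\delta'}\big),
\qquad \delta'=\min(\delta,\varepsilon)>0,
$$
using $dN=\Theta(\lengthT)$ together with the assumed $\BigONotation$-bounds on $h$ and $g$ (the cost $\BigONotation(\lengthT)$ of building the initial strings is absorbed).

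Then I would push this bound back to $N$. Since $\lengthT=\Theta(cN\log N)$ and $(\log N)^{2}=N^{o(1)}$, for each fixed $c$ the running time is $\BigONotation\!\big(c^{2}N^{2-\delta'}(\log N)^{2}\big)=\BigONotation\!\big(N^{\,2-\delta'/2}\big)$ for all sufficiently large $N$. Set $\varepsilon^{\ast}=\delta'/2>0$, a constant independent of $c$. We have shown that for every constant $c$, \OV\ on $N$ vectors of dimension $d=c\log N$ is solvable in $\BigONotation(N^{2-\varepsilon^{\ast}})\subseteq 2^{\LittleONotation(d)}N^{2-\varepsilon^{\ast}}$ time; by Theorem~\ref{thm:ov} this refutes $\mathsf{SETH}$, contradicting our assumption and proving the theorem.

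The hard part is not any single estimate but the joint bookkeeping of the parameter regime: one must keep $k=\Omega(\log\lengthT)$ and $k=\BigONotation(\log\lengthT)$ simultaneously (the zero-padding does exactly this without altering orthogonality), keep $\lengthP=d+2$, $\lengthT=\Theta(Nd)$ and $d=\Theta(\log N)$ mutually consistent so that the polylogarithmic slack between $\lengthT$ and $N$ disappears into the exponent, and — crucially — choose $\varepsilon^{\ast}$ uniformly in $c$ so that the "for all constants $c$" quantifier of Theorem~\ref{thm:ov} is met. Once these are arranged, the remainder is a routine substitution of the assumed time bounds into Lemma~\ref{lemma:ov}.
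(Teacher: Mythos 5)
Your proof is correct and follows essentially the same route as the paper: apply the reduction of Lemma~\ref{lemma:ov}, substitute the assumed bounds on $h$ and $g$, and absorb the $\mathrm{poly}(d)=\mathrm{poly}(\log N)$ factors into $2^{\LittleONotation(d)}$ to contradict Theorem~\ref{thm:ov}. Your zero-padding step, which guarantees the constructed instances genuinely contain $\Omega(\log \lengthT)$ wildcards, is a small refinement the paper's proof leaves implicit, but it does not change the argument.
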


\begin{proof}
	Suppose, for the sake of contradiction, that \OurProb\ admits an algorithm with 
	query time \( g(\lengthT) = \BigONotation(\lengthT^{1-\varepsilon}) \) and preprocessing time 
	\( h(\lengthT) = \BigONotation(\lengthT^{2-\delta}) \), for some constants \( \varepsilon,\delta > 0 \).  
	
	By Lemma~\ref{lemma:ov}, an instance of the \OV\ problem with \(\lengthT\) vectors in dimension \(d\) can be reduced to an instance of \OurProb\ of text length \(\BigONotation(d \cdot \lengthT)\) and pattern length \(d+2\).  
	The reduction implies that \OV\ can then be solved in
	$
	\BigONotation\!\left(h(d \cdot \lengthT) + d \cdot \lengthT \cdot g(d \cdot \lengthT)\right).
	$
	Substituting the assumed bounds for \(h(\cdot)\) and \(g(\cdot)\), we obtain
	$$
	\BigONotation\!\left((d \cdot \lengthT)^{2-\delta} + d \cdot \lengthT \cdot (d \cdot \lengthT)^{1-\varepsilon}\right)
	= \BigONotation\!\left(d^{2-\delta}\cdot \lengthT^{2-\delta} + d^{2-\varepsilon}\cdot \lengthT^{2-\varepsilon}\right).
	$$
	
	Now set \(d = c \log \lengthT\) for a fixed constant \(c > 0\).  
	Since \(d = \BigONotation(\log \lengthT)\), both \(d^{2-\delta}\) and \(d^{2-\varepsilon}\) grow subexponentially in \(d\), i.e.,
	$$
	d^{2-\delta} = 2^{\LittleONotation(d)} 
	\quad\text{and}\quad 
	d^{2-\varepsilon} = 2^{\LittleONotation(d)}.
	$$
	Thus the total running time becomes
	$$
	\BigONotation\!\left(2^{\LittleONotation(d)} \cdot \lengthT^{2-\delta} + 2^{\LittleONotation(d)} \cdot \lengthT^{2-\varepsilon}\right)
	= \BigONotation\!\left(2^{\LittleONotation(d)} \cdot \lengthT^{2-\min\{\delta,\varepsilon\}}\right).
	$$
	
	This runtime contradicts Theorem~\ref{thm:ov}, which asserts that such an improvement for \OV\ would refute $\mathsf{SETH}$.  
	Therefore, under $\mathsf{SETH}$, no such algorithm for \OurProb\ can exist.
\end{proof}

	\bibliographystyle{apalike}
	\bibliography{draft}
	
	\appendix
	
\end{document}